\setlist{leftmargin=5mm}
\tikzstyle{automaton}=[shorten >=1pt, >={Stealth[bend,round]}, initial text=]
\tikzstyle{accepting}=[accepting below, accepting by double]
\tikzset{
double distance = 1pt,
node distance = 2cm,
every state/.append style={
    minimum size = 0.4cm,
    },
}
\pgfplotsset{width=9cm,compat=1.18}
\newtheorem{thm}{Theorem}
\newtheorem*{teo*}{Theorem}\newtheorem{prop}{Proposition}
\newtheorem{lem}[prop]{Lemma}
\newtheorem{define}{Definition}
\newtheorem{ej}{Example}
\newtheorem*{ej*}{Example}
\newtheorem*{rem}{Comment}
\newtheorem*{note}{Notation}
\newcommand{\HLink}[2]{\hyperref[#2]{#1~\ref*{#2}}}
\providecommand{\thmlabel}[1]{}
\renewcommand{\thmlabel}[1]{\label{thm:#1}}
\newcommand{\thmref}[1]{\HLink{Theorem}{thm:#1}}
\providecommand{\proplabel}[1]{}
\renewcommand{\proplabel}[1]{\label{prop:#1}}
\newcommand{\propref}[1]{\HLink{Proposition}{prop:#1}}
\providecommand{\lemlabel}[1]{}
\renewcommand{\lemlabel}[1]{\label{lem:#1}}
\newcommand{\lemref}[1]{\HLink{Lemma}{lem:#1}}
\providecommand{\deflabel}[1]{}
\renewcommand{\deflabel}[1]{\label{def:#1}}
\newcommand{\defref}[1]{\HLink {Definition}{def:#1}}
\providecommand{\ejlabel}[1]{}
\renewcommand{\ejlabel}[1]{\label{ex:#1}}
\newcommand{\ejref}[1]{\HLink{Example}{ex:#1}}
\providecommand{\corlabel}[1]{}
\renewcommand{\corlabel}[1]{\label{cor:#1}}
\newcommand{\bigo}[0]{\mathcal{O}}
\newcommand{\diples}[1]{\left< #1 \right>}
\newcommand{\s}{\stackrel{\sim}{s}}
\begin{document}

\title{Automata for the commutative closure of regular sets}
\author{Verónica Becher \and Simón Lew Deveali \and Ignacio Mollo}
\maketitle
\begin{abstract}
 Consider  \( A^* \), the free monoid generated by the finite alphabet $A$ with the concatenation operation. Two words have the same commutative image when one is a permutation of the symbols of the other. The commutative closure of a set \( L \subseteq A^* \) is the set \( \EuScript{C}(L) \subseteq A^* \) of words whose commutative image coincides with that of some word in \( L \). We provide an algorithm that, given a regular set \( L \), produces a finite state automaton that accepts the commutative closure \( \EuScript{C}(L) \), provided that this closure set is regular. The problem of deciding whether \( \EuScript{C}(L) \) is regular was solved by Ginsburg and Spanier in 1966 using the decidability of Presburger sentences, and by Gohon in 1985 via  formal power series.
The problem of constructing an automaton that accepts \( \EuScript{C}(L) \) has already been studied in the literature. We give a simpler algorithm using an algebraic approach.
\end{abstract}


\section{Primary definitions and Statement of the main result}

An alphabet is a finite, non-empty set of symbols denoted by  $A$. We call the elements of $A$ letters, and the finite sequences of letters words.
We write 
$A^*$ for the set of all words over the alphabet 
$A$.
A language of $A^*$ is any set of words written with letters of $A$. 

A finite state automaton $\mathcal{A}$ is specified by a tuple  $\diples{ Q, A, E, I, T}$ 
where  $Q$ is the set of states,  $A$ is the alphabet,
$E \subseteq Q \times A \times Q$ is the transition set,
$I$ is the set of initial states and $T$ is the set of final states.
A finite state automaton $\mathcal{A}$ = $\diples{ Q, A, E, I, T}$ is deterministic if there is exactly one initial state, $|I|=1$
and for all $p \in Q$ and for all $a\in A$ there exists at most one $q\in Q$ such that $(p,a,q)$ is in $E$.

As usual, we say that a word in $A^*$ is accepted by $\mathcal{A}$ if it is the label of a computation in $\mathcal{A}$ starting from an initial state and ending in a final state.
The language accepted by the finite state automaton $\mathcal{A}$, denoted $L(\mathcal{A})$, is the set of all words accepted by $\mathcal{A}$.
A regular expression over the alphabet $A$ is a formula obtained inductively from the elements of $A$ and the symbols in 
$\{\emptyset, \cup,\lambda,\cdot,*,(, )\}$ 
as follows:
$\emptyset$, $\lambda$, and any letter in $A$ are regular expressions;
If $E$ and $F$ are regular expressions, then $(E \cup F)$, $(E \cdot F)$, and $(E^*)$ are also regular expressions.
We use the standard definition for the language denoted by $E$, see~\cite{Sakarovitch}.
For ease of reading, we simply write $E$ to refer to the language denoted by $E$.

A language is regular if there exists a finite stateautomaton accepting it.
Equivalently, if there exists a regular expression that denotes it.

\begin{note}
We write
$|w|_{a}$ to denote the number of occurrences of the letter $a$ in the word~$w$.
For example $|010|_0=2$.
\end{note}

The commutative image of a word is the number of occurrences of each alphabet letter in the word. It is  known as the Parikh morphism.
Formally, given an alphabet $A$ of $k$ letters, $A =\{a_{1},a_{2},...,a_{k}\}$, the commutative image $\varphi:A^*\to \mathbb {N} ^{k}$ of a word $w\in A^*$ is defined as
\[\varphi(w)=(|w|_{a_{1}},|w|_{a_{2}},\ldots ,|w|_{a_{k}})
\]
The commutative image of a language $L\subseteq A^*$ is $\varphi(L)\subseteq \mathbb {N}^k$,
$\varphi(L)=\{\varphi(w): w\in L\}.$
The commutative closure of a language is defined as follows.
\begin{define}[Commutative Closure]
Let $A$ be an alphabet of $k$ letters and let $\varphi$ be the commutative image of the words over $A$.
If $L\subseteq A^*$,
the commutative closure of $L$, ${\EuScript C}(L)\subseteq A^*$,
\begin{align*}
{\EuScript C}(L) &= \{w\in A^*\ :\ \varphi(w) \in \varphi(L)\}
= \varphi^{-1}(\varphi(L)).
\end{align*}
\end{define}


\begin{rem}
The commutative closure of a regular language is not always regular.
For example, ${\EuScript C}({(ab)^*})=\{w\in A^* \ :\ |w|_a=|w|_b\}$.
\end{rem}

Whether the commutative closure of a language is regular was addressed by Ginsburg and Spanier in 1966~\cite{GINSBURG} and by Gohon in 1985~\cite{GOHON}. The first proof is indirect and relies on the decidability of Presburger sentences. The second provides a simpler algorithm based on formal power series and a result from Eilenberg and Schützenberger~\cite{Eilenberg}. 
We describe this algorithm in the next section.
\begin{prop}[\cite{GINSBURG,GOHON}]
It is decidable to determine whether the commutative closure of a regular language of $A^*$ is regular.
\end{prop}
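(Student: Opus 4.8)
The plan is to reduce the question to the \emph{recognizability} of the subset $\varphi(L)\subseteq\mathbb{N}^k$ inside the free commutative monoid $\mathbb{N}^k$, and then to decide that property using the decidability of Presburger arithmetic. Recall that a subset of a monoid is recognizable when it is saturated by some finite-index congruence, equivalently when it is the preimage of a subset under a morphism onto a finite monoid. The strategy proceeds in three stages: an algebraic equivalence, an effective description of the set $\varphi(L)$ and of the recognizable sets, and finally a bounded search settled by Presburger decidability.

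First I would prove the equivalence: $\EuScript{C}(L)$ is regular if and only if $\varphi(L)$ is recognizable in $\mathbb{N}^k$. Since $\varphi\colon A^*\to\mathbb{N}^k$ is a surjective monoid morphism, the backward direction is immediate: if $\varphi(L)$ is recognized by a morphism onto a finite monoid, then $\EuScript{C}(L)=\varphi^{-1}(\varphi(L))$ is recognized by the composite morphism, hence regular. For the forward direction the crucial observation is that $\EuScript{C}(L)$ is closed under commutation, so any two words with the same commutative image are syntactically equivalent in $\EuScript{C}(L)$; concretely, if $\varphi(u)=\varphi(v)$ then $\varphi(xuy)=\varphi(xvy)$ for all contexts $x,y$, and membership in $\EuScript{C}(L)$ depends only on the commutative image, so $u$ and $v$ lie in the same syntactic class. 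Thus the kernel of $\varphi$ is contained in the syntactic congruence of $\EuScript{C}(L)$, which has finite index because $\EuScript{C}(L)$ is regular; this induces a finite-index congruence on $\mathbb{N}^k$ that saturates $\varphi(L)$, giving recognizability.

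Next I would make $\varphi(L)$ explicit and describe the recognizable sets. By Parikh's theorem $\varphi(L)$ is semilinear, and a semilinear representation is computable from an automaton for $L$. By Mezei's characterization of the recognizable subsets of a direct product of monoids (see~\cite{Eilenberg}), the recognizable subsets of $\mathbb{N}^k=\mathbb{N}\times\cdots\times\mathbb{N}$ are exactly the finite unions of products $B_1\times\cdots\times B_k$ in which each $B_i\subseteq\mathbb{N}$ is ultimately periodic. So the problem becomes deciding whether a given semilinear set is such a finite union of products of ultimately periodic sets.

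Finally, to decide recognizability I would search for the recognizing congruence among the product congruences determined by a threshold $t$ and a period $p$, declaring two integers equivalent when they coincide or when both are at least $t$ and congruent modulo $p$. For fixed $t$ and $p$, the assertion that $\varphi(L)$ is saturated by the induced product congruence on $\mathbb{N}^k$ is a Presburger sentence, because equality of semilinear sets is decidable. The main obstacle is that recognizability a priori asserts the existence of such a congruence among infinitely many, and a variable modulus $p$ cannot be quantified inside Presburger arithmetic; hence I must establish a computable bound on $t$ and $p$, showing that if $\varphi(L)$ is recognizable at all then it is already saturated by a product congruence whose parameters are bounded by quantities read off the semilinear representation (the constant vectors and the nonzero coordinates of the period vectors). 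Granting this effective bound, one enumerates the finitely many candidate pairs $(t,p)$ and tests saturation for each by Presburger decidability; $\EuScript{C}(L)$ is regular precisely when some test succeeds. Proving that bound --- an effective form of Mezei's characterization for semilinear sets --- is where I expect the real difficulty to lie.
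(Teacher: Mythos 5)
Your first three stages are sound: the equivalence between regularity of $\EuScript{C}(L)$ and recognizability of $\varphi(L)$ (which, in this paper, is literally the \emph{definition} of recognizable, so the argument via syntactic congruences is correct but more than is needed here), the computability of a semilinear representation of $\varphi(L)$, and Mezei's characterization are all correct. The genuine gap is your final stage, and you name it yourself: your search over pairs $(t,p)$ is finite only if one can compute, from the semilinear representation, a bound such that recognizability already implies saturation by some product congruence $\sim_{t,p}$ with parameters below that bound. You do not prove this bound; you defer it as ``where the real difficulty lies.'' But that bound essentially \emph{is} the decidability statement --- everything before it is standard. As written, your scheme only semi-decides recognizability (enumerate $(t,p)$ and test saturation), and it gives no procedure at all for certifying non-recognizability, so the proposition is not proved.

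The obstacle that forces you into the bounding detour --- that ``a variable modulus $p$ cannot be quantified inside Presburger arithmetic'' --- is avoidable, and avoiding it is how the Presburger route is actually closed. Since $\mathbb{N}^k$ is generated by $\mathbf{e}_1,\dots,\mathbf{e}_k$, a subset $S\subseteq\mathbb{N}^k$ is recognizable if and only if for each coordinate $j$ there exist $t\geq 0$ and $p\geq 1$ such that for all $z\in\mathbb{N}^k$ one has $z+t\,\mathbf{e}_j\in S \iff z+(t+p)\,\mathbf{e}_j\in S$. (If such $t,p$ exist for every $j$, the product congruence built from these thresholds and periods has finite index and saturates $S$, exactly as in your fixed-parameter argument; conversely, if $S$ is recognized by a morphism $\psi$ onto a finite monoid, pigeonhole on the sequence $\psi(0),\psi(\mathbf{e}_j),\psi(2\mathbf{e}_j),\dots$ yields $t,p$.) In this characterization the period $p$ occurs only \emph{additively}, never as a modulus, so for each $j$ the statement ``$\exists t\,\exists p>0\,\forall z\,\bigl(z+t\,\mathbf{e}_j\in S \iff z+(t+p)\,\mathbf{e}_j\in S\bigr)$'' is a single Presburger sentence about the Presburger-definable set $S$, and deciding $k$ such sentences decides recognizability with no bounds whatsoever. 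Note also that this whole route is different from the one the paper relies on: the paper does not reprove this proposition but cites it, and the decision procedure it describes (Gohon's, Section 2) instead computes a semi-simple expression for $\varphi(L)$, forms the reduced fraction $P/Q=\underline{S}$ of its characteristic series, and tests whether $Q$ is $1$ or a product of factors of the form $(1-x_j^{p_j})$.
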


Once we know that the
commutative closure of a regular language in $A^*$ is regular, 
how can we construct the finite state automaton that recognises it?
The following theorem answers this question and is our main result.

\begin{thm}
\thmlabel{main}
Given a regular expression of a language $L$ in $A^*$ whose commutative closure ${\EuScript C}(L)$ is regular, our algorithm constructs a finite state deterministic automaton for its commutative closure.
\end{thm}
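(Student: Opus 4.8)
The plan is to exploit the fact that $\EuScript{C}(L)=\varphi^{-1}(\varphi(L))$ depends only on the commutative image $\varphi(L)\subseteq\mathbb{N}^k$, and that regularity of $\EuScript{C}(L)$ is equivalent to $\varphi(L)$ being a recognisable subset of the commutative monoid $\mathbb{N}^k$. Indeed, if $\varphi(L)=\mu^{-1}(P)$ for some morphism $\mu\colon\mathbb{N}^k\to M$ onto a finite commutative monoid $M$ and some $P\subseteq M$, then $\EuScript{C}(L)=(\mu\circ\varphi)^{-1}(P)$ is recognised by the monoid morphism $\mu\circ\varphi\colon A^*\to M$; conversely, since $\EuScript{C}(L)$ is closed under commuting letters, its syntactic morphism factors through $\varphi$ and yields such a $\mu$. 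So the first step I would take is to recast the problem as: produce, from a regular expression for $L$, a finite commutative monoid $M$, a morphism $\mu\colon\mathbb{N}^k\to M$, and a set $P\subseteq M$ with $\varphi(L)=\mu^{-1}(P)$.

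Once $\mu$ and $P$ are in hand, building the deterministic automaton is immediate, and this is the routine part. I would take as states the reachable elements of the submonoid $\langle\mu(e_1),\dots,\mu(e_k)\rangle\subseteq M$ generated by the images of the unit vectors $e_i$, declare the identity $1_M$ the initial state, send $(m,a_i)\mapsto m\cdot\mu(e_i)$, and take $P$ as the set of final states. Reading a word $w$ leaves the automaton in state $\mu(\varphi(w))$, which lies in $P$ exactly when $\varphi(w)\in\varphi(L)$, that is, when $w\in\EuScript{C}(L)$; finiteness and determinism are clear. Equivalently, since each generator $g_i=\mu(e_i)$ has a finite index $r_i$ and period $d_i$ in its cyclic submonoid (so that $g_i^{\,r_i+d_i}=g_i^{\,r_i}$), this automaton is the product counter that tracks each letter count exactly up to $r_i$ and then modulo $d_i$; the thresholds and periods come directly from the cyclic structure of the generators.

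The substance of the proof, and the step I expect to be the main obstacle, is the effective construction of $\mu$ and $P$ from the regular expression. I would first compute a finite automaton for $L$ and, from it, a semilinear (Parikh) description of $\varphi(L)$ as a finite union of linear sets. The difficulty is that the period vectors of a semilinear set are in general ``slanted'' and such a set need not be recognisable at all; the hypothesis that $\EuScript{C}(L)$ is regular is precisely what forces $\varphi(L)$ into the rigid shape guaranteed by Mezei's theorem, namely a finite union of products $X_1\times\cdots\times X_k$ of ultimately periodic subsets $X_i\subseteq\mathbb{N}$. The hard part will be to recover this product-of-ultimately-periodic structure from the raw semilinear data, that is, to extract coordinate-wise thresholds and periods that genuinely saturate $\varphi(L)$, and to bound them so the procedure terminates. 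I would handle this by normalising the linear sets coordinate by coordinate, taking the relevant least common multiple of the periods as the common period $d_i$ and the offsets as thresholds $r_i$, and then using the recognisability hypothesis to certify that the induced collapse of $\mathbb{N}^k$ indeed respects membership in $\varphi(L)$, which is exactly what produces the desired $\mu$ and $P$.
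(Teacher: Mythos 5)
Your overall architecture is sound and is genuinely different from the paper's: you reduce to finding a morphism $\mu\colon\mathbb{N}^k\to M$ onto a finite commutative monoid recognising $\varphi(L)$, whereas the paper never forms a quotient monoid. Instead it pushes the semilinear description of $\varphi(L)$ through Gohon's characteristic-series machinery: it computes $\underline{\varphi(L)}=P/Q$, cancels the multi-variable factors of $Q$ (possible exactly when the closure is regular), reads the coordinate periods $p_j$ off the surviving factors $(1-x_j^{p_j})$, reads candidate offsets off the monomials of $P$ (which may carry negative signs), and reassembles the set by an inclusion--exclusion over membership patterns in the resulting atomic ``resimple'' sets; the final automaton is a Boolean combination of shuffle products of cyclic counters. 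That endgame coincides with your transition-monoid automaton, so your ``routine part'' is fine and matches the paper's.

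The genuine gap is in the step you yourself flag as the main obstacle, and the specific recipes you give for it do not work as stated. First, ``the least common multiple of the periods'' is not defined when period vectors are slanted, and the workable reading (lcm of all non-zero $j$-th coordinates of all generators) needs a proof that it is in fact a period of the set: coordinate periods of fibers can arise from \emph{differences} of slanted generators rather than from any single coordinate --- e.g.\ the component $(1,1)+\bigl((1,1)\cup(1,3)\bigr)^\oplus$ has fibers in the second coordinate of period $2=3-1$, which is no generator's coordinate. Proving that for recognizable sets the coordinate-lcm nevertheless suffices is precisely the nontrivial structural content that the paper imports from Gohon (Lemma 4.3 and Theorem 4.6). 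Second, ``the offsets as thresholds'' is false: the component $\bigl((2,0)\cup(3,0)\bigr)^\oplus$ has offset $0$ but its fiber $\{0,2,3,4,\dots\}$ only stabilises past its Frobenius gap; worse, slanted generators make the per-fiber stabilisation point grow with the position of the fiber (in the paper's example $\mathbb{N}^2\setminus\{(0,0)\}$, the axis-parallel progression contributed by the generator $(1,0)$ to the fiber over $y=n$ only starts at $x=n+1$), so a uniform threshold exists only \emph{because} the set is recognizable, and your procedure must still compute one. Third, the final ``certification'' is doing real work: it is an equivalence of semilinear sets (decidable via Presburger, so it can be made effective, but you must say how), and if your guessed threshold is too small the certificate fails even for recognizable sets, so the algorithm as described can wrongly give up; you need either provable a priori bounds on period and threshold --- which is exactly what Gohon's polynomial cancellation delivers to the paper --- or a retry loop whose termination rests on the regularity hypothesis. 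Filling these three holes amounts to reconstructing the structure theory that the paper's proof gets from the characteristic series.
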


\begin{ej}[Regular commutative closure]
\ejlabel{even}
The commutative closure of the language denoted by $b(a^2\cup b^2)^*$ is regular. The language consists of words with an even number of occurrences of the letter $a$ and an odd number of occurrences of the letter $b$.
Our algorithm produces the finite state automaton in Figure~\ref{fig:automatapares}.

\begin{figure}[t]
\centering
\begin{tikzpicture}[automaton, auto, node distance = 2.5cm]

\node[state, initial] (00) []{};
\node[state] (01) [right of=00]{};
\node[state, accepting] (10) [below of=00]{};
\node[state] (11) [below of=01]{};

\path[->]
 (00) edge [bend left] node [above] { $a$ } (01)
 (01) edge [bend left] node [below] { $a$ } (00)
 (10) edge [bend left] node [above] { $a$ } (11)
 (11) edge [bend left] node [below] { $a$ } (10)
 (00) edge [bend left] node [right] { $b$ } (10)
 (10) edge [bend left] node [left] { $b$ } (00)
 (01) edge [bend left] node [right] { $b$ } (11)
 (11) edge [bend left] node [left] { $b$ } (01);
\end{tikzpicture}
\caption{Finite state automaton accepting ${\EuScript C}({b(a^2\cup b^2)}^*)$.}
\label{fig:automatapares}
\end{figure}
\end{ej}

The problem of constructing a finite state automaton that accepts the commutative closure of a regular language has already been studied by Hoffmann in~\cite{HOFFMANN}.
In his work, Hoffmann provides, in the specific case of group languages, an asymptotic upper bound for the number of states of the resulting automaton, expressed in terms of the number of states of the original automaton.
In \propref{complexity-state}, we present an upper bound for the general case of regular languages with a regular commutative closure. Specifically, we estimate the number of states in the constructed finite state automaton based on the length of the rational expression representing the commutative image of the given regular set.
As a result, the two bounds—Hoffmann's and the one presented here—are not directly comparable in the cases he addressed.

Besides, in Proposition~\ref{prop:complexity-time} we give an upper bound on the number of operations required for our construction, starting from the rational expression of the commutative image of the given regular set.

\section{Gohon's decision algorithm}
Gohon's algorithm~\cite{GOHON} decides whether the commutative closure of a regular language of $A^*$ is regular. Instead of starting from the language, it starts directly from the commutative image of the language. That is, this decision algorithm works directly on $\mathbb{N}^k$. 
When dealing with the commutative closure of sets and when transforming expressions denoting sets, an algebraic approach is both appropriate and useful.

A monoid is a set equipped with an associative binary operation and a neutral element for that operation.
The set $A^*$ is the free monoid over a finite alphabet $A$ with the concatenation as the monoid operation and the empty word $\lambda$  acting as the unit element.
The free commutative monoid generated by a set \( A \) is the quotient of \( A^* \) (the free monoid on \( A \)) by the congruence defined by the relations \( ab = ba \) for all letters \( a \) and \( b \) in \( A \). This commutative monoid is denoted by \( A^\oplus \), where the monoid operation is written as \( + \), and the neutral element is written as \( 0 \).
The monoid $A^\oplus$ is isomorphic to $\mathbb{N}^k$, whose elements are the $k$-tuples of natural numbers $\sigma = (s_1, s_2, \dots , s_k)$. It is also isomorphic to $a^*_1 \times a^*_2 \times \cdots \times a^*_k$ , where $a_1,\dots,a_k$ are the letters of $A$.

Regular expressions are generalised to the setting of ${\mathbb N}^k$
and referred to as rational expressions.
These are defined as one would expect. 

\begin{define}[Rational expression]
A rational expression over $\mathbb{N}^k$ is a formula obtained inductively from elements of $\mathbb{N}^k$ and the 
symbols in $\{\emptyset, \cup, +,\oplus,\left(,\right)\}$ as follows:
$\emptyset$ and every element of $\mathbb{N}^k$ are rational expressions;
If $E$ and $F$ are rational expressions, then $(E \cup F)$, $(E + F)$ and $(E^\oplus)$ are also rational expressions.
\end{define}

The set of   $\mathbb{N}^k$
  denoted by a rational expression 
  is defined similarly to  the set of $A^*$  denoted by a regular expression. 

The set denoted by a rational expression, contained in $\mathbb{N}^k$, is defined in an analogous manner as we defined the set contained in $A^*$ denoted by a regular expression. We simply write $E$ to refer to the set denoted by $E$ and use the standard precedence (first $\oplus$, then $ +$ and last $\cup$).

\begin{define}[Rational set]
A set $S$ of $\mathbb{N}^k$ is rational if
it is denoted by a rational expression over $\mathbb{N}^k$.
\end{define}

The star height of a rational expression is defined for any monoid, we give it just for~${\mathbb N}^k$.

\begin{define}[Star height of a rational expression]
The star height of a rational expression $E$, denoted $h(E)$, is the maximum number of stars nested in the expression $E$. It is defined inductively for rational expressions of $\mathbb{N}^k$:
\begin{align*}
&\text{If }E=\emptyset \text{ or }E \text{ denotes an element of }\mathbb{N}^k \text{ then }
&&h(E)=0.
\\
&\text{If }E=F \cup G\text{ or }E=F + G \text{ then } &&h(E)=\max(h(F),h(G)).
\\
&\text{If }E=F^\oplus \text{ then } && h(E)=1+h(F).
\end{align*}
\end{define}

\begin{ej}
\ejlabel{star}
The star height of ${\big((0,1)^\oplus+(1,0)\big)}^\oplus$ is 2.
\end{ej}

\begin{define}[Star height of a rational set]

The star height of a rational set $S$ of $\mathbb{N}^k$, written $h(S)$, is the minimum star height of the rational expressions of $\mathbb{N}^k$ that denote $S$.
\end{define}

Since the monoid $\mathbb{N}^k$ is commutative, every rational set can be denoted by an expression of star height at most $1$ (see~\cite[Exer. I.6.5]{Sakarovitch}).

\begin{prop}
The star height of a rational set $S\in \mathbb{N}^k$ is at most $1$.
\end{prop}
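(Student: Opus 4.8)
The plan is to show that every rational set $S \subseteq \mathbb{N}^k$ can be denoted by a rational expression of star height at most $1$; this immediately yields $h(S) \le 1$. I would proceed by structural induction on a rational expression $E$ denoting $S$, maintaining as the inductive invariant a normal form: every expression can be rewritten as a finite union of \emph{linear sets}, that is, sets of the form $c + P^\oplus$ where $c \in \mathbb{N}^k$ is a constant (a base point) and $P$ is a \emph{finite} subset of $\mathbb{N}^k$ (a finite set of periods), written out as $c + (p_1 \cup p_2 \cup \cdots \cup p_m)^\oplus$. Such a union of linear sets is exactly a \emph{semilinear set}, and crucially each linear set, and hence any finite union of them, has star height at most $1$, since the only star appears around the period part $P$ and $P$ itself is star-free (a finite union of constants). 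So the real content is to prove that the class of semilinear sets is closed under the three rational operations $\cup$, $+$, and $\oplus$, and that the base cases (the empty set and singletons) are semilinear.

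\medskip

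\noindent
\textbf{Base cases and the two easy operations.}
The base cases are immediate: $\emptyset$ is the empty union, and a singleton $\{v\}$ is the linear set $v + \emptyset^\oplus$. For union, closure is trivial, since a union of two finite unions of linear sets is again a finite union of linear sets. For the sum $+$, I would use the distributivity of $+$ over $\cup$ together with the identity
\[
(c + P^\oplus) + (c' + P'^\oplus) = (c + c') + (P \cup P')^\oplus,
\]
which holds precisely because $\mathbb{N}^k$ is commutative: the two period systems can be merged and the base points added. Thus the sum of two linear sets is a single linear set, and the sum of two finite unions of linear sets is again such a union.

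\medskip

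\noindent
\textbf{The star operation is the main obstacle.}
The hard part will be showing that $(S)^\oplus$ is semilinear whenever $S$ is. If $S = \bigcup_{i=1}^{n} (c_i + P_i^\oplus)$, then $S^\oplus$ consists of all finite sums of elements of $S$, and a priori the base points $c_i$ can be used an unbounded number of times, so one cannot simply absorb them into a fixed period set. The key observation is that, because the monoid is commutative, a general element of $S^\oplus$ is a nonnegative integer combination $\sum_i \lambda_i c_i$ of the base points plus an element of $(\bigcup_i P_i)^\oplus$. I would handle the unbounded multiplicities of the $c_i$ by splitting each coefficient $\lambda_i$ into a bounded \emph{remainder} part and a \emph{bulk} part: for each subset $J \subseteq \{1,\dots,n\}$ of indices used at least once, one records a bounded choice of how many copies of each $c_i$ ($i \in J$) are taken (finitely many cases, contributing to the base point), while any further copies are thrown into the period system. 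Concretely this expresses $S^\oplus$ as a finite union, over the finitely many relevant bounded choices, of linear sets whose periods are $\{c_i : i \in J\} \cup \bigcup_{i} P_i$. Verifying that this finite union is exactly $S^\oplus$ is the delicate bookkeeping step, and it is here that commutativity is indispensable.

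\medskip

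\noindent
\textbf{Conclusion.}
Once the three closure properties and the base cases are established, structural induction shows that the set denoted by any rational expression over $\mathbb{N}^k$ is semilinear, hence expressible as a finite union of linear sets. Reading off the star height of such an expression, every star occurs only around a finite union of singletons, so no two stars are nested and the star height is at most $1$. Since $S$ is an arbitrary rational set, this gives $h(S) \le 1$ and completes the proof. I would note that the whole argument is essentially the classical identification of rational subsets of a commutative monoid with semilinear sets, and that the reference \cite[Exer.~I.6.5]{Sakarovitch} already cited in the text supplies the standard form of this result.
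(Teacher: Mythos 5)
Your overall strategy is the right one, and it is in fact the classical argument: the paper does not prove this proposition at all, but delegates it to \cite[Exer.~I.6.5]{Sakarovitch}, and what you outline---rational subsets of $\mathbb{N}^k$ are semilinear, and semilinear expressions have star height at most $1$---is precisely the content of that reference. Your base cases, the union case, and the sum case (via $(c+P^\oplus)+(c'+P'^{\oplus})=(c+c')+(P\cup P')^\oplus$, which is correct in a commutative monoid) are all fine.

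However, the star case, which you yourself identify as the crux, is wrong as stated. You claim that for $S=\bigcup_{i=1}^n (c_i+P_i^\oplus)$ the set $S^\oplus$ decomposes into linear sets whose periods are $\{c_i : i\in J\}\cup\bigcup_{i} P_i$, i.e.\ you pool \emph{all} the period sets regardless of which indices $J$ are actually used; your ``key observation'' likewise asserts that every element of $S^\oplus$ is $\sum_i\lambda_i c_i$ plus an element of $\bigl(\bigcup_i P_i\bigr)^\oplus$ and implicitly relies on the converse. The converse fails: a period from $P_i$ may only occur if at least one copy of $c_i$ is present. Concretely, in $\mathbb{N}^2$ take $S=\{(1,0)\}\cup\bigl((5,5)+(0,2)^\oplus\bigr)$, so $c_1=(1,0)$, $P_1=\emptyset$, $c_2=(5,5)$, $P_2=\{(0,2)\}$. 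For $J=\{1\}$ your linear set contains $(1,0)+(0,2)=(1,2)$, but $(1,2)\notin S^\oplus$: any element of $S^\oplus$ using a summand from the second linear set has first coordinate at least $5$, and any element built only from copies of $(1,0)$ has second coordinate $0$. So the union you construct strictly overshoots $S^\oplus$, and the ``delicate bookkeeping step'' you defer cannot be completed as described. The repair is standard and in fact eliminates the bookkeeping: in a commutative monoid $(X\cup Y)^\oplus=X^\oplus+Y^\oplus$ and $(c+P^\oplus)^\oplus=\{0\}\cup\bigl(c+(\{c\}\cup P)^\oplus\bigr)$, hence
\begin{equation*}
S^\oplus=\sum_{i=1}^n\Bigl(\{0\}\cup\bigl(c_i+(\{c_i\}\cup P_i)^\oplus\bigr)\Bigr)
=\bigcup_{J\subseteq\{1,\dots,n\}}\Bigl(\sum_{i\in J}c_i+\bigl(\textstyle\bigcup_{i\in J}(\{c_i\}\cup P_i)\bigr)^{\oplus}\Bigr),
\end{equation*}
a finite union of linear sets whose periods are restricted to $i\in J$. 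Note also that your bounded-multiplicity splitting of the coefficients $\lambda_i$ is unnecessary: one copy of each $c_i$, $i\in J$, in the base point suffices once $c_i$ itself is admitted as a period.
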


We therefore define expressions where we restrict the star height.

\begin{define}[Linear expression and semi-linear expression] An expression $E$ denoting a rational set of $\mathbb{N}^k$ is
linear if $E = \gamma+B^\oplus$ where $\gamma \in \mathbb{N}^k$ and $B$ is a finite set of $\mathbb{N}^k$;
it is semi-linear if it is the finite union of linear expressions.
\end{define}
\begin{ej*}[Continuation of \ejref{star}]
${\big((0,1)^\oplus+(1,0)\big)}^\oplus$ is equivalent to the rational expression $(1,0)^\oplus ~\cup~ \Big((1,0)+\big((0,1)\cup(1,0)\big)^\oplus \Big)$ which is semi-linear and has star height $1$.
\end{ej*}

\begin{define}[Non-ambiguous rational operations]
In $\mathbb{N}^k$ we define the non-ambiguous rational operations, a specialization of the rational operations for which we  use the same symbols:

$S\cup T$ is non-ambiguous if $S\cap T = \emptyset$;

$S+T$ is non-ambiguous if 
$\forall s,s' \in S, \forall t,t' \in T$, $( s+t = s'+t' )$ implies $(s=s' \text{ and } t=t')$;

$S^\oplus = \bigcup\limits_{n \in \mathbb{N}}\underbrace{S+\cdots+S}\limits_{n~times}$ is non-ambiguous if each of the sums and unions are non-ambiguous.
\end{define}

 By commutativity of the monoid, if we consider $S=\{s_1,s_2,\ldots,s_l\}\subseteq \mathbb{N}^k$ then
 $$S^\oplus =s_1^\oplus + s_2^\oplus + \cdots + s_l^\oplus = \{n_1s_1 + n_2s_2 + \cdots + n_ls_l \ : \ n_i \in \mathbb{N} \}.$$
 So, $S^\oplus$ is non-ambiguous if and only if whenever 
 $$n_1s_1+\cdots+n_ls_l = m_1s_1+\cdots+m_ls_l$$
 we have $n_i = m_i$ for each $i$.

 \begin{define}[Free basis]

A finite set $B\subseteq \mathbb{N}^k$ is a free basis if
$B^\oplus$ is a non-ambiguous rational expression.
\end{define}

\begin{ej}
$B = \{(1,0),(3,1),(1,1)\}$ is not free because $2(1,0) + (1,1)=(3,1)$. On the other hand $B'=\{(1,0),(1,1)\}$ is a free basis.
\end{ej}

\begin{define}[Simple expression and semi-simple expression] An expression $E$ denoting a set of $\mathbb{N}^k$ is
simple if $E$ is linear ($E = \gamma+B^\oplus$) and $B$ is a free basis;
and it is semi-simple if it is the unambiguous union of simple expressions.
\end{define}

\begin{ej*}[Continuation of \ejref{star}]
The semi-linear rational expression \linebreak 
$(1,0)^\oplus ~\cup~ (1,0)+((0,1)\cup(1,0))^\oplus$ is not semi-simple.
Both $(1,0)^\oplus$ and $(1,0)+((0,1)\cup(1,0))^\oplus$ are simple because both bases are free. However, their union is ambiguous, because their intersection equals $(1,0)+(1,0)^\oplus$.
An equivalent semi-simple expression is
 $$
 (1,0)^\oplus ~\cup~ \Big((1,1)+\big((0,1)\cup(1,0)\big)^\oplus \Big).
 $$
 These expressions are  depicted in \autoref{fig:diagonal graph}.

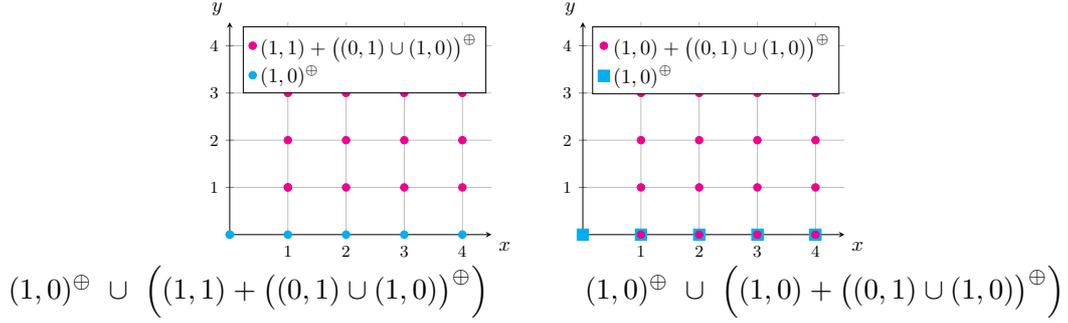
\begin{figure}[t]
\vspace*{-1cm}
 \centering

\usetikzlibrary{patterns}
\pgfplotsset{small}
\begin{tikzpicture}[scale=0.7]
\begin{axis}[
 axis x line=center,
 axis y line=center,
 xmin=0,
 ymin=0,
 xmax=4.5,
 ymax=4.5,
 xlabel={$x$},
 xlabel style={below right},
 ylabel={$y$},
 ylabel style={above left},
 legend columns=3,
 legend cell align=left,
 grid=major,
 transpose legend
 ]

 \addplot [only marks, magenta] coordinates {(1,1)};
 \addlegendentry{$(1,1)+\big((0,1)\cup(1,0)\big)^\oplus$}; \addplot [only marks, cyan] coordinates {(0,0)};
 \addlegendentry{$(1,0)^\oplus$};
 \foreach \i in {0,...,6} {
 \addplot [only marks, cyan] coordinates {(\i,0)};
 }

 \foreach \i in {1,...,6} {
 \foreach \j in {1,...,6}
 \addplot [only marks, magenta] coordinates {(\i,\j)};
 }
\end{axis}
\end{tikzpicture}
~~
\begin{tikzpicture}[scale=0.7]
\begin{axis}[
 axis x line=center,
 axis y line=center,
 xmin=0,
 ymin=0,
 xmax=4.5,
 ymax=4.5,
 xlabel={$x$},
 xlabel style={below right},
 ylabel={$y$},
 ylabel style={above left},
 legend columns=3,
 legend cell align=left,
 grid=major,
 transpose legend
 ]

 \addplot [only marks, mark=*, magenta] coordinates {(1,0)};
 \addlegendentry{$(1,0)+\big((0,1)\cup(1,0)\big)^\oplus$}; \addplot [only marks, mark=square*, mark size=3pt, cyan] coordinates {(0,0)};
 \addlegendentry{${(1,0)}^\oplus$};

 \foreach \i in {1,...,6} {
 \addplot [only marks, mark=square*, mark size=3pt, cyan] coordinates {(\i,0)};
 }

 \foreach \i in {1,...,6} {
 \foreach \j in {0,...,6}
 \addplot [only marks, magenta] coordinates {(\i,\j)};
 }

\end{axis}
\end{tikzpicture}
\begin{tabular}{cccc}
 $(1,0)^\oplus ~\cup~ \Big( (1,1)+\big((0,1)\cup(1,0)\big)^\oplus \Big)$&~&~&$(1,0)^\oplus ~\cup~ \Big((1,0)+\big((0,1)\cup(1,0)\big)^\oplus \Big)$
\end{tabular}
\caption{Semi-simple and semi-linear expressions denoting the same set}
\label{fig:diagonal graph}
\end{figure}
\end{ej*}

\begin{prop}[Eilenberg and  Schützenberger
{\cite[Theorem 4]{Eilenberg}}]
\proplabel{semisimple}
Any rational set of $\mathbb{N}^k$ can
be denoted by a semi-simple expression.
\end{prop}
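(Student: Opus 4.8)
The plan is to start from the fact, already recorded above, that every rational subset of $\mathbb{N}^k$ has star height at most $1$, hence is \emph{semilinear}: a finite union $S=\bigcup_{i=1}^{n}(\gamma_i+B_i^\oplus)$ of linear sets. It then suffices to prove two things: (a) every single linear set $\gamma+B^\oplus$ can be written as a semi-simple expression, so that we may assume each $B_i$ is a free basis; and (b) a finite union of simple sets can be rewritten as an \emph{unambiguous} union of simple sets. Combining (a) applied to every $B_i$ with (b) yields a semi-simple expression for $S$.

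For (a) I would induct on $|B|$. If the vectors of $B=\{b_1,\dots,b_l\}$ are linearly independent over $\mathbb{Q}$, then $B$ is already free: an equality $\sum_i n_i b_i=\sum_i m_i b_i$ gives $\sum_i (n_i-m_i)b_i=0$, and independence forces $n_i=m_i$ for all $i$; hence $\gamma+B^\oplus$ is simple and we are done. If $B$ is dependent, choose a nontrivial relation and, after clearing denominators and separating positive and negative parts, obtain nonempty disjoint $I,J\subseteq\{1,\dots,l\}$ with $\sum_{i\in I}\alpha_i b_i=\sum_{j\in J}\beta_j b_j$. This relation lets one bound the coefficient of a chosen generator: every $x\in B^\oplus$ admits a representation in which that coefficient stays below a fixed threshold, which partitions $\gamma+B^\oplus$ into finitely many translates of sets generated by fewer than $l$ vectors. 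Applying the inductive hypothesis to each piece gives the desired semi-simple expression.

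For (b), given $S=\bigcup_{i=1}^{n}S_i$ with each $S_i$ simple, I would disjointify telescopically, writing $S=S_1\sqcup(S_2\setminus S_1)\sqcup\cdots\sqcup\big(S_n\setminus(S_1\cup\cdots\cup S_{n-1})\big)$. Each difference is obtained by intersection and complementation from semilinear sets, so it is again semilinear by the classical closure of semilinear subsets of $\mathbb{N}^k$ under Boolean operations; feeding each difference back through step (a) makes it semi-simple, and the overall union is unambiguous by construction.

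I expect the main obstacle to be the decomposition lemma underlying step (a): turning a \emph{non-free} basis into finitely many free ones while keeping the pieces genuinely disjoint. Making the coefficient bound precise, verifying that the reduction terminates, and checking that it leaves truly simple, non-overlapping pieces is where the real work lies; it is essentially a rewriting argument on the lattice of relations among the generators. The Boolean closure invoked in (b) is standard but also non-trivial, and a fully self-contained treatment would either reprove it or, following Eilenberg and Schützenberger, build the unambiguous decomposition directly inside the induction of step (a), thereby avoiding a separate appeal to complementation.
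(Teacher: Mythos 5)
First, note that the paper does not prove this proposition at all: it is imported verbatim from Eilenberg and Sch\"utzenberger \cite[Theorem 4]{Eilenberg}, with a remark deferring effectiveness to \propref{desambiguar} and to Chistikov and Haase \cite{HASSE}. So your attempt must be judged on its own, and as written it has a genuine gap at its crux: the mechanism you propose for removing ambiguity does not terminate. In step (a), the relation $\sum_{i\in I}\alpha_i b_i=\sum_{j\in J}\beta_j b_j$ does give the covering $B^\oplus=\bigcup_{i\in I}\bigcup_{0\le c<\alpha_i}\bigl(c\,b_i+(B\setminus\{b_i\})^\oplus\bigr)$, but this is a \emph{covering}, not a partition --- the pieces overlap --- so calling it a partition hides exactly the difficulty the proposition is about. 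The induction then hands you a union of semi-simple expressions whose union is ambiguous, and making \emph{that} unambiguous is precisely step (b); meanwhile step (b) telescopes $S=S_1\sqcup(S_2\setminus S_1)\sqcup\cdots$, observes each difference is semilinear by Ginsburg--Spanier Boolean closure, and feeds it ``back through step (a)'' --- which again outputs a possibly ambiguous union of simple sets, requiring another telescoping, and so on. Neither (a) nor (b) can be completed without the other, and no induction measure is given under which this mutual recursion makes progress. This is not a presentational issue: the entire content of the theorem is that the regress can be avoided, and the known proofs (Eilenberg--Sch\"utzenberger's, or Ito's theorem that every semilinear set is a finite union of \emph{disjoint} simple linear sets) run a single, more delicate induction (on dimension and rank of the period sets) in which disjointness and freeness are maintained simultaneously at every stage, rather than restored afterwards by complementation.

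To your credit, your last paragraph essentially diagnoses this yourself (``build the unambiguous decomposition directly inside the induction of step (a)''), and the ingredients you identify --- freeness of $\mathbb{Q}$-linearly independent period sets, rewriting along a dependency relation to bound one coefficient, reduction to fewer generators --- are the right ones and do appear in the real proof. But acknowledging that the hard part must be done differently is not the same as doing it: as submitted, the argument establishes only that every rational set of $\mathbb{N}^k$ is a finite (possibly ambiguous) union of simple sets, which is strictly weaker than the proposition.
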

\begin{rem}
    For 
   the effectiveness of \propref{semisimple}, see \propref{desambiguar} and the results of Chistikov and Hasse in~\cite{HASSE}.
\end{rem}

Finally, we consider formal series on $k$ commutative variables $x_1,\dots,x_k$. In particular, we are interested in the characteristic series of sets of $\mathbb{N}^k$.

\begin{define}\deflabel{charseries}
For every rational subset $S$ of ${\mathbb N}^k$ we denote by $\underline{S}$ the characteristic  series of $S$, defined as follows:
For each element $\sigma\in {\mathbb N}^k$, if $\sigma=(s_1,\dots,s_k)\in S$ then   the coefficient of $x_1^{s_1}\cdots x_k^{s_k}$ in $\underline{S}$ is $1$, 
and it is $0$ otherwise.
\end{define}

\begin{note}
Given a formal series $T$ over $k$ commutative variables $x_1,\dots,x_k$ and an element $\sigma=(s_1,\dots,s_k)\in \mathbb{N}^k$ we write $T\left[(s_1,\dots,s_k)\right]$ to denote the coefficient of $x_1^{s_1}\cdots x_k^{s_k}$ in $T$.
\end{note}

\begin{rem}
    The characteristic series $\underline{S}$ of a rational subset $S\subseteq\mathbb{N}^k$ is unique and completely characterizes $S$.
\end{rem}

\begin{prop}[{\cite[Proposition 3.1]{GOHON}}]
\proplabel{serie}
 Let $S$ be a rational set of $\mathbb{N}^k$. 
  Its characteristic series $\underline{S}$ can be computed recursively from any semi-simple expression of $S$, as follows: \\
  Let $\gamma = (c_1, c_2,\ldots, c_k)$ an element of $\mathbb{N}^k$, $E$ and $F$ subexpressions, and $B=\{\beta_1,\beta_2,\ldots,\beta_l\}$ a free basis. Then:
\begin{eqnarray*}
 \underline{\gamma}&=&x_1^{c_1}x_2^{c_2}...x_k^{c_k}
 \\
 \underline{E + F} &=& \underline{E}\ \underline{F}
 \\
 \underline{E\cup F} &=&\underline{E}+\underline{F}
 \\
 \underline{B^\oplus}&=&\frac{1}{(1-\underline{\beta_1})(1-\underline{\beta_2})\cdots(1-\underline{\beta_l})}.
 \end{eqnarray*}
\end{prop}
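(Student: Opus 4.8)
The plan is to argue by structural induction on the semi-simple expression that denotes $S$, keeping in mind that in such an expression \emph{every} rational operation is non-ambiguous; this is precisely the hypothesis that makes the four identities return a genuine characteristic series (coefficients in $\{0,1\}$) rather than a series with larger multiplicities. Throughout I would work in the ring of formal series over the commutative variables $x_1,\dots,x_k$, and first record that the product $\underline{E}\,\underline{F}$ is well defined there: for any $\sigma\in\mathbb{N}^k$ there are only finitely many pairs $(\alpha,\beta)$ with $\alpha+\beta=\sigma$ (as $\alpha,\beta\le\sigma$ componentwise), so its coefficient is the finite sum $(\underline{E}\,\underline{F})[\sigma]=\sum_{\alpha+\beta=\sigma}\underline{E}[\alpha]\,\underline{F}[\beta]$. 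The base case is immediate from \defref{charseries}: an element $\gamma=(c_1,\dots,c_k)$ has characteristic series equal to the single monomial $x_1^{c_1}\cdots x_k^{c_k}$.

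For the union, $(\underline{E}+\underline{F})[\sigma]=\underline{E}[\sigma]+\underline{F}[\sigma]$, a priori a value in $\{0,1,2\}$; since $E\cup F$ is non-ambiguous we have $E\cap F=\emptyset$, which rules out the value $2$ and leaves $1$ exactly when $\sigma\in E\cup F$, giving $\underline{E\cup F}=\underline{E}+\underline{F}$. For the sum, the coefficient $\sum_{\alpha+\beta=\sigma}\underline{E}[\alpha]\,\underline{F}[\beta]$ counts the decompositions $\sigma=\alpha+\beta$ with $\alpha\in E$ and $\beta\in F$; non-ambiguity of $E+F$ says that such a decomposition, when it exists, is unique, so the coefficient is $1$ precisely for $\sigma\in E+F$ and $0$ otherwise, giving $\underline{E+F}=\underline{E}\,\underline{F}$. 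In both cases the inductive hypothesis then lets me substitute, for $\underline{E}$ and $\underline{F}$, the series obtained from the subexpressions.

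It remains to treat $B^\oplus$ for a free basis $B=\{\beta_1,\dots,\beta_l\}$. First I would note that a free basis cannot contain $0$, since $n\cdot 0=m\cdot 0$ for all $n,m$ would violate the uniqueness of coefficients required by freeness; thus each $\underline{\beta_i}=x_1^{(\beta_i)_1}\cdots x_k^{(\beta_i)_k}$ is a monomial with zero constant term, and the geometric identity $\sum_{n\ge 0}(\underline{\beta_i})^{n}=\frac{1}{1-\underline{\beta_i}}$ is valid in the formal-series ring and equals $\underline{\beta_i^\oplus}$. By commutativity of the monoid together with freeness, $B^\oplus=\beta_1^\oplus+\cdots+\beta_l^\oplus$ is a non-ambiguous sum, so applying the product rule established above $l-1$ times yields $\underline{B^\oplus}=\prod_{i=1}^{l}\frac{1}{1-\underline{\beta_i}}$, as claimed.

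The main obstacle — and the point where the non-ambiguity hypothesis does all the work — is the sum rule: one must be certain that the formal product $\underline{E}\,\underline{F}$ returns a $0/1$ series, which would fail for an ambiguous sum (where some $\sigma$ admits several decompositions and the coefficient exceeds $1$). Everything else is either definitional (the base case), a disjointness check (the union), or the standard geometric expansion (the star); the real care lies in verifying at each step that the series produced is again the characteristic series of the corresponding subset, so that the induction can be carried through.
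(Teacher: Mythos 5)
Your proof is correct. Bear in mind that the paper itself offers no proof of this statement: it is quoted from Gohon (Proposition 3.1 of the cited 1985 paper), so the only comparison possible is against the natural argument, which is exactly the one you give --- structural induction in which non-ambiguity of each operation is what keeps every coefficient in $\{0,1\}$. Two small points deserve explicit mention. First, your reduction of $\underline{B^\oplus}$ to $l-1$ applications of the product rule requires that each partial sum $\bigl(\beta_1^\oplus+\cdots+\beta_j^\oplus\bigr)+\beta_{j+1}^\oplus$ be non-ambiguous and that each intermediate set be the one denoted by the free sub-basis $\{\beta_1,\dots,\beta_j\}$; both facts follow from freeness of $B$ by restricting an equality $\sum_i n_i\beta_i=\sum_i m_i\beta_i$ to its first $j+1$ coefficients, but stating this is what makes the iteration legitimate. (Similarly, the sums $\gamma+B^\oplus$ occurring in a semi-simple expression are non-ambiguous simply because $\gamma$ is a singleton and $\mathbb{N}^k$ is cancellative --- worth one line.) Second, your observation that a free basis cannot contain $0$ --- hence each $\underline{\beta_i}$ has zero constant term and the geometric expansion $\sum_{n\ge 0}\underline{\beta_i}^{\,n}=1/(1-\underline{\beta_i})$ is valid as a formal series --- is essential and easy to overlook, since the paper's definition of free basis leaves it implicit.
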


\begin{rem}
    Note that in \propref{serie} it is essential to consider a semi-simple expression. For instance, the rule $\underline{E\cup F} =\underline{E}+\underline{F}$ would fail to produce coefficients smaller than $1$ in the formal series for an ambiguous union in the expression.
\end{rem}

\begin{ej}
\ejlabel{Sa}
Let $U$ the set denoted by $(1,1)+(1,1)^\oplus$, and depicted in Figure~\ref{fig:grafodiagonal}. Then,
$\underline{U} = \frac{xy}{(1-xy)}$.
\end{ej}

\begin{prop}[{\cite[Proposition 3.3]{GOHON}}]
\proplabel{unique}
Let $S$ be a rational subset of $\mathbb{N}^k$. There exists a unique pair of polynomials $P(S)$ and $Q(S)$ $\in \mathbb{Z}[x_1,x_2,...,x_k]$ satisfying the following conditions:
\begin{enumerate}
\item[] $\underline{S}=P(S)/Q(S)$.
\item[] $P(S)/Q(S)$ is irreducible.
\item[] $Q(S)[(0,\ldots,0)]= 1$; that is, its constant term is $1$.
\end{enumerate}
Furthermore, for every pair of polynomials $P$ and $Q$ $\in \mathbb{Z}[x_1,x_2,...,x_k]$ such that $\underline{S} = P/Q$, there exists
a polynomial $R\in \mathbb{Z}[x_1,x_2,...,x_k]$ such that $P = R. P(S)$ and $Q = R. Q(S)$.
\end{prop}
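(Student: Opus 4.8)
The plan is to work in the ring $\mathbb{Z}[x_1,\dots,x_k]$, which is a unique factorization domain whose only units are $+1$ and $-1$, and to exploit the fact that the characteristic series $\underline{S}$ is a genuine rational function whose denominator can be normalised to have constant term $1$. The three defining conditions are then just: a representative in lowest terms, with the residual sign ambiguity pinned down by the constant term.

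First I would invoke \propref{semisimple} to fix a semi-simple expression for $S$, and then apply \propref{serie} to compute $\underline{S}$ from it. The reason for starting from a semi-simple expression is that the recursive rules of \propref{serie} exhibit $\underline{S}$ as a sum of terms of the form $x^{\gamma}/\prod_i\bigl(1-\underline{\beta_i}\bigr)$, where each $\beta_i$ is a nonzero element of a free basis, so each factor $1-\underline{\beta_i}$ is a polynomial with constant term $1$. Placing this sum over a common denominator yields an explicit representation $\underline{S}=P_0/Q_0$ with $P_0,Q_0\in\mathbb{Z}[x_1,\dots,x_k]$ and $Q_0[(0,\dots,0)]=1$, since a product of polynomials with constant term $1$ again has constant term $1$.

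Next I would reduce this fraction to lowest terms. Let $R=\gcd(P_0,Q_0)$ in the UFD $\mathbb{Z}[x_1,\dots,x_k]$ and write $P_0=R\,P'$ and $Q_0=R\,Q'$ with $P'$ and $Q'$ coprime. Because the constant term is multiplicative, $R[(0,\dots,0)]\cdot Q'[(0,\dots,0)]=Q_0[(0,\dots,0)]=1$ forces $Q'[(0,\dots,0)]=\pm1$; after multiplying both $P'$ and $Q'$ by the appropriate unit $\varepsilon\in\{+1,-1\}$ I obtain $P(S)=\varepsilon P'$ and $Q(S)=\varepsilon Q'$ satisfying all three conditions. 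Uniqueness is then the standard argument in the fraction field of a UFD: if $P_1/Q_1=P_2/Q_2$ are two coprime representations, then $P_1Q_2=P_2Q_1$ together with coprimality gives $Q_1\mid Q_2$ and $Q_2\mid Q_1$, so $Q_2=u\,Q_1$ for a unit $u$; comparing constant terms forces $u=1$, hence $Q_1=Q_2$ and $P_1=P_2$. The final divisibility claim runs on the same mechanism: from $\underline{S}=P/Q=P(S)/Q(S)$ one gets $P\,Q(S)=Q\,P(S)$, and coprimality of $P(S)$ and $Q(S)$ yields $P(S)\mid P$, say $P=R\,P(S)$; cancelling $P(S)$ then gives $Q=R\,Q(S)$ (the degenerate case $S=\emptyset$, where $P(S)=0$ and $Q(S)=1$, is handled directly by taking $R=Q$).

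The main obstacle I anticipate is not the UFD bookkeeping, which is routine, but making sure the denominator really can be normalised to have constant term $1$. This is exactly where the choice of a semi-simple expression pays off: it guarantees that the raw denominator $Q_0$ produced by \propref{serie} is a product of factors $1-\underline{\beta_i}$ with $\beta_i\neq 0$, hence has constant term $1$, and this in turn is what pins down the unit ambiguity and makes the representation canonical. Without this control on the constant term one could only conclude uniqueness up to sign, so verifying the normalisation is the crux of the argument.
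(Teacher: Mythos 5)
The paper does not actually prove this proposition --- it is imported verbatim from Gohon~\cite[Proposition 3.3]{GOHON} --- so there is no internal argument to compare yours against; judged on its own terms, your proof is correct and complete. You correctly identify and resolve the one genuine point of care: a semi-simple expression together with \propref{serie} writes $\underline{S}$ as a sum of fractions $x^{\gamma}/\prod_i\bigl(1-\underline{\beta_i}\bigr)$, and since the elements of a free basis are necessarily nonzero (a basis containing $0$ would make $B^\oplus$ ambiguous), the common denominator has constant term $1$; after that, existence, uniqueness, and the divisibility claim all follow from standard gcd manipulations in the UFD $\mathbb{Z}[x_1,\dots,x_k]$, with the constant-term condition pinning down the residual unit $\pm 1$, and your explicit treatment of the degenerate case $\underline{S}=0$ closes the last loophole. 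This is in all likelihood essentially Gohon's original argument, so the proposal fills a gap the paper deliberately leaves to the literature.
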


\begin{rem}
\propref{unique} gives a canonical representation of the rational subsets of $\mathbb{N}^k$ and is fundamental for proving \propref{gohon}, Gohon's main result in~\cite{GOHON}. However, \propref{unique} is false for non-commutative monoids such as $A^\star$.
\end{rem}

As we already mentioned, not every regular language $L\subseteq A^*$ has a regular commutative closure ${\EuScript C}(L)$, as a subset of $A^*$. 
However, the commutative image of every regular set of $A^*$ is always a rational set of ${\mathbb N}^k$.

\begin{rem}
The commutative image $\varphi(L)$ of any regular language $L\subseteq A^*$ is a rational set of ${\mathbb N}^k$.
\end{rem}

On the other hand, the fact that a set $S$ is rational in $\mathbb{N}^k$ says nothing about whether $\varphi^{-1}(S)$ is a regular set of $A^*$ or not. 
A well studied subclass of the rational sets   in {finitely generated} monoids is the class of the recognizable sets~\cite{Sakarovitch}. 
Our interest in these sets  comes from the fact that, for any surjective morphism,  the inverse image of a recognizable set of ${\mathbb N}^k$ is a regular set of $A^*$.

\begin{define}[Recognizable set]
Let $A$ be a finite alphabet and let
$\varphi:A^* \to \mathbb{N}^k$ be the commutative image.
A subset $S$ of $\mathbb{N}^k$ is recognizable if
$\varphi^{-1}(S)$ is regular.
\end{define}

Note that if $S$ is recognizable in $\mathbb{N}^k$ then $\varphi(\varphi^{-1}S)=S$ is rational and thus\linebreak $\text{Rec}(\mathbb{N}^k)\subseteq\text{Rat}(\mathbb{N}^k)$.
The following is an equivalent characterization of recognizable sets that is attributed to Mezei.

\begin{define}[{Mezei~\cite[Corollary II.2.20]{Sakarovitch}}]
\deflabel{mezei}
A set $S$ of $\mathbb{N}^k$ is recognizable if
there exists a family of sets
$\{T_{i,j}\}_{i\in I,1\leq j\leq k}$ with $I$ finite and each $T_{i,j}$  rational subsets of  $\mathbb{N}$
such that
$$
S=\bigcup\limits_{i \in I} T_{i,1} \times \cdots \times T_{i,k}.
$$
\end{define}

\begin{rem}
The characterization of recognizable sets as in \defref{mezei} is not unique.
For instance, when $S$ is $\mathbb{N}$, we already find multiple characterizations:
$1^\oplus$; 
 $1 \cup (1 + 1^\oplus)$;
 $1 \cup \cdots \cup n \cup \left((n+1) + 1^\oplus\right)$;
 $2^\oplus \cup ( 1+2^\oplus)$;  \ldots
\end{rem}

The next result is the main  theorem obtained by Gohon in~\cite{GOHON}.

\begin{prop}[{{Gohon~\cite[Theorem 4.6]{GOHON}}}]
\proplabel{gohon}
For every rational  set $S $ of ${\mathbb{N}^k}$, we can associate a fraction $P/Q=\underline{S}$, where $P$ and $Q$ are polynomials of $k$ commutative variables $x_1,x_2,...,x_k$ with coefficients in $\mathbb{Z}$, such that $S$ is recognizable if and only if $Q=1$, or $Q$ is a product of polynomials of the form $(1-x_j^{p_j})$.
\end{prop}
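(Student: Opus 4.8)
The plan is to take for $\underline{S}$ the canonical irreducible fraction $P(S)/Q(S)$ provided by \propref{unique}, with $Q(S)[(0,\dots,0)]=1$, and to prove the two implications separately, using Mezei's characterization (\defref{mezei}) as the bridge between recognizability and the shape of the denominator. Throughout I would rely on the one-variable base case: a subset $T\subseteq\mathbb{N}$ is rational iff it is eventually periodic, and in that case its characteristic series $\underline{T}$ is a one-variable fraction whose reduced denominator is either $1$ (when $T$ is finite) or of the form $1-x^{p}$. I would establish this base case first; the key point is that the coefficients of $\underline{T}$ lie in $\{0,1\}$ and are therefore bounded, and boundedness forbids repeated poles, so no cyclotomic factor of the denominator can occur with multiplicity and no factor other than those assembling into some $1-x^{p}$ can survive the reduction to lowest terms.

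For the direction ``denominator of the right shape implies recognizable'', I would start from $Q(S)=\prod_{m}(1-x_{j_m}^{p_m})$ and regroup the factors by variable, writing $Q(S)=Q_1(x_1)\cdots Q_k(x_k)$ with each $Q_j\in\mathbb{Z}[x_j]$. Then $1/Q(S)$ factors as $\prod_j 1/Q_j(x_j)$, a product of one-variable series, each with eventually periodic coefficients since its poles are roots of unity. Multiplying by $P(S)$ and reading off the coefficient of $x_1^{s_1}\cdots x_k^{s_k}$ shows that $\underline{S}[(s_1,\dots,s_k)]$ depends only on the residues of the $s_j$ modulo a common period, once each $s_j$ exceeds a fixed threshold. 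Hence $\mathbf 1_{S}$ factors through a finite quotient of $\mathbb{N}^k$ of the threshold-period type, which I would package as an explicit Mezei decomposition of $S$ into a finite union of products of eventually periodic coordinate sets, yielding recognizability by \defref{mezei}.

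For the converse, assume $S$ is recognizable and take a Mezei decomposition $S=\bigcup_{i\in I}T_{i,1}\times\cdots\times T_{i,k}$. Since the series of a product set is the product of the coordinate series, and the intersection of product sets is the product of the coordinate intersections (eventually periodic sets being closed under intersection), inclusion--exclusion over the union expresses $\underline{S}$ as a $\mathbb{Z}$-linear combination of products $\prod_j R_{J,j}(x_j)$, where each $R_{J,j}$ is a one-variable fraction with denominator $1$ or $1-x_j^{p}$. Putting this over the common denominator $Q=\prod_j(1-x_j^{P_j})$ gives $\underline{S}=P/Q$ with $Q$ of the desired shape, and by the divisibility clause of \propref{unique} the reduced denominator $Q(S)$ divides $Q$. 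As $Q$ is a product of pairwise coprime irreducible cyclotomic polynomials, each in a single variable, $Q(S)$ separates as $\prod_j D_j(x_j)$ with $D_j\mid(1-x_j^{P_j})$ and $D_j(0)=1$.

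The main obstacle is the final step: upgrading ``$D_j$ divides $1-x_j^{P_j}$'' to ``$D_j$ is $1$ or $1-x_j^{p_j}$'', i.e. excluding bare cyclotomic factors such as $\Phi_3(x_j)=1+x_j+x_j^{2}$. Here I would again invoke boundedness of the coefficients. For fixed values of the remaining coordinates, the section $t\mapsto\underline{S}[(\dots,t,\dots)]$ is a one-variable $\{0,1\}$-series, hence eventually periodic with reduced denominator $1$ or $1-x_j^{p}$ by the base case; and for a generic choice of the other coordinates this section recovers the full $x_j$-denominator $D_j$, with no accidental cancellation. Comparing forces $D_j\in\{1,\,1-x_j^{p_j}\}$, whence $Q(S)$ is $1$ or a product of polynomials $1-x_j^{p_j}$. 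Making the phrase ``a generic section recovers $D_j$'' precise---ruling out that cancellation in every section could shrink the denominator---is the delicate part, and is exactly where the $\{0,1\}$ constraint on a characteristic series, rather than mere rationality, does the work.
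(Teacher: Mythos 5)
Your attempt fails, and it fails at exactly the step you flagged as ``the delicate part''---but the problem is not delicacy, it is that the claim you need there is false, and the root cause is your very first sentence. You fix the fraction to be the canonical irreducible $P(S)/Q(S)$ of \propref{unique}; under that reading the proposition itself is false, so no argument can close the gap. Concretely, take $k=1$ and $S=\{n\in\mathbb{N}\,:\,n\equiv 0\text{ or }1 \pmod 4\}$. Then
$\underline{S}=(1+x)/(1-x^4)=1/\bigl((1-x)(1+x^2)\bigr)$,
so the reduced denominator is $Q(S)=(1-x)(1+x^2)=1-x+x^2-x^3$, which is neither $1$ nor a product of polynomials $(1-x^{p})$ (the only such products of degree $3$ are $1-x^3$, $(1-x)(1-x^2)$ and $(1-x)^3$). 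Yet $S$ is recognizable, since in one variable every rational subset of $\mathbb{N}$ is. The same example refutes your one-variable ``base case'': the reduced denominator of a $\{0,1\}$ eventually periodic series need not be $1$ or $1-x^{p}$; it only \emph{divides} some $1-x^{p}$, because reduction to lowest terms can cancel some cyclotomic factors (here $1+x=\Phi_2$) while keeping others ($\Phi_1$ and $\Phi_4$), and the survivors need not reassemble into any $1-x^{p}$. Consequently your final upgrade from ``$D_j\mid 1-x_j^{P_j}$'' to ``$D_j\in\{1,\,1-x_j^{p_j}\}$'' is unprovable by sections or by anything else: boundedness of the coefficients rules out repeated poles, nothing more.

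For the record, the paper does not prove this proposition; it cites Gohon, and the intended reading is visible in the discussion that follows it and in the proof of \lemref{polyrec}: the fraction ``associated'' to $S$ is \emph{not} reduced to lowest terms. One starts from a consistent semi-simple expression (\lemref{consistent}), computes $P'/Q'$ by \propref{serie}, and cancels \emph{only} the factors of $Q'$ involving more than one variable, keeping the univariate factors $(1-x_j^{p_j})$ intact even when further cancellation is possible. Equivalently, the theorem is existential: $S$ is recognizable if and only if $\underline{S}$ admits \emph{some} representation $P/Q$ with $Q=1$ or $Q$ a product of factors $(1-x_j^{p_j})$. Read this way, your two middle paragraphs already constitute a proof: the Mezei/inclusion--exclusion computation shows that recognizability yields such a representation (stop at ``gives $\underline{S}=P/Q$ with $Q$ of the desired shape''; do not pass to $Q(S)$), and your second paragraph gives the converse, provided you first normalize $Q$ to have at most one factor per variable (take least common multiples of the exponents, as \lemref{consistent} does), so that each $1/(1-x_j^{p_j})$ has genuinely periodic coefficients and the threshold-period congruence argument applies. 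So the salvage is: delete the base case and the last paragraph, keep the middle two, and restate what is being proved.
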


Gohon's algorithm starts from a semi-simple expression and obtains its characteristic series. 
It reduces the polynomials until all factors of more than one variable in the denominator are simplified. This is possible exactly when the set is recognizable.

\begin{ej*}[Continuation of \ejref{Sa}] 
$\underline{U} = \frac{xy}{(1-xy)}$.
The set $U$, shown in \autoref{fig:grafodiagonal}, is not recognizable because $\underline{U}$ is irreducible and its denominator features a factor with more than one variable.
In fact, 
 if we consider the alphabet $A=\{a,b\}$
the set $U$ corresponds to  ${\EuScript C}({(ab)^+})=\{w\in A^+ \ :\ |w|_a=|w|_b\}={\EuScript C}({(ab)^*})-\{\lambda\}$, which is not regular.
\end{ej*}
\begin{figure}[t]
\vspace*{-1cm}
 \centering
\usetikzlibrary{patterns}
\begin{tikzpicture}[scale=0.6]
\begin{axis}[
 axis x line=center,
 axis y line=center,
 xmin=0,
 ymin=0,
 xmax=6.5,
 ymax=6.5,
 xlabel={$x$},
 xlabel style={below right},
 ylabel={$y$},
 ylabel style={above left},
 legend columns=3,
 legend cell align=left,
 grid=major,
 legend style={ at= {(1,-0.1)},},
 transpose legend
 ]

 \foreach \i in {1,...,6} {
 \addplot [only marks, Green] coordinates {(\i,\i)};
 }
\end{axis}
\end{tikzpicture}
\hspace{1cm}
\begin{tikzpicture}[scale=0.6]
\begin{axis}[
 axis x line=center,
 axis y line=center,
 xmin=0,
 ymin=0,
 xmax=6.5,
 ymax=6.5,
 xlabel={$x$},
 xlabel style={below right},
 ylabel={$y$},
 ylabel style={above left},
 legend columns=3,
 legend cell align=left,
 grid=major,
 legend style={ at= {(1,-0.1)},},
 transpose legend
 ]

 \foreach \i in {1,3,...,6} {
 \foreach \j in {0,2,...,6}
 \addplot [only marks, violet] coordinates {(\j,\i)};
 }
\end{axis}
\end{tikzpicture}
\begin{tabular}{lcr}
 $U=(1,1)+(1,1)^\oplus$&\hspace*{2.4cm}&$V=(0,1)+\big((2,0)\cup(0,2)\big)^\oplus$
\end{tabular}

\caption{On the left, an infinite  set of ${\mathbb N}^2$ that is not recognizable, because it  is impossible to characterize it as a finite union of products of rational sets in $\mathbb{N}$. On the right, an infinite   set of ${\mathbb N}^2$ that is recognizable: it can be described as the product between the even and the odd numbers, both rational sets  in $\mathbb{N}$.}
\label{fig:grafodiagonal}
 \label{fig:garfopares}
\end{figure}
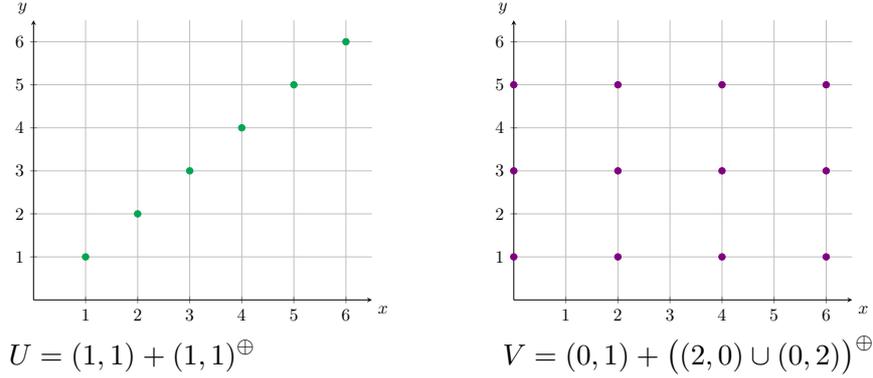
\begin{ej}
$V = (0,1)+\big((2,0)\cup(0,2)\big)^\oplus$,
 $\underline{V} = \frac{y}{(1-x^2)(1-y^2)}$.
The set $V$,  shown in \autoref{fig:garfopares}, is recognizable because  $\underline{V}$ has no factors of more than one variable in the denominator.
Notice that $V$ is the commutative image of $L=b(a^2\cup b^2)^*$
in \ejref{even} and we have already  given an  automaton that accepts ${\EuScript C}({L})$; thus, $V$ is recognizable.
\end{ej}

\section{Resimple expressions}
\label{resimple}

We introduce a new kind of rational expression, that we call resimple expressions,  which are relatively simple and  denote  recognizable sets of 
$\mathbb{N}^k$. Our goal is to construct these resimple expressions from the characteristic series of recognizable sets.

\begin{note}
From now on, let 
$\varphi:A^\star\rightarrow\mathbb{N}^k$
denote the commutative image morphism.\linebreak
We write $\mathbf{e}_i$ to denote the element of $\mathbb{N}^k$ whose components are all $0$ except for the $i$-th, which is~$1$. The set $\{\mathbf{e}_1,\dots,\mathbf{e}_k \}$ is the set of generators of~$\mathbb{N}^k$.
\end{note}

\begin{define}[Primary element]
An element $\sigma \in \mathbb{N}^k$ is {primary} if $\sigma=n \mathbf{e}_j$ for some $n \in \mathbb{N}_{>0}$, and some  $j$ between $1$ and $k$.
When we want to make explicit the index $j$ we  say that $\sigma$ is $j$-primary.
\end{define}

\begin{define}[Primary basis]
A free basis $B\subseteq {\mathbb N}^k$ is {primary} if all its elements are primary.
\end{define}

We use the name atomic resimple for simple expressions that denote  recognizable  sets.

\begin{define}[Atomic resimple expression]
An expression $R$ of a recognizable set of $\mathbb{N}^k$ is {atomic resimple} 
if $R$ is simple of the form  $R = \gamma+B^\oplus$ where  $B$ is a free primary basis.
\end{define}

\begin{prop}
\proplabel{atomic resimple denote recognizable}
    If $R=(c_1,\dots,c_k)+B^\oplus$ is an atomic resimple expression then it denotes a recognizable set.
\end{prop}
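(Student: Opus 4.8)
The plan is to show that an atomic resimple expression $R = \gamma + B^\oplus$ with $B$ a free primary basis denotes a recognizable set by exhibiting $R$ in the Mezei form of \defref{mezei}, namely as a finite union of products $T_{i,1} \times \cdots \times T_{i,k}$ of rational subsets of $\mathbb{N}$. The key structural fact to exploit is that a primary basis contains only primary elements $n\mathbf{e}_j$, each of which affects exactly one coordinate. This means the generators group cleanly by coordinate index, so that the free sum $B^\oplus$ decomposes as a Cartesian product across the $k$ axes, and recognizability follows almost immediately.

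First I would partition $B$ according to which coordinate each generator lives on: for each $j \in \{1,\dots,k\}$, let $B_j = \{n\mathbf{e}_j \in B\}$ be the $j$-primary elements of $B$. Since every element of $B$ is primary, $B$ is the disjoint union of the $B_j$, and because distinct $B_j$ involve disjoint coordinates, summing elements from different groups never interacts: one obtains $B^\oplus = B_1^\oplus + \cdots + B_k^\oplus$. Writing $B_j = \{n_{j,1}\mathbf{e}_j, \dots, n_{j,m_j}\mathbf{e}_j\}$, the set $B_j^\oplus$ lives entirely in the $j$-th coordinate and, identifying that coordinate with $\mathbb{N}$, equals the rational subset $S_j = \{n_{j,1}, \dots, n_{j,m_j}\}^\oplus \subseteq \mathbb{N}$. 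Hence $B^\oplus = S_1 \times \cdots \times S_k$ as a subset of $\mathbb{N}^k$. Translating by $\gamma = (c_1,\dots,c_k)$ preserves the product structure, giving $R = (c_1 + S_1) \times \cdots \times (c_k + S_k)$, where each factor $c_j + S_j$ is a rational subset of $\mathbb{N}$.

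This single product is already in the Mezei form of \defref{mezei} (a finite union indexed by a singleton $I$), so $R$ is recognizable and the proof concludes. I expect the only point requiring genuine care to be the justification that $B^\oplus$ factors as the Cartesian product $S_1 \times \cdots \times S_k$; this uses both that the basis is primary (so coordinates decouple) and that $B$ is free (so the decomposition is clean, though freeness is not strictly needed for the set-theoretic identity, only primality is). I would verify the product identity by a direct double inclusion: any element of $B^\oplus$ is a nonnegative combination of the generators, which splits coordinatewise into contributions from each $B_j$, and conversely any tuple in $S_1 \times \cdots \times S_k$ arises by summing the appropriate combinations of generators. The remaining observation, that $c_j + S_j$ is rational in $\mathbb{N}$, is immediate since rational sets of $\mathbb{N}$ are closed under translation by a constant. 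Consequently, invoking the equivalence of \defref{mezei}, $R$ denotes a recognizable set.
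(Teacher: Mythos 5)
Your proof is correct and takes essentially the same route as the paper's: both decompose $R$ coordinatewise into a product $S_1 \times \cdots \times S_k$ of rational subsets of $\mathbb{N}$ and conclude via the Mezei characterization (\defref{mezei}). The only difference is that the paper exploits freeness of $B$ to note there is at most one $j$-primary generator per coordinate, so each factor has the explicit form $c_j + p_j^\oplus$ or $\{c_j\}$, whereas you allow several generators per axis and correctly observe that freeness is not actually needed, only primality.
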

\begin{proof}
    $B$ is a free primary base, so all its elements are primary. Also, because it is free, there is at most one $j$-primary element in $B$ for each $j$. 
    Let us consider $S_j$ a subset of $\mathbb{N}$ 
    $$
S_j = \begin{cases}
c_j+p_j^\oplus  & \text{ if }  p_j \mathbf{e}_j \in B\\
{c_j} & \text{ otherwise}.
\end{cases}
$$
    
    Then the set denoted by $R$ is exactly $S_1\times\cdots \times S_k$, a recognizable set.
\end{proof}

An atomic resimple expression is a rational expression of $\mathbb{N}^k$. 
We just write $R$ for the set denoted by an atomic resimple expression $R$.
Starting from an atomic resimple expression $R = c + p^\oplus$, where $c,p\in \mathbb{N}$, we can construct a finite state deterministic automaton $\mathcal{A} = \diples{ Q, \{a\}, E, I, T}$ that accepts the language $\varphi^{-1}(R) \subseteq {A}^*$, where $\varphi$ is the commutative image. The automaton is depicted in Figure~\ref{fig:automatacaracol}. 
We must do the same for any ~$k$.
To accomplish this we must first return to $A^*$ to introduce an operation between regular languages.

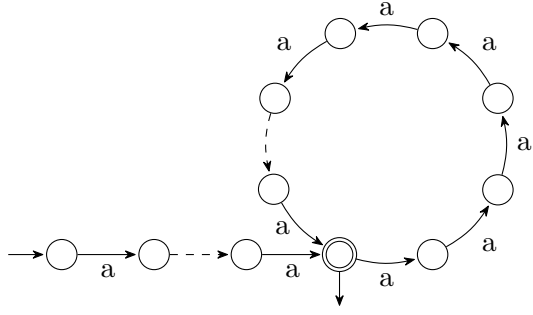
\begin{figure}[t]
\vspace*{-1cm}
 \centering
\begin{tikzpicture}[automaton, auto,
node distance = 0.8cm]

 \node[state, initial] (B){};
 \node[state] (C) [right =of B]{};
 \node[state] (D) [right =of C]{};
 \node[state, accepting] (G) [right =of D]{};
 \node[state] (H) [right =of G]{};
 \node[state] (I) [above right =of H]{};
 \node[state] (J) [above =of I] {};
 \node[state] (K) [above left =of J]{};
 \node[state] (L) [left =of K]{};
 \node[state] (M) [below left =of L]{};
 \node[state] (N) [above left =of G]{};
 \path[->] (B) edge node [below] {a} (C);
 \path[->] (D) edge node [below] {a} (G);
 \path[->] (G) edge [bend right = 15] node [below] {a} (H);
 \path[->] (H) edge [bend right = 15] node [below right] {a} (I);
 \path[->] (I) edge [bend right = 15] node [right] {a} (J);
 \path[->] (J) edge [bend right = 15] node [above right] {a} (K);
 \path[->] (K) edge [bend right = 15] node [above] {a} (L);
 \path[->] (L) edge [bend right = 15] node [above left] {a} (M);
 \path[->] (N) edge [bend right = 15] node [left] {a} (G);
 \path[->] (C) edge [dashed] (D);
 \path[->] (M) edge [bend right = 15] [dashed] (N);
\end{tikzpicture}
\caption{
Finite state automaton for $\varphi^{-1}(R)$ when $R$ is an atomic resimple expression of $\mathbb{N}$ ($k=1$ and $A=\{a\}$).
}
\label{fig:automatacaracol}
\end{figure}

\begin{define}[Shuffle]
The shuffle of two words $w$ and $v$ in $A^*$, denoted with $w \between v$, is the subset of $A^*$ defined by
$$
w \between v=\{w_1v_1\cdots w_nv_n\ : \
w = w_1\cdots w_n, v = v_1\cdots v_n, \text{with }
w_i,v_j \in A^*\
\text{for each $i,j$ and $n \in \mathbb{N}$}
\}
$$
The shuffle of two words is additively extended to the shuffle of two languages in a natural way. If $L,K\subseteq A^\star$ are two languages, their shuffle is defined as follows:
$$
L \between K= \bigcup\limits_{w\in L, v \in K}w \between v.
$$
\end{define}

\begin{ej}
$ab \between ba = \{abba, baba, baab, abab\}$.
\end{ej}

The shuffle of languages is an associative operation on $\mathcal{P}(A^\star)$. Therefore, if $L_1,\dots,L_l$ are languages, we denote
$$
L_1\between\cdots\between L_l = \bigcup\limits_{w_i \in L_i}w_1 \between \cdots \between w_l.
$$

We introduce the shuffle  product automaton.

\begin{define}[Shuffle product automaton]
Let $\mathcal{A'} = \diples{ Q', A, E', I', T' }$ and \linebreak $\mathcal{A''} = \diples{ Q'', A, E'', I'', T'' }$ be deterministic finite state automata.
We define the shuffle product automaton $\mathcal{A}=\mathcal{A'}\between \mathcal{A''}$ as $\mathcal{A} = \diples{Q' \times Q'', A, E, I' \times I'', T' \times T''}$,
where\begin{align*}
E=&\{((p',p''),a,(q',p'')): p''\in Q'' \text{ and }((p',a,q')\in E'\} \cup
\\
&\{((p',p''),a,(p',q'')): p'\in Q' \text{ and }((p'',a,q'')\in E''\}.
\end{align*}
Inductively, given finite state deterministic automata $\mathcal{A}_1,\dots,\mathcal{A}_n$, we define  an automaton $\mathcal{A}=\mathcal{A}_1\between\dots\between\mathcal{A}_n$ by repeating this procedure.
\end{define}

It is easy to see that $L(\mathcal{A'}\between \mathcal{A''})=L(\mathcal{A'})\between L(\mathcal{A''})$. And, furthermore, that \linebreak
$L(\mathcal{A}_1\between\cdots\between \mathcal{A}_n)=L(\mathcal{A}_1)\between\cdots\between L(\mathcal{A}_n)$.

\begin{prop}
The shuffle of two regular languages in $A^*$ is a regular language.
\end{prop}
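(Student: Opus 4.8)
The plan is to read off the shuffle language as the language accepted by a finite automaton, using the shuffle product construction introduced just above the statement. Let $L$ and $K$ be two regular languages over $A$. Since a language is regular precisely when some deterministic finite state automaton accepts it, I would fix deterministic automata $\mathcal{A'} = \diples{Q', A, E', I', T'}$ and $\mathcal{A''} = \diples{Q'', A, E'', I'', T''}$ with $L(\mathcal{A'}) = L$ and $L(\mathcal{A''}) = K$. If only nondeterministic automata were available, I would first determinise them, which is always possible for regular languages; this step matters because the shuffle product automaton is defined only for deterministic inputs.

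Next I would form the shuffle product automaton $\mathcal{A} = \mathcal{A'} \between \mathcal{A''}$. Its state set is $Q' \times Q''$, which is finite because $Q'$ and $Q''$ are, so $\mathcal{A}$ is a bona fide finite state automaton. Invoking the identity $L(\mathcal{A'} \between \mathcal{A''}) = L(\mathcal{A'}) \between L(\mathcal{A''})$ asserted immediately above, I obtain $L(\mathcal{A}) = L \between K$. Hence $L \between K$ is accepted by a finite state automaton and is therefore regular, which is exactly the claim. Taking the construction and the identity as given, the proposition is thus an immediate corollary.

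The only real content hides in the identity $L(\mathcal{A'} \between \mathcal{A''}) = L(\mathcal{A'}) \between L(\mathcal{A''})$, and this is the step I would expect to require the most care were it proved from scratch rather than assumed. The argument is a correspondence between accepting computations: a run of $\mathcal{A}$ from a state of $I' \times I''$ to a state of $T' \times T''$ decomposes, by projecting onto each coordinate, into a run of $\mathcal{A'}$ reading the subword formed by the transitions of the first kind and a run of $\mathcal{A''}$ reading the subword formed by the transitions of the second kind, and these two subwords interleave to recover the input word, which is precisely the definition of the shuffle. Conversely, any interleaving of an accepting run of $\mathcal{A'}$ on $w$ with an accepting run of $\mathcal{A''}$ on $v$ lifts to an accepting run of $\mathcal{A}$ on a word of $w \between v$. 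One mild subtlety worth flagging is that $\mathcal{A}$ is in general nondeterministic, since from a state $(p', p'')$ a letter $a$ may be consumed either in the first or in the second coordinate; this does not affect regularity, as any finite state automaton witnesses it, but it means the construction does not directly yield a deterministic automaton.
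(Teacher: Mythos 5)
Your proof is correct and takes essentially the same route as the paper, which treats this proposition as an immediate consequence of the shuffle product automaton construction and the identity $L(\mathcal{A'}\between \mathcal{A''})=L(\mathcal{A'})\between L(\mathcal{A''})$ stated just before it. Your run-correspondence sketch for that identity and your observation that $\mathcal{A'}\between\mathcal{A''}$ is in general nondeterministic (which is harmless here, and is exactly why the paper separately proves determinism only in the disjoint-alphabet case) are both accurate.
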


\begin{prop}
Let $A'$ and $A''$ be  disjoint alphabets, and let $A=A'\cup A''$ be  the union of both. Suppose we have two finite state deterministic   automata $\mathcal{A'}$ and $\mathcal{A''}$ defined on alphabets $A'$ and $A''$ respectively. If we consider both as  automata on $A$, then the shuffle between them $\mathcal{A'}\between\mathcal{A''}$ is also a deterministic finite state automaton.
\end{prop}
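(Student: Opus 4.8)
The plan is to verify directly the two defining conditions of a deterministic finite state automaton for the product $\mathcal{A'}\between\mathcal{A''}$: that its set of initial states is a singleton, and that from every state each letter labels at most one outgoing transition. The first condition is immediate. Since $\mathcal{A'}$ and $\mathcal{A''}$ are deterministic we have $|I'|=|I''|=1$, and the set of initial states of the product is $I'\times I''$, which is therefore a singleton.

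For the second condition I would fix a state $(p',p'')\in Q'\times Q''$ and a letter $a\in A$, and exploit the disjointness $A'\cap A''=\emptyset$. Since $A=A'\cup A''$, the letter $a$ lies in exactly one of the two alphabets. The key observation is that the two families of transitions making up $E$ are keyed to the two component automata: a transition of the first family reads a letter of $A'$ (it comes from a transition in $E'$), whereas a transition of the second family reads a letter of $A''$ (it comes from a transition in $E''$). Because $a$ belongs to only one of $A'$, $A''$, at most one of these two families can contribute an $a$-labelled transition out of $(p',p'')$. I would then split into the two symmetric cases. If $a\in A'$, then $a\notin A''$, so no transition of $\mathcal{A''}$ is labelled $a$ and the second family is empty for this letter; the outgoing $a$-transitions from $(p',p'')$ are exactly those of the form $((p',p''),a,(q',p''))$ with $(p',a,q')\in E'$, and determinism of $\mathcal{A'}$ yields at most one such $q'$. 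The case $a\in A''$ is handled symmetrically using the determinism of $\mathcal{A''}$. In either case there is at most one outgoing $a$-transition, which establishes determinism.

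I do not anticipate a serious obstacle here; the entire argument rests on the single structural fact that disjoint alphabets prevent one letter from simultaneously driving both component automata. It is worth emphasizing exactly where disjointness enters, since it is the crux of the statement: for two automata over a \emph{common} alphabet the shuffle product is in general non-deterministic, as a shared letter $a$ could advance either component and thus produce two distinct $a$-successors of a product state. The present proposition is precisely the observation that this source of ambiguity vanishes once $A'\cap A''=\emptyset$.
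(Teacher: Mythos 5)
Your proof is correct and follows essentially the same route as the paper's: both arguments rest on the observation that, since $A'\cap A''=\emptyset$, any letter $a$ can only label transitions coming from one of $E'$ or $E''$, and then determinism of the corresponding component automaton forces uniqueness of the $a$-successor. The only difference is presentational—the paper argues by contradiction while you verify the two conditions of determinism directly—so the mathematical content is identical.
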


\begin{proof}
Assume, forcontradiction, that $\mathcal{A'}\between\mathcal{A''}$ is not deterministic.
 Since $\mathcal{A'}$ and $\mathcal{A''}$ are deterministic, both have only one initial state, so $\mathcal{A'}\between\mathcal{A''}$  has only one initial state too.
Then there must be a state $(p',p'')$ and a letter $a\in A$ such that there's more than one transition labelled $a$ leaving $(p',p'')$. 
Since the alphabets are disjoint there are two exclusive possibilities: either $a\in A'$ or $a\in A''$. Without loss of generality we assume $a\in A'$.
As there are no transitions in $E''$ labelled with $a$, both must be in~$E'$. So there is more than one transition in $E'$ with origin $p'$ and label $a$. This is impossible because $\mathcal{A}'$ is deterministic.
\end{proof}

\begin{note} In the sequel we  write automaton as an abbreviation of  finite state deterministic automaton.
\end{note}

Now we build the automaton over $A^*$ for any atomic resimple expression,  as we did for~$k=1$.

\begin{prop}
\proplabel{resimple automaton}
Let $R=(c_1,\ldots,c_k)+{B}^\oplus$ be an atomic resimple expression denoting a set $S \subseteq \mathbb{N}^k$. We can build an automaton $\mathcal{A}$ that accepts the language $\varphi^{-1}(S) \subseteq A^*$, where $\varphi$ is the commutative image.
\end{prop}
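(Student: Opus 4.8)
The plan is to exploit the product decomposition of $S$ already obtained in \propref{atomic resimple denote recognizable} and then realise $\varphi^{-1}(S)$ as a shuffle of single-letter languages, each of which we know how to recognise by a deterministic automaton.

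First, I would recall the factorisation from \propref{atomic resimple denote recognizable}: since $B$ is a free primary basis, for each index $j$ there is at most one $j$-primary element $p_j\mathbf{e}_j$ in $B$, and the set denoted by $R$ splits as $S = S_1\times\cdots\times S_k$, where
\[
S_j=\begin{cases} c_j+p_j^\oplus & \text{if } p_j\mathbf{e}_j\in B,\\ \{c_j\} & \text{otherwise.}\end{cases}
\]
Next, for each $j$ I would build a deterministic automaton $\mathcal{A}_j$ over the one-letter alphabet $\{a_j\}$ accepting $L_j=\{a_j^n : n\in S_j\}$. When $S_j=\{c_j\}$ this is a simple path of $c_j$ transitions ending in a single accepting state; when $S_j=c_j+p_j^\oplus$ it is exactly the ``snail'' automaton of Figure~\ref{fig:automatacaracol}, instantiated with $c=c_j$ and $p=p_j$. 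In either case $\mathcal{A}_j$ is deterministic, and note that $p_j>0$ since $p_j\mathbf{e}_j$ is primary, so the construction is always well defined.

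Then I would regard each $\mathcal{A}_j$ as an automaton over the full alphabet $A$ — this adds no transitions, as $\mathcal{A}_j$ only reads $a_j$ — and form the shuffle product automaton $\mathcal{A}=\mathcal{A}_1\between\cdots\between\mathcal{A}_k$. Because the letter-alphabets $\{a_1\},\dots,\{a_k\}$ are pairwise disjoint, the proposition on shuffles over disjoint alphabets guarantees that $\mathcal{A}$ is deterministic, and the identity $L(\mathcal{A}_1\between\cdots\between\mathcal{A}_n)=L(\mathcal{A}_1)\between\cdots\between L(\mathcal{A}_n)$ yields $L(\mathcal{A})=L_1\between\cdots\between L_k$.

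The one step that demands a genuine argument is verifying that $L_1\between\cdots\between L_k=\varphi^{-1}(S)$, and this is where I expect the main obstacle to lie, though it is more conceptual than computational. The key observation is that a word $w\in A^*$ belongs to the shuffle if and only if it is obtained by interleaving words $a_1^{n_1},\dots,a_k^{n_k}$ with each $n_j\in S_j$; since the letters $a_1,\dots,a_k$ are distinct, such an interleaving of $w$ exists precisely when $|w|_{a_j}=n_j$ for every $j$, that is, when $|w|_{a_j}\in S_j$ for all $j$. This last condition says exactly that $\varphi(w)=(|w|_{a_1},\dots,|w|_{a_k})\in S_1\times\cdots\times S_k=S$, i.e.\ $w\in\varphi^{-1}(S)$. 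Combining this with $L(\mathcal{A})=L_1\between\cdots\between L_k$ shows that the deterministic automaton $\mathcal{A}$ accepts $\varphi^{-1}(S)$, completing the construction.
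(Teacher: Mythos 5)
Your proposal is correct and follows essentially the same route as the paper's own proof: decompose $S$ as $S_1\times\cdots\times S_k$ via \propref{atomic resimple denote recognizable}, build a one-letter automaton per coordinate (the automaton of Figure~\ref{fig:automatacaracol}), take the shuffle product over the disjoint letter-alphabets, and conclude by observing that membership in the shuffle is equivalent to $|w|_{a_j}\in S_j$ for every $j$. If anything, you are slightly more careful than the paper, since you treat the finite case $S_j=\{c_j\}$ explicitly and spell out the shuffle-equals-preimage argument that the paper compresses into one sentence.
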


\begin{proof}
By \propref{atomic resimple denote recognizable} we know that $S$ can be written in the form
$S_1 \times \cdots \times S_k$, where each $S_j$ is denoted by a resimple expression $S_j=c_j+p_j^\oplus$. For each coordinate $j$ we can obtain a complete  automaton $\mathcal{A}_j$ labelled over the alphabet $\{a_j\}$ that accepts $\varphi^{-1}(S_j)$, see \autoref{fig:automatacaracol}.
Consider the automaton $\mathcal{A}_j$ as labelled in the larger alphabet $A=\{a_1,\ldots, a_k\}$. Then the shuffle product
$\mathcal{A} = \mathcal{A}_1 \between\dots\between\mathcal{A}_k$
accepts the language $\varphi^{-1}(S)$.   Observe that $w\in L(\mathcal{A})$, if and only if, for every coordinate $j$ we have  $|w|_{a_j}\in S_j$, which is equivalent to $\varphi(w)\in S$.
\end{proof}

\usetikzlibrary{patterns}
\begin{figure}[t]
\begin{subfigure}{.5\textwidth}
 \centering
\begin{tikzpicture}[scale=0.6]
\begin{axis}[
 axis x line=center,
 axis y line=center,
 xmin=0,
 ymin=0,
 xmax=6.5,
 ymax=6.5,
 xlabel={$x$},
 xlabel style={below right},
 ylabel={$y$},
 ylabel style={above left},
 legend columns=3,
 legend cell align=left,
 grid=major,
 legend style={ at= {(1,-0.1)},},
 transpose legend
 ]

 \foreach \i in {1,...,6} {
 \foreach \j in {0,...,6}
 \addplot [only marks, orange] coordinates {(\j,\i)};
 }
\end{axis}
\end{tikzpicture}
 \caption{$(0,1)+\big((1,0)\cup(0,1)\big)^\oplus$}
 \label{fig:grafosineje}
\end{subfigure}%
\begin{subfigure}{.5\textwidth}

\centering
\begin{tabular}{m{1.5cm} | m{5cm} }
\huge{\ ${\between}$}
&
\begin{tikzpicture}[automaton, auto]
\node[state, initial] (0){};
\node[state, accepting] (1) [right =of 0]{};
 \path[->]
 (0) edge node [above] { $(0,1)$ } (1)
 (1) edge [loop right] node {$(0,1)$} ();
\end{tikzpicture}\\
\hline
\\
\begin{tikzpicture}[automaton, auto]
\node[state, initial, accepting] (0){};
 \path[->]
 (0) edge [loop above] node {$(1,0)$} ();
\end{tikzpicture}&
\begin{tikzpicture}[automaton, auto]

\node[state, initial] (0){};
\node[state, accepting] (1) [right =of 0]{};
 \path[->]
 (0) edge node [above] { $(0,1)$ } (1)
 (0) edge [loop above] node {$(1,0)$} ()
 (1) edge [loop above] node {$(1,0)$} ()
 (1) edge [loop right] node {$(0,1)$} ();
\end{tikzpicture}
\end{tabular}
\caption{Construction of the  automaton.}
\label{fig:automatasineje}
\end{subfigure}
\caption{Original set and the automata construction of \propref{resimple automaton}.}
\end{figure}

 \begin{rem}
The  automata in the examples accept languages in $A^*$, where $A = \{\mathbf{e}_1, \ldots, \mathbf{e}_k\}$ is the set of the generators of $\mathbb{N}^k$ considered as an alphabet.
For clarity, all examples are for the case~$\mathbb{N}^2$.
From now on, the automaton for a resimple expression $R$ will denote  the  automaton that accepts $\varphi^{-1}(R)$.
\end{rem}
\begin{ej} The expression 
$R=(0,1)+\big((1,0)\cup(0,1)\big)^\oplus$ is  resimple,
see \autoref{fig:grafosineje}.
The automaton for this expression is the shuffle product between the automata for $R_x=1^\oplus$ and the automata for $R_y=1+1^\oplus$, see \autoref{fig:automatasineje}.
\end{ej}

It is possible to calculate the exact size of the automaton built in \propref{resimple automaton}.
\begin{prop}
\proplabel{cotaautomata}
 Let $R=(c_1,\ldots,c_k)+{B}^\oplus$ be an atomic resimple expression. For each coordinate $j$, let $p_j\in {\mathbb N}$ be
$$
p_j = \begin{cases}
n & \text{ if } n \mathbf{e}_j \in B\\
2 & \text{ otherwise}.
\end{cases}
$$
Then, 
the complete automaton 
defined in  \propref{resimple automaton}  accepts $\varphi^{-1}(R)$ and it has exactly 
$\prod\limits_{j=1}^k c_j+p_j$ states.
\end{prop}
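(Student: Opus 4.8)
The plan is to count states coordinate by coordinate and then multiply, exploiting the fact that the shuffle product automaton of \propref{resimple automaton} has as its state set the Cartesian product of the state sets of its factors. By \propref{atomic resimple denote recognizable} the set denoted by $R$ factors as $S_1\times\cdots\times S_k$, where for each $j$ we have $S_j=c_j+p_j^\oplus$ when $p_j\mathbf{e}_j\in B$ and $S_j=\{c_j\}$ otherwise. The automaton $\mathcal{A}=\mathcal{A}_1\between\cdots\between\mathcal{A}_k$ constructed there already accepts $\varphi^{-1}(R)$, so it only remains to determine the number of states of each complete single-letter automaton $\mathcal{A}_j$ and to check that in both cases this number equals $c_j+p_j$.

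First I would treat the case $p_j\mathbf{e}_j\in B$, so that $S_j=c_j+p_j^\oplus$ and $\varphi^{-1}(S_j)=\{a_j^m : m\in c_j+p_j^\oplus\}$. The automaton of \autoref{fig:automatacaracol} recognising this language consists of a path of $c_j$ edges running from the initial state to the first accepting state, together with a cycle of $p_j$ edges through that accepting state. The off-cycle part of the path contributes $c_j$ states and the cycle contributes $p_j$ states; since every state already carries an outgoing $a_j$-transition, the automaton is complete with exactly $c_j+p_j$ states. The edge case $c_j=0$ is covered uniformly, the ``path'' being empty and the cycle alone giving $p_j=c_j+p_j$ states.

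Next I would treat the case in which $B$ has no $j$-primary element, so that $S_j=\{c_j\}$, $p_j=2$, and $\varphi^{-1}(S_j)=\{a_j^{c_j}\}$. An acyclic automaton accepting the single word $a_j^{c_j}$ uses a chain of $c_j+1$ states ending in the unique accepting state; to make it complete one adds a single sink state absorbing every $a_j$-transition leaving the accepting state or the sink itself. This yields exactly $(c_j+1)+1=c_j+2=c_j+p_j$ states, and again the edge case $c_j=0$ (the word $\lambda$) is handled uniformly, producing the initial accepting state together with the sink.

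Finally, since $\mathcal{A}$ is the shuffle product $\mathcal{A}_1\between\cdots\between\mathcal{A}_k$, its state set is $Q_1\times\cdots\times Q_k$, whence $|Q|=\prod_{j=1}^k|Q_j|=\prod_{j=1}^k(c_j+p_j)$, as claimed. The one point demanding care, and the only real obstacle, is \emph{completeness}: the statement asserts an exact count rather than an upper bound, and this exactness depends on each $\mathcal{A}_j$ being complete. Completeness is automatic in the arithmetic-progression case but forces the extra sink state in the singleton case, and it is precisely this sink that makes the two cases agree with the single formula $c_j+p_j$.
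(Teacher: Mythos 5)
Your proposal is correct and follows essentially the same route as the paper: the paper's own justification is the construction of \propref{resimple automaton} (one single-letter automaton per coordinate, combined by the shuffle product whose state set is the Cartesian product) together with the remark that a coordinate without a $j$-primary element needs $c_j+1$ chain states plus one sink state to be complete, explaining $p_j=2$. Your coordinate-by-coordinate count (tail of $c_j$ states plus cycle of $p_j$ states in the periodic case, chain plus sink in the singleton case) and the final multiplication match the paper's argument exactly, with the completeness issue handled the same way.
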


\begin{rem}
If there is no $ j$-primary element in $B$ then 
the resimple expression for that component,
 $R_j=\gamma_j$,
denotes a finite set. Therefore, the  automaton for $R_j$ would have $\gamma_j+1$ states. However,  to make it {complete} it is necessary to add a sink state, having $\gamma_j+2$ states in total. This explains $p_j = 2$.
\end{rem}

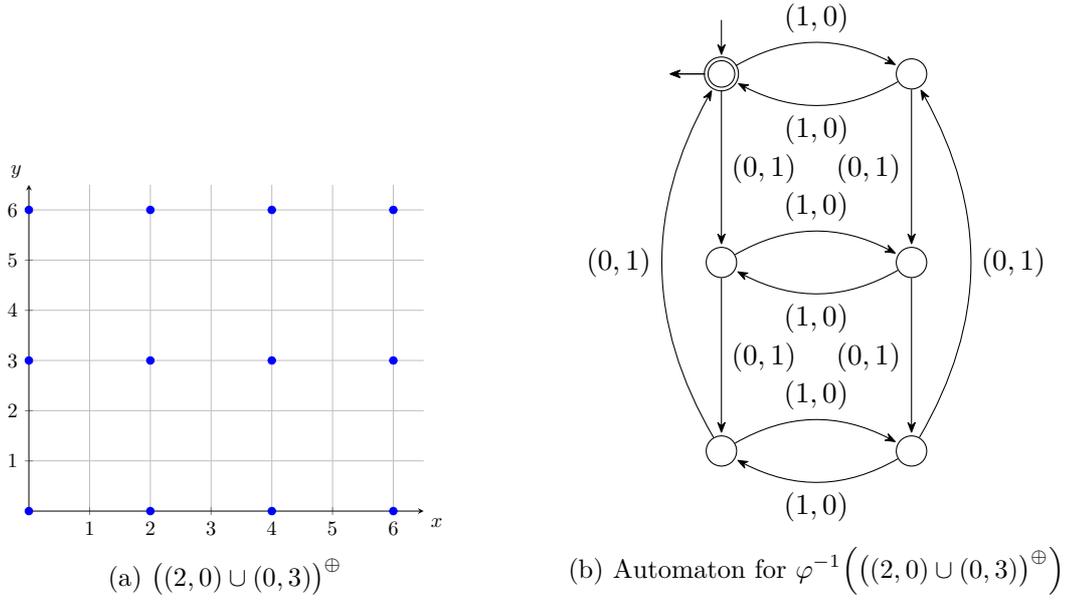
\begin{figure}[t]
\vspace*{-1cm}
\begin{subfigure}{.5\textwidth}
 \centering
\usetikzlibrary{patterns}
\begin{tikzpicture}[scale=0.7]
\begin{axis}[
 axis x line=center,
 axis y line=center,
 xmin=0,
 ymin=0,
 xmax=6.5,
 ymax=6.5,
 xlabel={$x$},
 xlabel style={below right},
 ylabel={$y$},
 ylabel style={above left},
 legend columns=3,
 legend cell align=left,
 grid=major,
 legend style={ at= {(1,-0.1)},},
 transpose legend
 ]

 \foreach \i in {0,2,...,6} {
 \foreach \j in {0,3,...,6}
 \addplot [only marks, blue] coordinates {(\i,\j)};
 }
\end{axis}
\end{tikzpicture}
 \caption{$\big((2,0)\cup(0,3)\big)^\oplus$}
 \label{fig:graph23}
\end{subfigure}%
\begin{subfigure}{.5\textwidth}
 \centering
\begin{tikzpicture}[automaton, auto, node distance = 2.5cm]

\node[state, initial above, accepting left, accepting] (00) []{};
\node[state] (01) [right of=00]{};
\node[state] (10) [below of=00]{};
\node[state] (20) [below of=10]{};
\node[state] (11) [below of=01]{};
\node[state] (21) [below of=11]{};

 \path[->]
 (00) edge [bend left] node [above] { $(1,0)$ } (01)
 (01) edge [bend left] node [below] { $(1,0)$ } (00)
 (10) edge [bend left] node [above] { $(1,0)$ } (11)
 (11) edge [bend left] node [below] { $(1,0)$ } (10)
 (20) edge [bend left] node [above] { $(1,0)$ } (21)
 (21) edge [bend left] node [below] { $(1,0)$ } (20)

 (00) edge node [right] { $(0,1)$ } (10)
 (10) edge node [right] { $(0,1)$ } (20)
 (20) edge [bend left] node [left] { $(0,1)$ } (00)

 (01) edge node [left] { $(0,1)$ } (11)
 (11) edge node [left] { $(0,1)$ } (21)
 (21) edge [bend right] node [right] { $(0,1)$ } (01);
\end{tikzpicture}
 \caption{Automaton for $\varphi^{-1}\Big(\big((2,0)\cup(0,3)\big)^\oplus\Big)$} 
 \end{subfigure}
 \caption{Original set and the resulting automaton.}
  \label{fig:automata23}
\end{figure}

\begin{ej} An atomic resimple expression and its automaton appear  in \autoref{fig:automata23}.
\end{ej}

Yet another instance of an atomic resimple expression can be found in  \ejref{even}.

\begin{define}[Resimple expression of a set of $\mathbb{N}^k$]
\deflabel{resimple}
A resimple expression of a recognizable set of $\mathbb{N}^k$ is a formula that is obtained inductively from atomic resimple expressions and the boolean operations: union, intersection and complement $(\cup,\cap \text{ and } ^c)$:
An atomic resimple expression is resimple.
If $E, F$ are resimple expressions then $E\cup F$, $E\cap F$ and $E^c$ are resimple expressions.
\end{define}

\begin{figure}[t]
 \centering

\usetikzlibrary{patterns}
\begin{tikzpicture}[scale=0.7]
\begin{axis}[
 axis x line=center,
 axis y line=center,
 xmin=0,
 ymin=0,
 xmax=6.5,
 ymax=6.5,
 xlabel={$x$},
 xlabel style={below right},
 ylabel={$y$},
 ylabel style={above left},
 legend columns=3,
 legend cell align=left,
 grid=major,
 legend style={ at= {(1.7,0.2)},},
 transpose legend
 ]

\addplot [only marks, purple] coordinates {(0,0)};
\addlegendentry{$\big((2,0)\cup(0,1)\big)^\oplus$ }

\addplot [ mark=square*,only marks, olive] coordinates {(1,1)};
\addlegendentry{$(1,1)+\big((2,0)\cup(0,1)\big)^\oplus$}

 \foreach \i in {0,2,...,6} {
 \foreach \j in {0,...,6}
 \addplot [only marks, purple] coordinates {(\i,\j)};
 }

 \foreach \i in {1,3,...,6} {
 \foreach \j in {1,...,6}
 \addplot [mark=square*, only marks, olive] coordinates {(\i,\j)};
 }
\end{axis}
\end{tikzpicture}
 \caption{$\Big((0,1)+\big((1,0)\cup(0,1)\big)^\oplus \Big) \cup \Big( \big((2,0)\cup(0,3)\big) ^\oplus \Big)$}
 \label{fig:grafounion}
\end{figure}

\begin{ej}
\ejlabel{Sc}
$(0,1)+((1,0)\cup(0,1))^\oplus \cup ((2,0)\cup(0,3))^\oplus$ is resimple, see \autoref{fig:grafounion}.
\end{ej}
\begin{prop}
\proplabel{half face}
 Let $S$ be a set of $\mathbb{N}^k$.
 If $S$ has a resimple expression that denotes it then $S $ a recognizable set in  $\mathbb{N}^k)$.
\end{prop}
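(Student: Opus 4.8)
The plan is to proceed by structural induction on the resimple expression, following the inductive definition in \defref{resimple}. The single guiding observation is that the class of recognizable sets of $\mathbb{N}^k$ is closed under the boolean operations, so it forms a boolean algebra; this is exactly the stability one needs to climb the inductive construction of a resimple expression.

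For the base case, if $E$ is an atomic resimple expression, then the set it denotes is recognizable directly by \propref{atomic resimple denote recognizable}, which I may assume. For the inductive step, suppose $E$ and $F$ are resimple expressions denoting sets $S$ and $T$ that are already known to be recognizable; by definition this means $\varphi^{-1}(S)$ and $\varphi^{-1}(T)$ are regular languages of $A^*$. I would then invoke the elementary identities for preimages under a function,
\[
\varphi^{-1}(S \cup T) = \varphi^{-1}(S) \cup \varphi^{-1}(T),
\qquad
\varphi^{-1}(S \cap T) = \varphi^{-1}(S) \cap \varphi^{-1}(T),
\]
\[
\varphi^{-1}(S^c) = A^* \setminus \varphi^{-1}(S),
\]
where the complement of $S$ is taken inside $\mathbb{N}^k$ and the complement of $\varphi^{-1}(S)$ inside $A^*$. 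Since regular languages are closed under union, intersection and complement, each right-hand side is regular, so the sets denoted by $E \cup F$, $E \cap F$ and $E^c$ are recognizable. This closes the induction and yields the claim.

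I do not expect a genuine obstacle: the argument is a routine closure computation resting on the fact that $\varphi^{-1}$ transports boolean operations to boolean operations. The only point that deserves a word of care is the complement clause, where one must check that the complement taken in $\mathbb{N}^k$ pulls back exactly to the complement taken in $A^*$. This holds because $\varphi$ is a total function defined on all of $A^*$, so $w \in \varphi^{-1}(S^c)$ if and only if $\varphi(w) \notin S$ if and only if $w \notin \varphi^{-1}(S)$; in particular the identity does not rely on any injectivity or surjectivity of $\varphi$, only on it being well defined everywhere.
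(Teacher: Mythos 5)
Your proof is correct and follows essentially the same route as the paper: structural induction with \propref{atomic resimple denote recognizable} as the base case and closure of recognizable sets under boolean operations as the inductive step. The only difference is cosmetic --- the paper cites the fact that recognizable sets of $\mathbb{N}^k$ form a Boolean algebra, whereas you derive it directly from the paper's definition of recognizability via the identities $\varphi^{-1}(S\cup T)=\varphi^{-1}(S)\cup\varphi^{-1}(T)$, $\varphi^{-1}(S\cap T)=\varphi^{-1}(S)\cap\varphi^{-1}(T)$, $\varphi^{-1}(S^c)=A^*\setminus\varphi^{-1}(S)$ and the closure of regular languages under boolean operations, which makes your argument slightly more self-contained.
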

\begin{proof}
The family of recognizable sets of $\mathbb{N}^k$ forms a Boolean algebra (see~\cite{Sakarovitch}) and by \propref{atomic resimple denote recognizable} we know that atomic resimple expressions denote recognizable sets.
\end{proof}

\begin{define}[Consistent resimple expression]
An expression is resimple {consistent} if it is resimple and all the bases of its atoms are the same.
\end{define}

\begin{ej*}[Continuation of \ejref{Sc}]
The resimple expression $\Big((0,1)+\big((1,0)\cup(0,1)\big)^\oplus \Big) \cup \big((2,0)\cup(0,3)\big)^\oplus$, is not consistent because  $\{(1,0),(0,1)\}\neq\{(2,0),(0,3)\}$.
However, \linebreak
$\big((2,0)\cup(0,1)\big)^\oplus \cup \Big((1,1)+\big((2,0)\cup(0,1)\big)^\oplus \Big)$ is  a consistent resimple expression that denotes the same set.
\end{ej*}

\begin{thm}
\thmlabel{automata}
For any resimple expression denoting a recognizable set
 $S$ in $\mathbb{N}^k$  an automaton can be effectively constructed to recognize the language $\varphi^{-1}(S) \subseteq A^*$, where $\varphi$ is the commutative image.
\end{thm}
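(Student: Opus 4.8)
The plan is to argue by structural induction on the resimple expression, reducing the Boolean cases to the standard effective closure properties of deterministic automata. For the base case, when the expression is atomic resimple, \propref{resimple automaton} already produces a deterministic automaton accepting $\varphi^{-1}(S)$, and the underlying construction of \propref{cotaautomata} delivers a \emph{complete} one. I would record completeness explicitly in the base case, since it is precisely what makes complementation legitimate in the inductive step.

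For the inductive step, the crucial observation is that taking preimages under $\varphi$ commutes with the Boolean operations:
\[
\varphi^{-1}(S\cup T)=\varphi^{-1}(S)\cup\varphi^{-1}(T),\qquad
\varphi^{-1}(S\cap T)=\varphi^{-1}(S)\cap\varphi^{-1}(T),\qquad
\varphi^{-1}(S^c)=A^*\setminus\varphi^{-1}(S),
\]
each being immediate from the definition of the inverse image of a function. Thus, given resimple subexpressions $E$ and $F$ for which the induction hypothesis supplies complete deterministic automata $\mathcal{A}_E$ and $\mathcal{A}_F$ accepting $\varphi^{-1}(E)$ and $\varphi^{-1}(F)$, it suffices to combine them with the usual automaton constructions. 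Concretely, for $E^c$ I would complement the set of final states of $\mathcal{A}_E$, obtaining a complete deterministic automaton for $A^*\setminus\varphi^{-1}(E)=\varphi^{-1}(E^c)$; for $E\cap F$ I would take the product automaton of $\mathcal{A}_E$ and $\mathcal{A}_F$ with final states the pairs of final states; and for $E\cup F$ either the product automaton with the dual acceptance condition or, equivalently, De Morgan via complement and intersection.

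The one point requiring care—rather than a genuine obstacle—is maintaining both determinism and completeness throughout the induction, so that the complement step is always valid. Since the atomic automata are complete and deterministic, and both the product construction and the state-complementation construction preserve these two properties, nothing is lost at any node of the expression tree. Every operation involved is effective, so the induction terminates with an explicitly constructed complete deterministic automaton recognizing $\varphi^{-1}(S)$, as required.
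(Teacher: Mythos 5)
Your proposal is correct and follows essentially the same route as the paper: the paper's proof simply invokes \propref{resimple automaton} for the atomic case together with the fact that regular languages form an effective Boolean algebra, and your induction (with the observation that $\varphi^{-1}$ commutes with union, intersection, and complement, and that completeness and determinism are preserved) is just the explicit unpacking of that one-line argument.
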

\begin{proof}
This is an immediate consequence of the \propref{resimple automaton} and the fact that regular languages in $A^*$ form an effective boolean algebra.
\end{proof}

Propositions~\ref{prop:intersection_is_regular} and~\ref{prop:intersection_is_deterministic} below are standard and can be found in any automata theory book (see for instance~\cite{Sakarovitch}). We give the proof of Proposition~\ref{prop:intersection_is_regular} in detail to introduce the intersection automaton.

\begin{prop}
\label{prop:intersection_is_regular}
The intersection of two regular languages in $A^*$ is regular.
\end{prop}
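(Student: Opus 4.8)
The plan is to prove that the intersection of two regular languages is regular by the standard \emph{product automaton} construction, which will also serve to introduce the intersection automaton for later use. First I would take finite state automata $\mathcal{A}' = \diples{Q', A, E', I', T'}$ and $\mathcal{A}'' = \diples{Q'', A, E'', I'', T''}$ accepting the two given languages $L' = L(\mathcal{A}')$ and $L'' = L(\mathcal{A}'')$. Since regularity is witnessed by the existence of an accepting automaton, it suffices to exhibit a single finite state automaton whose language is exactly $L' \cap L''$.

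The construction I would present runs both automata \emph{in parallel} on the same input. I would define the product automaton $\mathcal{A} = \diples{Q' \times Q'', A, E, I' \times I'', T' \times T''}$, where a transition reads a single letter $a$ and advances both components simultaneously:
\begin{align*}
E = \{((p',p''),a,(q',q'')) \ :\ (p',a,q') \in E' \text{ and } (p'',a,q'') \in E''\}.
\end{align*}
The state set $Q' \times Q''$ is finite because $Q'$ and $Q''$ are, so $\mathcal{A}$ is a finite state automaton. Note the contrast with the shuffle product automaton defined earlier, where each letter advances only one component; here both components must agree on the same letter at each step, which is exactly what enforces acceptance by \emph{both} automata.

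The key step is then to verify that $L(\mathcal{A}) = L' \cap L''$. I would argue by a straightforward induction on the length of a word $w$ that there is a computation of $\mathcal{A}$ from $(i',i'') \in I' \times I''$ to a state $(q',q'')$ labelled by $w$ if and only if there is a computation of $\mathcal{A}'$ from $i'$ to $q'$ labelled by $w$ \emph{and} a computation of $\mathcal{A}''$ from $i''$ to $q''$ labelled by $w$. A word $w$ is accepted by $\mathcal{A}$ exactly when such a computation ends in $T' \times T''$, that is, when $q' \in T'$ and $q'' \in T''$, which by the induction is equivalent to $w \in L(\mathcal{A}')$ and $w \in L(\mathcal{A}'')$, i.e. $w \in L' \cap L''$.

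This proof is essentially mechanical; there is no genuine obstacle, only bookkeeping. The one point requiring a little care is the base case and the synchronisation in the inductive step: both component computations must read the \emph{same} letter at the \emph{same} position, which is precisely what the definition of $E$ guarantees, and this is where the argument differs from the shuffle case. Since the sole purpose here is to introduce the intersection automaton for the subsequent development, I would keep the induction brief and emphasise the definition of $\mathcal{A}$ itself.
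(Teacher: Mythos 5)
Your proposal is correct and coincides with the paper's own proof: it uses exactly the same product automaton $\diples{Q'\times Q'', A, E, I'\times I'', T'\times T''}$ with the same synchronised transition set, the only difference being that you spell out the routine induction that the paper summarises as ``it is easy to see.''
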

\begin{proof}
Let $\mathcal{A'} = \diples{ Q', A, E', I', T' }$ and $\mathcal{A''} = \diples{ Q'', A, E'', I'', T'' }$.
We define $\mathcal{A}=\mathcal{A'}\cap\mathcal{A''}$ as $\mathcal{A} = \diples{ Q' \times Q'', A, E, I' \times I'', T' \times T''}$, with 
\begin{align*}
E=&\{((p',p''),a,(q',q'')): (p',a,q')\in E'\text{ and }(p'',a,q'')\in E''\}.
\end{align*}
It is easy to see that $L(\mathcal{A})=L(\mathcal{A'})\cap L(\mathcal{A''})$.
\end{proof}
The automaton $\mathcal{A'}\cap\mathcal{A''}$ that we used in the previous proof is called the \textit{intersection automaton} between the automata $\mathcal{A'}$ and $\mathcal{A''}$ and it is a general construction that allows us to obtain an automaton to recognise the intersection of two languages from two automata that recognise each of those languages. 
\begin{prop}
\label{prop:intersection_is_deterministic}
If $\mathcal{A'}$ and $\mathcal{A''}$ are  deterministic automata, then the intersection automaton $\mathcal{A} = \mathcal{A'}\cap\mathcal{A''}$ is also deterministic.
\end{prop}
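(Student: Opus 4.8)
The plan is to verify the two defining conditions of determinism directly from the construction of the intersection automaton $\mathcal{A}=\mathcal{A'}\cap\mathcal{A''}$ given in the proof of Proposition~\ref{prop:intersection_is_regular}. Recall that an automaton is deterministic when it has a unique initial state and, for every state $p$ and letter $a$, at most one transition labelled $a$ leaves $p$. Since the construction produces $\mathcal{A}=\diples{Q'\times Q'',A,E,I'\times I'',T'\times T''}$, both conditions should follow by reading them off componentwise from the corresponding conditions for $\mathcal{A'}$ and $\mathcal{A''}$.

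First I would check the initial-state condition. The set of initial states of $\mathcal{A}$ is $I'\times I''$. Because $\mathcal{A'}$ and $\mathcal{A''}$ are deterministic, each has exactly one initial state, so $|I'|=|I''|=1$ and therefore $|I'\times I''|=1$, as required. Second, I would check the transition condition. Fix a state $(p',p'')\in Q'\times Q''$ and a letter $a\in A$, and suppose $((p',p''),a,(q_1',q_1''))$ and $((p',p''),a,(q_2',q_2''))$ both belong to $E$. By the definition of $E$ this means $(p',a,q_1'),(p',a,q_2')\in E'$ and $(p'',a,q_1''),(p'',a,q_2'')\in E''$. Applying the determinism of $\mathcal{A'}$ gives $q_1'=q_2'$, and applying the determinism of $\mathcal{A''}$ gives $q_1''=q_2''$, so the two transitions coincide. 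Hence at most one transition labelled $a$ leaves $(p',p'')$.

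There is no real obstacle here: the argument is the same bookkeeping used in the earlier determinism proof for the shuffle product automaton, only simpler, since in the intersection automaton each letter advances both components simultaneously rather than exactly one of them. The only point that deserves explicit mention is that \emph{both} clauses of the definition of determinism must be verified, and that the pair $(q',q'')$ reached from $(p',p'')$ by reading $a$ is determined coordinate by coordinate by the two underlying automata. I would therefore present the proof as the two short paragraphs above, noting that the uniqueness of the target pair is exactly the conjunction of the two coordinatewise uniqueness statements.
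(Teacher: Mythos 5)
Your proof is correct: the paper itself omits a proof of this proposition, noting only that it is standard (citing Sakarovitch), and your argument is exactly the standard one that is being alluded to. Both clauses of the paper's definition of determinism are verified cleanly — the product of two singleton initial-state sets is a singleton, and the componentwise uniqueness of transitions in $\mathcal{A'}$ and $\mathcal{A''}$ forces uniqueness of the target pair — so nothing is missing.
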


We are interested in seeing that the  intersection automaton does not grow significantly.

\begin{prop}
\proplabel{inter}
If $\mathcal{A'}$ and $\mathcal{A''}$ are defined from resimple expressions that are consistent with each other, then the number of states in the intersection automaton $\mathcal{A} = \mathcal{A'} \cap \mathcal{A''}$ for one coordinate is at most the maximum between the number of states in $\mathcal{A'}$ and $\mathcal{A''}$ for that coordinate.
\end{prop}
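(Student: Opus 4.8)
The plan is to push the whole computation down to a single coordinate, exploiting that both automata are shuffle products over the $k$ coordinates. By \propref{resimple automaton} the automaton of an atom with base $B$ is a shuffle product $\mathcal{A}_1\between\cdots\between\mathcal{A}_k$ in which the factor $\mathcal{A}_j$ reads only the letter $a_j$; since the boolean constructions used to build a general resimple expression respect this decomposition (a letter $a_j$ only ever changes the $j$-th coordinate of the state, whatever the atom), the automaton of a consistent resimple expression is again a coordinatewise shuffle product $\mathcal{A}_1\between\cdots\between\mathcal{A}_k$ of one-letter automata. As $\mathcal{A}'$ and $\mathcal{A}''$ are consistent with each other they share one base $B$, so I would first record the identity
\begin{equation*}
\mathcal{A}'\cap\mathcal{A}'' \;=\; (\mathcal{A}'_1\cap\mathcal{A}''_1)\between\cdots\between(\mathcal{A}'_k\cap\mathcal{A}''_k),
\end{equation*}
which holds because on the letter $a_j$ only the $j$-th factors of $\mathcal{A}'$ and of $\mathcal{A}''$ move, so the product construction never mixes distinct coordinates. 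Hence the number of states of $\mathcal{A}$ for coordinate $j$ is the number of reachable states of the one-letter automaton $\mathcal{A}'_j\cap\mathcal{A}''_j$, and it suffices to bound this for a fixed $j$.

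The second step is to describe each one-letter factor. Over the alphabet $\{a_j\}$ the state reached depends only on the number $n$ of letters read, so the reachable part of $\mathcal{A}'_j$ is a \emph{snail}: a tail followed by a cycle, exactly as in Figure~\ref{fig:automatacaracol}. Because the base $B$ is the same for both automata, the cycle length in coordinate $j$ is one and the same number $p_j$ for $\mathcal{A}'_j$ and $\mathcal{A}''_j$ (the size of the $j$-primary element of $B$, or the degenerate sink of \propref{cotaautomata} when $B$ has no $j$-primary element). Writing $t'_j$ and $t''_j$ for the two tail lengths, the factors have $n'_j=t'_j+p_j$ and $n''_j=t''_j+p_j$ reachable states respectively.

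Finally I would run the two snails in lockstep. After reading $a_j^{\,n}$ the state of $\mathcal{A}'_j\cap\mathcal{A}''_j$ is the pair of the two individual states. While $n<\max(t'_j,t''_j)$ the longer of the two tails still separates these values, so the pairs are pairwise distinct and contribute $\max(t'_j,t''_j)$ tail states. Once $n\ge\max(t'_j,t''_j)$ both factors sit in their cycles; here is the decisive point: the two cycles have the same length $p_j$ and their positions are both governed by the single quantity $n$, so the pair is determined by $n\bmod p_j$ alone. The joint motion is therefore periodic of period exactly $p_j$, disjoint from the tail, contributing $p_j$ further states. Summing, $\mathcal{A}'_j\cap\mathcal{A}''_j$ has $\max(t'_j,t''_j)+p_j=\max(n'_j,n''_j)$ reachable states, which is the claimed bound.

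The main obstacle is precisely this last synchronization: for two ultimately periodic one-letter automata the product normally acquires a cycle of length $\mathrm{lcm}(p'_j,p''_j)$, which can be strictly larger than either factor. What prevents the blow-up is the consistency hypothesis, which forces $p'_j=p''_j=p_j$ and collapses the joint cycle back to length $p_j$. I would therefore be careful to state exactly where equality of the periods is used, and to verify that the tail states and the cyclic states stay disjoint (the longer tail does the separating), so that the count is neither inflated by duplicated states nor deflated by accidental mergers.
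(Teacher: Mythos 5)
Your proof is correct and takes essentially the same route as the paper's: the paper's (very terse) argument is exactly that consistency forces the same period $p_j$ in each coordinate, so the states of the two per-coordinate snail automata pair up and the intersection in each coordinate is just the larger of the two, possibly with different final states (Figure~\ref{fig:pairing}). Your version simply makes explicit what the paper leaves to the figure, namely the coordinatewise shuffle decomposition of the product automaton and the tail/cycle bookkeeping showing the reachable part has $\max(t'_j,t''_j)+p_j$ states.
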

\begin{proof}
Since in each coordinate the two automata end up with the same period, there is a mapping between the states of the two automata. In each coordinate, the intersection of the two automata is just  the bigger one, see \autoref{fig:pairing},  possibly with different  final states.
\end{proof}

\begin{figure}[t]
\vspace*{-4cm}
\rotatebox{-45}{
\begin{tikzpicture}[automaton, auto,
node distance = 0.8cm]

\node[state, initial] (B){};
\node[state, fill=gray] (C) [above right =of B]{};
\node[state, fill=brown] (D) [above right =of C]{};
\node[state, fill=yellow] (G) [above right =of D]{};
\node[state, fill=orange] (H) [right =of G]{};
\node[state, fill=pink] (I) [above right =of H]{};
 \node[state, fill=red] (J) [above =of I] {};
 \node[state, fill=violet] (K) [above left =of J]{};
 \node[state, fill=blue] (L) [left =of K]{};
 \node[state, fill=cyan] (M) [below left =of L]{};
 \node[state, fill=Green] (N) [above left =of G]{};
 \path[->] (B) edge node [below] {a} (C);
 \path[->] (C) edge node [below] {a} (D);
 \path[->] (D) edge node [below] {a} (G);
 \path[->] (G) edge [bend right = 15] node [below] {a} (H);
 \path[->] (H) edge [bend right = 15] node [below right] {a} (I);
 \path[->] (I) edge [bend right = 15] node [right] {a} (J);
 \path[->] (J) edge [bend right = 15] node [above right] {a} (K);
 \path[->] (K) edge [bend right = 15] node [above] {a} (L);
 \path[->] (L) edge [bend right = 15] node [above left] {a} (M);
 \path[->] (N) edge [bend right = 15] node [left] {a} (G);
 \path[->] (M) edge [bend right = 15] node {a} (N); \node[state, initial] (B){};
 \node[state, fill=gray] (B1) [right = 4.5cm of B]{};
 \node[state, initial] (A1)[below left = of B1]{};
 \node[state, fill=brown] (C1) [above right =of B1]{};
 \node[state, fill=yellow] (D1) [above right =of C1]{};
 \node[state, fill=orange] (G1) [right = 3.5cm of H]{};
 \node[state, fill=pink] (H1) [right = 3.5cm of I]{};
 \node[state, fill=red ] (I1) [right = 3.5cm of J]{};
 \node[state, fill=violet] (J1) [right = 3.5cm of K] {};
 \node[state, fill=blue] (K1) [right = 3.5cm of L]{};
 \node[state, fill=cyan] (L1) [right = 3.5cm of M]{};
 \node[state, fill=Green] (M1) [right = 3.5cm of N]{};
 \node[state, fill=yellow] (N1) [right = 3.5cm of G]{};
 \path[->] (A1) edge node [below] {a} (B1);
 \path[->] (B1) edge node [below] {a} (C1);
 \path[->] (C1) edge node [below] {a} (D1);
 \path[->] (D1) edge node [below] {a} (G1);
 \path[->] (G1) edge [bend right = 15] node [below right] {a} (H1);
 \path[->] (H1) edge [bend right = 15] node [right] {a} (I1);
 \path[->] (I1) edge [bend right = 15] node [above right] {a} (J1);
 \path[->] (J1) edge [bend right = 15] node [above] {a} (K1);
 \path[->] (K1) edge [bend right = 15] node [above left] {a} (L1);
 \path[->] (M1) edge [bend right = 15] node [left] {a} (N1);
\path[->] (N1) edge [bend right = 15] node [below left] {a}  (G1);
\path[->] (L1) edge [bend right = 15] node {a} (M1);
\end{tikzpicture}
}
\vspace*{-3cm}
\caption{The automata  for two resimple consistent expressions with $k=1$. The intersection automaton is equal to  the bigger automaton (right) since all the states of the smaller automaton (left) are paired up with the states of the bigger one (right), color by color.} 
\label{fig:pairing}
\end{figure}
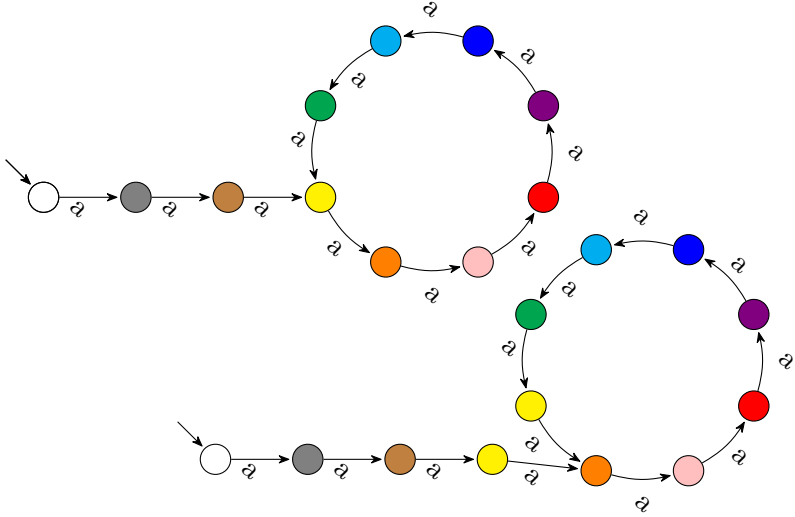

\begin{rem}
If in some coordinate the automaton were complete then the result of the intersection would be complete in that specific coordinate. To make one coordinate complete  
it is  enough to add a single sink state.
\end{rem}

\begin{ej*}[Continuation of \ejref{Sc}]
 Construction of the automaton for \linebreak
 $\big((2,0)\cup(0,1)\big)^\oplus \cup(1,1)+\big((2,0)\cup(0,1)\big)^\oplus$, see \autoref{fig:automata2001112001final}.
 \end{ej*}
 
\begin{figure}[t]
\vspace*{-2cm}
\begin{subfigure}{0.5\textwidth}
 \centering
\begin{tikzpicture}[automaton, auto, node distance = 2.5cm]

\node[state, initial, accepting] (00) []{};
\node[state] (01) [right of=00]{};
 \path[->]
 (00) edge [bend left] node [above] { $(1,0)$ } (01)
 (01) edge [bend left] node [below] { $(1,0)$ } (00)

 (00) edge [loop above] node { $(0,1)$ } ()
 (01) edge [loop above] node { $(0,1)$ } ();
\end{tikzpicture}
\caption{Automaton for $\big((2,0)\cup(0,1)\big)^\oplus$.} 
 \label{fig:automata2001}
\end{subfigure}
\begin{subfigure}{.5\textwidth}
\centering
\begin{tikzpicture}[automaton, auto, node distance = 2.5cm]

\node[state, initial] (00) []{};
\node[state] (01) [right of=00]{};
\node[state] (02) [right of=01]{};
\node[state] (10) [below of=00]{};
\node[state, accepting right, accepting] (11) [right of=10]{};
\node[state] (12) [right of=11]{};
 \path[->]
 (00) edge node { $(1,0)$ } (01)
 (01) edge [bend left] node { $(1,0)$ } (02)
 (02) edge [bend left] node { $(1,0)$ } (01)
 (10) edge node { $(1,0)$ } (11)
 (11) edge [bend left] node { $(1,0)$ } (12)
 (12) edge [bend left] node { $(1,0)$ } (11)

 (00) edge node [left] { $(0,1)$ } (10)
 (01) edge node [left] { $(0,1)$ } (11)
 (02) edge node { $(0,1)$ } (12)
 (10) edge [loop below] node { $(0,1)$ } ()
 (11) edge [loop below] node { $(0,1)$ } ()
 (12) edge [loop below] node { $(0,1)$ } ();
\end{tikzpicture}

 \caption{Automaton for $(1,1)+\big((2,0)\cup(0,1)\big)^\oplus$.} 
 \label{fig:automata112001}
 \end{subfigure}

\begin{subfigure}{.5\textwidth}
 \centering
\begin{tikzpicture}[automaton, auto, node distance = 2.5cm]

\node[state, initial] (00) []{};
\node[state, accepting above, accepting] (01) [right of=00]{};
\node[state] (02) [right of=01]{};
\node[state] (10) [below of=00]{};
\node[state] (11) [right of=10]{};
\node[state] (12) [right of=11]{};
 \path[->]
 (00) edge node { $(1,0)$ } (01)
 (01) edge [bend left] node { $(1,0)$ } (02)
 (02) edge [bend left] node { $(1,0)$ } (01)
 (10) edge node { $(1,0)$ } (11)
 (11) edge [bend left] node { $(1,0)$ } (12)
 (12) edge [bend left] node { $(1,0)$ } (11)

 (00) edge node [left] { $(0,1)$ } (10)
 (01) edge node [left] { $(0,1)$ } (11)
 (02) edge node { $(0,1)$ } (12)
 (10) edge [loop below] node { $(0,1)$ } ()
 (11) edge [loop below] node { $(0,1)$ } ()
 (12) edge [loop below] node { $(0,1)$ } ();
\end{tikzpicture}
 \caption{Automaton for \\$\Big(\big((2,0)\cup(0,1)\big)^\oplus\Big)^c\cap\Big((1,1)+\big((2,0)\cup( 0.1)\big)^\oplus\Big)^c$}\label{fig:automatacomplemento2001112002}
\end{subfigure}
\begin{subfigure}{.5\textwidth}
\begin{tikzpicture}[automaton, auto, node distance = 2.5cm]

\node[state, initial, accepting above, accepting] (00) []{};
\node[state] (01) [right of=00]{};
\node[state, accepting right, accepting] (02) [right of=01]{};
\node[state, accepting left, accepting] (10) [below of=00]{};
\node[state, accepting right, accepting] (11) [right of=10]{};
\node[state, accepting right, accepting] (12) [right of=11]{};
 \path[->]
 (00) edge node { $(1,0)$ } (01)
 (01) edge [bend left] node { $(1,0)$ } (02)
 (02) edge [bend left] node { $(1,0)$ } (01)
 (10) edge node { $(1,0)$ } (11)
 (11) edge [bend left] node { $(1,0)$ } (12)
 (12) edge [bend left] node { $(1,0)$ } (11)

 (00) edge node [left] { $(0,1)$ } (10)
 (01) edge node [left] { $(0,1)$ } (11)
 (02) edge node { $(0,1)$ } (12)
 (10) edge [loop below] node { $(0,1)$ } ()
 (11) edge [loop below] node { $(0,1)$ } ()
 (12) edge [loop below] node { $(0,1)$ } ();
\end{tikzpicture}
 \caption{Final automaton, complement of \autoref{fig:automatacomplemento2001112002}\\\mbox{ }}
 \label{fig:automata2001112001}
 \end{subfigure}
 
 \caption{Construction of the automaton for 
 $\big((2,0)\cup(0,1)\big)^\oplus \cup \Big((1,1)+\big((2,0)\cup(0,1)\big)^\oplus  \Big)$.} \label{fig:automata2001112001final}
\end{figure}
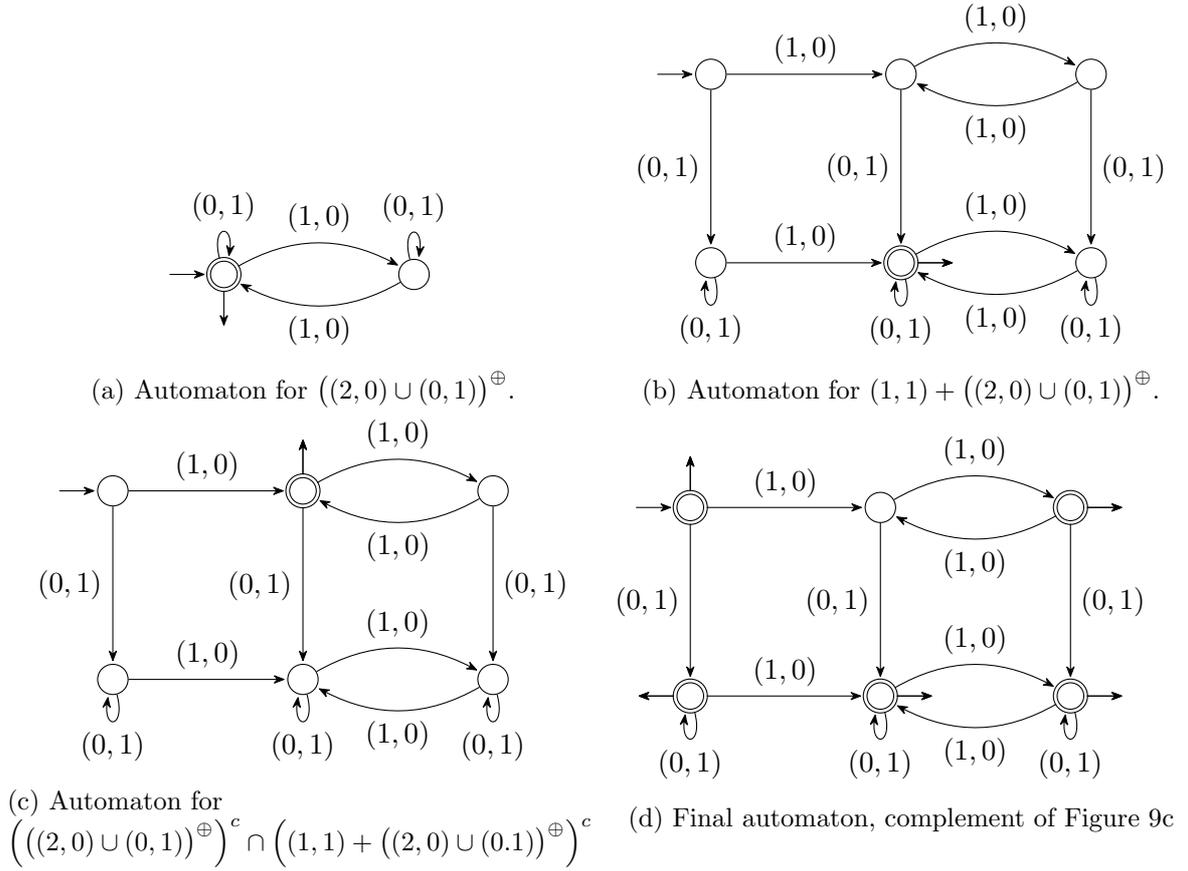

\section{From the polynomials to the resimple expression}

We provide an effective method for obtaining a resimple expression that denotes the given recognizable set, starting from the polynomials of its characteristic series.

Recall that Gohon's algorithm converts the semi-simple expression into a characteristic series that expresses as a fraction of polynomials $P'$ and $Q'$. Then it reduces this fraction by simplifying all the factors of more than one variable in $Q'$, which can be done only if the denoted set is recognizable.  
The obtained denominator $Q$  is a product of polynomials of the form $(1-x_j^{p_j})$ with  $p_j \in \mathbb{N}_{>0}$ , or $1$.
Also, it can be guaranteed that for every coordinate $j$ of $\{1,\ldots,k\}$ there is at most one factor of the form $(1-x_j^{p_j})$.

\begin{lem}[{{Gohon~\cite[Lemma 4.3]{GOHON}}}]
\lemlabel{consistent}
Let $S$ be a rational set of ${\mathbb{N}^k}$. There exists a semi-simple expression denoting $S$ and for each coordinate $j$ the following two conditions hold:
\begin{itemize}
\item There is at most one $j$-primary element in each basis of the expression.
\item If two bases in the expression contain a $j$-primary element, then it is the same in both.
\end{itemize}
\end{lem}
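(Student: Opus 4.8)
The plan is to start from an arbitrary semi-simple expression for $S$ (one exists by \propref{semisimple}), write it as an unambiguous union of simple expressions $S=\bigcup_{i\in I}(\gamma_i+B_i^\oplus)$, and then transform it so that, coordinate by coordinate, all bases agree on their primary element. I would first observe that the first condition is automatic for \emph{any} semi-simple expression: if some free basis $B_i$ contained two distinct $j$-primary elements $p\mathbf{e}_j$ and $q\mathbf{e}_j$ with $0<p<q$, then $pq\,\mathbf{e}_j=q(p\mathbf{e}_j)=p(q\mathbf{e}_j)$ would give two different coefficient tuples representing the same element of $B_i^\oplus$, contradicting non-ambiguity of $B_i^\oplus$, i.e. freeness. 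So only the second condition requires work.

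The key move I would introduce is a refinement that replaces a $j$-primary generator by a chosen multiple of it while staying within the semi-simple class. Suppose a term $\gamma+B^\oplus$ has $B=\{p\mathbf{e}_j\}\cup B'$ with $p\mathbf{e}_j$ its unique $j$-primary element, and let $m=tp$ be a positive multiple of $p$. Using commutativity together with the partition of $(p\mathbf{e}_j)^\oplus$ into residue classes modulo $m$,
\[
(p\mathbf{e}_j)^\oplus=\bigsqcup_{r=0}^{t-1}\bigl(rp\,\mathbf{e}_j+(m\mathbf{e}_j)^\oplus\bigr),
\]
I would rewrite the term as the disjoint union
\[
\gamma+B^\oplus=\bigsqcup_{r=0}^{t-1}\Bigl(\gamma+rp\,\mathbf{e}_j+\bigl(\{m\mathbf{e}_j\}\cup B'\bigr)^\oplus\Bigr).
\]
Two things must then be verified. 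First, each new basis $\{m\mathbf{e}_j\}\cup B'$ is again free: expressing the evaluation map of the new basis through that of $B$, the only change is that the coefficient of the primary generator is multiplied by $t$, and since multiplication by $t$ is injective on $\mathbb{N}$, injectivity of the original evaluation map is inherited. Second, the union stays unambiguous: the $t$ new pieces are pairwise disjoint (distinct residues modulo $m$ in the $j$-th coordinate) and all lie inside the original term, so substituting them for that term preserves non-ambiguity of the whole union.

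To finish, for each coordinate $j$ I would let $m_j$ be the least common multiple of all periods $p$ such that $p\mathbf{e}_j$ occurs in some basis, and then apply the refinement to every term whose basis contains a $j$-primary element, raising that element to $m_j\mathbf{e}_j$. Since the refinement for coordinate $j$ only rescales the $j$-primary generator and shifts the $j$-th coordinate of $\gamma$, it never disturbs the primary elements of the other coordinates; hence the coordinates are independent and, once achieved, the agreement persists. The outcome is a semi-simple expression for $S$ in which every basis containing a $j$-primary element contains exactly $m_j\mathbf{e}_j$, so both conditions hold.

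The hard part will not be the bookkeeping but confirming that the refinement genuinely remains inside the semi-simple world: that scaling one primary generator keeps the basis free even though the remaining generators in $B'$ may have nonzero $j$-th coordinate, and that the split is truly disjoint so the union stays unambiguous (the latter being essential for the validity of \propref{serie} downstream). Both points reduce to the injectivity argument and the residue-class partition indicated above.
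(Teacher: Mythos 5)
Your proposal is essentially correct, and it reconstructs an argument the paper itself never spells out: the paper states this lemma purely as a citation of Gohon's Lemma 4.3, adding only the remark that in the resulting expression the non-null coordinate of each $j$-primary element is the least common multiple of the $j$-th coordinates of the $j$-primary elements of the original expression. Your construction --- splitting $(p\mathbf{e}_j)^\oplus$ into residue classes modulo the least common multiple $m_j$ of the periods and absorbing the offsets $rp\,\mathbf{e}_j$ into the constant terms --- is exactly that procedure, and your observation that the first condition is automatic from freeness of each basis is a correct (and nice) simplification.

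One justification does need to be replaced, although the claim it supports is true. You assert that the $t$ pieces $\gamma + rp\,\mathbf{e}_j + (\{m\mathbf{e}_j\}\cup B')^\oplus$, $0\le r<t$, are pairwise disjoint because of ``distinct residues modulo $m$ in the $j$-th coordinate''. That reason fails precisely in the situation you flag at the end: a generator $b_i \in B'$ may have $(b_i)_j \neq 0$, so the $j$-th coordinate of an element of the $r$-th piece is $\gamma_j + rp + qm + \sum_i n_i (b_i)_j$, whose residue modulo $m$ is not determined by $r$ alone. The correct argument is the same injectivity argument you use to preserve freeness: if $x$ lies in pieces $r_1$ and $r_2$, then $x - \gamma = (r_1+q_1 t)(p\mathbf{e}_j) + v_1 = (r_2+q_2 t)(p\mathbf{e}_j) + v_2$ with $v_1, v_2 \in (B')^\oplus$, so freeness of the original basis $B$ forces $r_1 + q_1 t = r_2 + q_2 t$ (and $v_1 = v_2$), and since $0 \le r_1, r_2 < t$, uniqueness of division by $t$ gives $r_1 = r_2$ and $q_1=q_2$. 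So disjointness is inherited from the freeness of $B$, not read off a single coordinate; with that substitution, your refinement step, the preservation of freeness, and the coordinate-by-coordinate iteration via the least common multiple are all sound.
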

Furthermore, Gohon's proof gives an effective procedure for computing such a semi-simple expression. 
In the resulting expression, the non null coordinate of a $j$-primary element  is the least common multiple of all the $j$-th coordinates of the $j$-primary elements in the original expression.

\begin{lem}
\lemlabel{polyrec}
 Let $S $ a recognizable set of  $\mathbb{N}^k)$. Then,  we can compute from a semi-simple expression of $S$ two polynomials $P$ and $Q$ $\in \mathbb{Z}[x_1,x_2,...,x_k]$ such that:
 \begin{itemize}
 \item[]
 $\underline{S} = P/Q$,
 \item[]
 $Q=1$ or 
 $\Big(
 \exists J \subseteq \{1, 2, \ldots, k\},
 Q = \prod_{j \in J} (1 - x_j^{p_j}), 
 p_j \in \mathbb{N} \setminus \{0\}
 \Big)$.
 \end{itemize}
\end{lem}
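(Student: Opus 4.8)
The plan is to extract the polynomials directly from the characteristic series and then let the already-available results about recognizable sets do the work. First I would apply \lemref{consistent} to the given semi-simple expression, replacing it by a semi-simple expression for the same set $S$ in which, for every coordinate $j$, each basis contains at most one $j$-primary element and any two bases sharing a $j$-primary element share the \emph{same} one, namely $p_j\mathbf{e}_j$ where $p_j$ is the least common multiple of the $j$-th coordinates of the $j$-primary elements of the original expression. This normalization is effective by Gohon's procedure.

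Writing the normalized expression as $\bigcup_i(\gamma_i+B_i^\oplus)$, I would then compute $\underline{S}$ with \propref{serie}, combining the resulting sum of fractions over the common denominator $Q'=\operatorname{lcm}_i\prod_{\beta\in B_i}(1-\underline{\beta})$. The effect of the normalization becomes visible here: for each coordinate $j$, every basis that has a $j$-primary element contributes one and the same factor $(1-x_j^{p_j})$, so in the least common multiple each such single-variable factor occurs exactly once, while the factors coming from non-primary basis elements genuinely involve more than one variable. Reducing $P'/Q'$ to the canonical irreducible fraction $P(S)/Q(S)$ of \propref{unique} is then effective through greatest-common-divisor computation in $\mathbb{Z}[x_1,\dots,x_k]$.

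Since $S$ is recognizable, \propref{gohon} tells us that $Q(S)$ is either $1$ or a product of polynomials of the form $(1-x_j^{p_j})$; in particular, all the multi-variable factors of $Q'$ must disappear upon reduction. As $Q(S)$ divides $Q'$ and the single-variable part of $Q'$ involving $x_j$ is the single factor $(1-x_j^{p_j})$, occurring with multiplicity one, each coordinate can contribute at most one factor to $Q(S)$. Hence $Q(S)=\prod_{j\in J}(1-x_j^{p_j})$ for some $J\subseteq\{1,\dots,k\}$ (or $Q(S)=1$), and taking $P=P(S)$, $Q=Q(S)$ proves the lemma.

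The one place that needs more than bookkeeping is the final refinement to at most one factor per coordinate: \propref{gohon} on its own permits several factors $(1-x_j^{p})$ with the same index $j$, so the multiplicities of the single-variable factors must be controlled \emph{before} reduction. This control is exactly what \lemref{consistent} supplies, by forcing all $j$-primary elements to coincide through the least common multiple, so that the common denominator never accumulates two different powers of a single variable $x_j$.
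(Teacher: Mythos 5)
Your proposal is correct and follows essentially the same route as the paper's proof: normalize the semi-simple expression with \lemref{consistent}, compute the characteristic series via \propref{serie} over a common denominator in which each coordinate contributes at most one single-variable factor, and invoke \propref{gohon} (through \propref{unique}) to cancel the multi-variable factors. The only divergence is at the end, where you reduce all the way to the canonical irreducible fraction $P(S)/Q(S)$ and must then argue separately that $Q(S)$ contains at most one factor per coordinate, whereas the paper cancels \emph{only} the multi-variable factors of $Q'$, so that the remaining denominator is by construction $\prod_{j\in J}(1-x_j^{p_j})$ with one factor per coordinate and no further argument is needed.
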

\begin{proof}
Let $E$ be a semi-simple expression of $S$.
By \lemref{consistent} we can compute an equivalent, $E'$, such that for every $j \in \{1, 2, \ldots, k\}$ there exists at most one $j$-primary generator amongst all bases in $E'$. From $E'$ we compute $\underline{S}$ using  \propref{serie}. We obtain two polynomials $P'$ and $Q'$ such that $\underline{S}=P'/Q'$ and $Q'$ is a product of polynomials of the form $(1-x_1^{b_1}\cdots x_k^{b_k})$.
For each $j\in\{1,\ldots,k\}$ there exists at most one polynomial of the form $(1-x_j^{p_j})$ and by \propref{gohon} we can deduce that $P'$ is divisible by the polynomials of more than one variable that constitute $Q'$. After this simplification we obtain $P$ and $Q$.
\end{proof}

Now we show that for a recognizable set $S$ of ${\mathbb N}^k$ we can give a consistent resimple expression for $S$

\begin{thm}
\thmlabel{resimple}
Let $S$ be a recognizable set of $\mathbb{N}^k$. Then there exists a consistent resimple expression that denotes $S$ and it can be effectively computed from any semi-simple rational expression that denotes $S$.
\end{thm}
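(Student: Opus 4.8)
The plan is to route everything through the canonical fraction of the characteristic series and to read off a single primary basis that all atoms will share. First I would apply \lemref{polyrec} to the given semi-simple expression to compute polynomials $P,Q$ with $\underline{S}=P/Q$ and either $Q=1$ or $Q=\prod_{j\in J}(1-x_j^{p_j})$ for some $J\subseteq\{1,\dots,k\}$ and $p_j\in\mathbb{N}_{>0}$. I then fix once and for all the free primary basis
\[
B=\{\,p_j\mathbf{e}_j : j\in J\,\},
\]
which is free because it has at most one $j$-primary element per coordinate (\lemref{consistent}). Since $\underline{B^\oplus}=\prod_{j\in J}1/(1-x_j^{p_j})=1/Q$ by \propref{serie}, the series factors as $\underline{S}=P\cdot\underline{B^\oplus}$. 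Every atom I produce will have basis exactly $B$, which is precisely the condition for the final expression to be consistent.

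The core step is to show that membership in $S$ is governed by finitely many residues and thresholds that are effectively computable from $P$ and $Q$. Writing $P=\sum_\alpha c_\alpha x^\alpha$ and expanding $\underline{S}=P\cdot\underline{B^\oplus}$ with the product rule of \propref{serie}, the coefficient $\underline{S}[\sigma]$ is a finite signed sum of the indicators of $\sigma-\alpha\in B^\oplus$; that is, of the conditions $\sigma_j=\alpha_j$ for $j\notin J$ and $\sigma_j\ge\alpha_j$, $\sigma_j\equiv\alpha_j \pmod{p_j}$ for $j\in J$. Let $d_j=\deg_{x_j}P$. Two observations follow: for $j\notin J$ any $\sigma\in S$ has $\sigma_j\le d_j$, so $S$ is bounded in those coordinates; and for $j\in J$, once $\sigma_j> d_j$ the constraint $\sigma_j\ge\alpha_j$ is automatic, so $\underline{S}[\sigma]$ depends on the $j$-th coordinate only through $\sigma_j \bmod p_j$. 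Hence membership in $S$ depends only on a reduced state of $\sigma$: the exact value of each coordinate below its threshold together with the residue of each large coordinate $j\in J$ modulo $p_j$. There are finitely many reduced states, and the accepting ones can be listed effectively by evaluating the integer $\underline{S}[\sigma]\in\{0,1\}$ at one representative per state.

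It remains to assemble a consistent resimple expression from this finite table. Each accepting reduced state determines a product cell in which every coordinate is either pinned to a single value or constrained to a residue ray $c+p_j\mathbb{N}$ with $j\in J$; such a cell is an atomic resimple set, but possibly with a basis smaller than $B$ whenever a coordinate $j\in J$ is pinned. The device that restores consistency is that a pinned value $c$ in a coordinate $j\in J$ can be written using only period-$p_j$ rays, namely as the set where $\sigma_j\in c+p_j\mathbb{N}$ intersected with the complement of the set where $\sigma_j\in c+p_j+p_j\mathbb{N}$. Rewriting every pinned coordinate of $J$ in this way turns each cell into a Boolean combination of atoms all carrying the full basis $B$, and taking the union over the finitely many accepting states yields a resimple expression for $S$ whose atoms share $B$, hence is consistent; by \thmref{automata} it denotes a recognizable set, and the whole construction is effective.

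The main obstacle is the second step: extracting uniform ultimate periodicity with the single period $p_j$ directly from the fraction $P/Q$, and doing so effectively. The delicate point is that the coefficients of $P$ are signed and may exceed $1$, so $S$ cannot be read as a disjoint union of the monomials of $P$ shifted by $B^\oplus$; it is only the guarantee that $\underline{S}$ is $\{0,1\}$-valued, together with the degree bounds $d_j$ that localize all non-periodic behaviour, that lets us tabulate the set and then let the Boolean algebra of recognizable sets (complementation in particular) reconcile the signs while keeping every atom on the common basis $B$.
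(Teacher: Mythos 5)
Your proposal is correct, and it shares the same outer skeleton as the paper's proof (both start from \lemref{polyrec}, fix the primary basis $B=\{p_j\mathbf{e}_j : j\in J\}$ read off from $Q$, and end with a Boolean combination of atoms of the form $\gamma+B^\oplus$), but the core combinatorial step is genuinely different. The paper takes as atoms the sets $S_h=d_h+B^\oplus$, one per monomial $\mu_h\underline{d_h}$ of $P$, observes that $\underline{S}[s]=\sum_h\mu_h\underline{S_h}[s]$ depends only on the membership pattern $\bigl(s\in S_h\bigr)_{h=1,\dots,m}$, and writes $S$ as the union of the at most $2^m$ pattern classes $T_1\cap\cdots\cap T_m$ (each $T_h$ being $S_h$ or $S_h^c$) whose pattern sums to $1$; consistency is automatic because every atom already carries the full basis $B$. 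You instead analyse $\underline{S}[\sigma]=\sum_\alpha c_\alpha[\sigma-\alpha\in B^\oplus]$ pointwise, extract degree thresholds $d_j$ and residues modulo $p_j$, partition $S$ into disjoint product cells indexed by accepting reduced states, and then must repair consistency for cells that pin a coordinate $j\in J$, via the identity $\{c\}=(c+p_j\mathbb{N})\setminus\bigl((c+p_j)+p_j\mathbb{N}\bigr)$, i.e.\ $C=(\gamma+B^\oplus)\cap\bigcap_{j}\bigl((\gamma+p_j\mathbf{e}_j)+B^\oplus\bigr)^c$ — a device the paper never needs. What your route buys is an explicit description of the ultimately periodic structure of $S$ (thresholds plus periods) and a disjoint decomposition into products, which sits very close to Mezei's characterization and to the eventual shuffle-product automaton; what the paper's route buys is brevity and built-in consistency, since it never leaves the family of shifted full lattices $d_h+B^\oplus$ and so never has to re-express pinned coordinates. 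Two cosmetic points: the freeness of $B$ is immediate from the fact that its elements are supported on distinct coordinates (you do not need \lemref{consistent} for that), and your closing appeal to \thmref{automata} should be to \propref{half face} (resimple expressions denote recognizable sets); neither affects correctness.
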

\begin{proof}
From a semi-simple expression for $S$ we obtain two polynomials $P$ and $Q$ as in \lemref{polyrec}.

If $Q=1$ then $S$ is finite and we can calculate a resimple expression that denotes it by directly applying  \propref{serie}.
In this case, the characteristic series for $S$ coincides with the polynomial $P$ and thus can be written as
$$P = \sum_{1 \leq h \leq m} \underline{d_h}.$$
with each $\underline{d_h}$ being a monomial of the form $x_1^{\delta_1} x_2^{\delta_2} \cdots x_k^{\delta_k}$.
By applying \propref{serie} inversely, we can transform each $\underline{d_h}$ into a matching expression $d_h\in {\mathbb N}^k$, and we  make the union of them. The resimple expression in this case is:
$$
\bigcup_{1 \leq h \leq m} d_h.
$$

Now consider the case  $Q\neq 1$. So by \lemref{polyrec} there exists a set $J \subseteq \{1, 2, \ldots, k\}$ such that $Q = \prod_{j \in J} (1 - x_j^{p_j})$, and 
without loss of generality, we can assume that $J = \{1, 2, \ldots, k'\}$ with $1 \leq k' \leq k$. In this case $P$ can be written as
$$P = \sum_{1 \leq h \leq m} \mu_h \underline{d_h}$$
where 
{$ \mu_h\in {\mathbb Z}- \{0\}$} and $d_h \in \mathbb{N}^k$, for each $h=1, ..., m$.  
Note that unlike the finite case there may be negative terms here.
For each coordinate $j \in J$, we define the primary element $b_j = p_j\ \mathbf{e}_j$ where  $p_j$ coincides with the corresponding  exponent in $Q$. Furthermore, for each $h=1,...,m$, , we denote $S_h = d_h + b_1^\oplus + b_2^\oplus + \cdots + b_{k'}^\oplus$. From these definitions we have
$$\underline{S_h} = \underline{d_h}/Q.$$
Then,
\[
\underline{S} = \sum_{1 \leq h \leq m} \mu_h \underline{S_h} .
\]
Consider the equivalence relation $\sim$ over $\mathbb{N}^k$ defined as
\[
s \sim s' \iff 
\left( \text{ for every } h=1, .., m,\quad
\ s \in S_h \iff  s' \in S_h
\right).
\]
Note that each $S_h$ is an atomic resimple expression, thus each equivalence class is defined by a resimple expression and, 
 by definition of the series, \defref{charseries},
we also have
\[
s \sim s' \iff \left( \text{ for every } h=1,.., m, \quad \underline{S_h} [s] = \underline{S_h}[s']\right).
\]
Then,
$s \sim s' $ exactly when 
 $\underline{S} [s] = \underline{S} [s']$. 
Notice that the number of equivalence classes is $2^m$ because for every $h=1,.., m,$ we have $\underline{S}_h[s] \in \{0,1\}$.
To refer to an   equivalence class of $\sim$ we write
\[
\s=\{ s'\in {\mathbb N}^k: s\sim s'\}.
\]
Each equivalence class $\s$ is denoted by the following  resimple expression
\begin{gather*}
C_{\s}=T_1 \cap T_2 \cap \cdots \cap T_m,\quad
\text{ where  for }h=1,..,m,\quad
T_h =
\begin{cases}
S_h, & \text{if } \underline{S}_h[s] = 1\\
(S_h)^c, & \text{if } \underline{S}_h[s] = 0.\\
\end{cases}
\end{gather*}
We are only interested in  the   equivalence classes $\s$  such that $s\in S$. 
Hence, we must take the union of the resimple expressions $C_{\s}$ just for those $\s$ such that $\underline{S}[s]=1$.
Thus, the resimple expression for $S$~is 
\[
\bigcup_{\text{ good }\s} C_{\s}
\]
where 
$\s$ is good if 
$\underline{S}[s]=1 $;
equivalently, $\s$ is good  if 
$\sum_{h=1}^m \mu_h \underline{S_h}[s] = 1$.
Furthermore, all 
the atomic resimple  expressions $S_h$  have the same basis $B = \{b_1, b_2, \ldots, b_{k'}\}$, so the given resimple expression is consistent.
\end{proof}

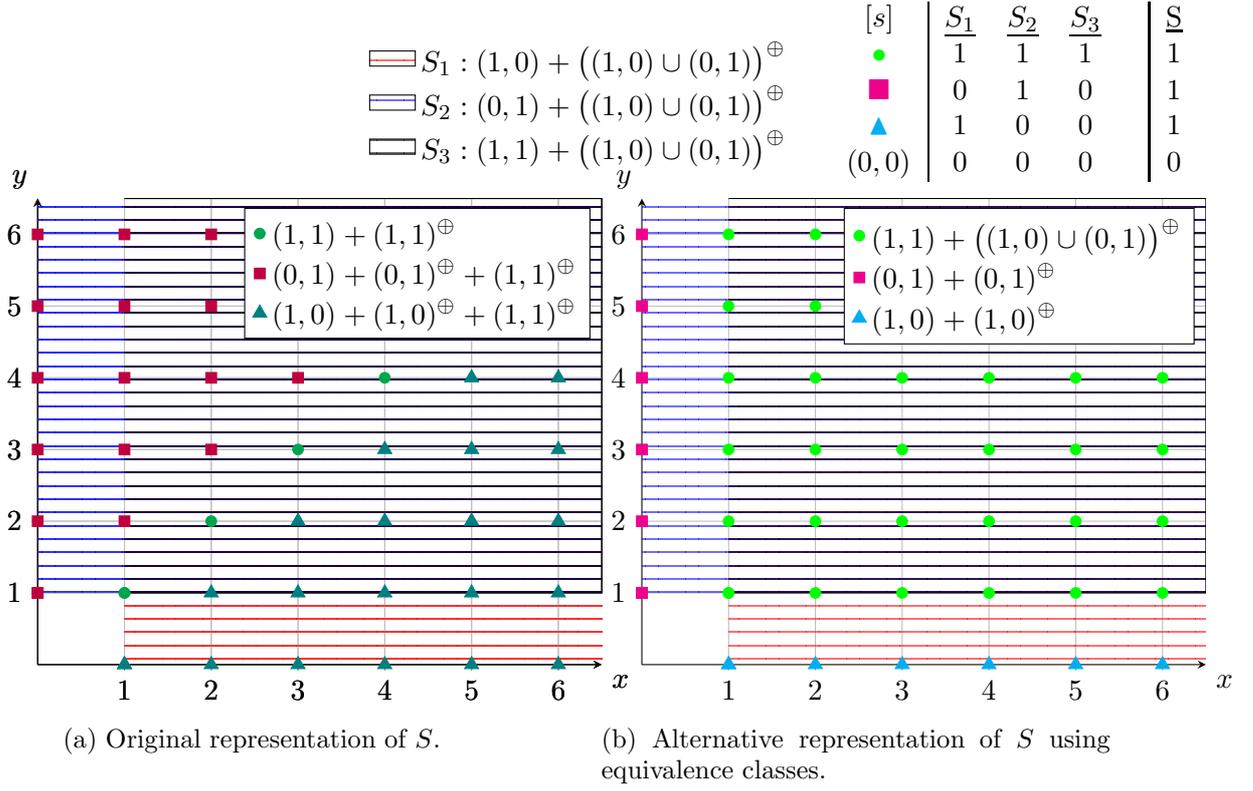
\begin{figure}[t!]
\vspace*{-1.5cm}
\begin{subfigure}{.43\textwidth}
\usetikzlibrary{patterns}
\begin{tikzpicture}
\begin{axis}[
 axis x line=center,
 axis y line=center,
 xmin=0,
 ymin=0,
 xmax=6.5,
 ymax=6.5,
 xlabel={$x$},
 xlabel style={below right},
 ylabel={$y$},
 ylabel style={above left},
 legend columns=1,
 legend cell align=left,
 legend style={ at= {(1.35,1.36)}, draw=none},
 grid=major,
 transpose legend
 ]

\addplot [draw=none,pattern = {Lines[angle=90, distance=5pt ]}, pattern color=red,
 area legend
] coordinates {
 (1, 0)
 (6.5, 0)
 (6.5,6.5)
 (1, 6.5)
};
\addlegendentry{$S_1:(1,0)+\big((1,0)\cup(0,1)\big)^\oplus $}

\addplot [draw=none,pattern = {Lines[ distance=5pt ]}, pattern color=blue,
 area legend,
] coordinates {
 (0,1)
 (6.5, 1)
 (6.5,6.5)
 (0, 6.5)
};
\addlegendentry{$S_2: (0,1)+\big((1,0)\cup(0,1)\big)^\oplus$};

\addplot [pattern = {Lines[angle=45, distance=5pt]},
 area legend,
] coordinates {
 (1,1)
 (6.5, 1)
 (6.5,6.5)
 (1, 6.5)
};
\addlegendentry{$S_3:(1,1)+\big((1,0)\cup(0,1)\big)^\oplus$};
\end{axis}
\begin{axis}[
 axis x line=center,
 axis y line=center,
 xmin=0,
 ymin=0,
 xmax=6.5,
 ymax=6.5,
 xlabel={$x$},
 xlabel style={below right},
 ylabel={$y$},
 ylabel style={above left},
 legend columns=1,
 legend cell align=left,
 grid=major,
 transpose legend
 ]

\addplot [only marks, Green] coordinates {(1,1)};
\addlegendentry{$(1,1)+(1,1)^\oplus$};
\addplot [mark=square* , only marks, purple] coordinates {(0,1)};
\addlegendentry{$(0,1)+(0,1)^\oplus+(1,1)^\oplus$};
\addplot [mark=triangle*, mark size=3pt, only marks, teal] coordinates {(1,0)};
\addlegendentry{$(1,0)+(1,0)^\oplus+(1,1)^\oplus$};
\addplot [draw=none,pattern = {Lines[angle=90, distance=5pt ]}, pattern color=red,
 area legend
] coordinates {
 (1, 0)
 (6.5, 0)
 (6.5,6.5)
 (1, 6.5)
};
\addplot [draw=none,pattern = {Lines[ distance=5pt ]}, pattern color=blue,
 area legend,
] coordinates {
 (0,1)
 (6.5, 1)
 (6.5,6.5)
 (0, 6.5)
};
\addplot [pattern = {Lines[angle=45, distance=5pt]},
 area legend,
] coordinates {
 (1,1)
 (6.5, 1)
 (6.5,6.5)
 (1, 6.5)
};
\foreach \i in {1,...,6} {
 \foreach \j in {1,...,\i}
 \addplot [mark=square*, only marks, purple] coordinates {(\j-1,\i)};
 }

\foreach \i in {1,...,6} {
 \foreach \j in {1,...,\i}
 \addplot [mark=triangle*, mark size=3pt, only marks, teal] coordinates {(\i,\j-1)};
 }

\foreach \i in {1,...,6} {
 \addplot [only marks, Green] coordinates {(\i,\i)};
 }

\end{axis}
\end{tikzpicture}
 \caption{Original representation of $S$.\\\mbox{ }}
 \label{fig:grafosin0}
\end{subfigure} \hspace*{1cm }
\begin{subfigure}{.43\textwidth}
\usetikzlibrary{patterns}
\begin{tikzpicture}
\begin{axis}[
 axis x line=center,
 axis y line=center,
 xmin=0,
 ymin=0,
 xmax=6.5,
 ymax=6.5,
 xlabel={$x$},
 xlabel style={below right},
 ylabel={$y$},
 ylabel style={above left},
 legend columns=1,
 legend cell align=left,
 grid=major,
 transpose legend
 ]
\addplot [only marks, green] coordinates {(1,1)};
\addlegendentry{$(1,1)+\big((1,0)\cup(0,1)\big)^\oplus$};
\addplot [mark=square* , only marks, magenta] coordinates {(0,1)};\addlegendentry{$(0,1)+(0,1)^\oplus$};
\addplot [mark=triangle*, mark size=3pt, only marks, cyan] coordinates {(1,0)};
\addlegendentry{$(1,0)+{(1,0)}^\oplus$};

\addplot [draw=none,pattern = {Lines[angle=90, distance=5pt ]}, pattern color=red,
 area legend
] coordinates {
 (1, 0)
 (6.5, 0)
 (6.5,6.5)
 (1, 6.5)
};

\addplot [draw=none,pattern = {Lines[ distance=5pt ]}, pattern color=blue,
 area legend,
] coordinates {
 (0,1)
 (6.5, 1)
 (6.5,6.5)
 (0, 6.5)
};

\addplot [pattern = {Lines[angle=45, distance=5pt]},
 area legend,
] coordinates {
 (1,1)
 (6.5, 1)
 (6.5,6.5)
 (1, 6.5)
};

 \foreach \i in {1,...,6} {
 \addplot [mark=square*, only marks, magenta] coordinates {(0,\i)};
 }

\foreach \i in {1,...,6} {
 \addplot [mark=triangle*, mark size=3pt, only marks, cyan] coordinates {(\i,0)};
 }

\foreach \i in {1,...,6} {
\foreach \j in {1,...,6}
 \addplot [only marks, green] coordinates {(\i,\j)};
 }
\end{axis}

\node[draw=none] at (4.9,7.6) {
 \begin{tabular}{c|cccc|c}
 $[s]$ & $\underline{S_1}$ & $\underline{S_2}$ & $\underline{S_3}$ && \underline{S}\\
 \color{green}{$\bullet$} &1&1&1 &&1\\
 \color{magenta}{$\blacksquare$} &0&1&0&&1\\
 \color{cyan}{$\blacktriangle$} &1&0&0&&1\\
 $(0,0)$ &0&0&0 &&0\\
 \end{tabular}};
\end{tikzpicture}
 \caption{Alternative representation of $S$ using equivalence classes.}
 \label{fig:grafosin0rec}
\end{subfigure}

 \caption{Two representations of $S=(S_1 \cap S_2 \cap S_3) \cup
(S_1 \cap {S_2}^c \cap {S_3}^c) \cup
({S_1}^c \cap S_2 \cap {S_3}^c)$.}
 \end{figure}

\begin{figure}[h!]
\vspace*{-0.6cm}
\begin{subfigure}{.3\textwidth}
\begin{tikzpicture}[automaton, auto, node distance = 2.5cm]

\node[state, initial] (00) []{};
\node[state, accepting] (10) [right of=00]{};
 \path[->]
 (00) edge node { $(1,0)$ } (10)
 (00) edge [loop above] node { $(0,1)$ } ()
 (10) edge [loop above] node { $(0,1),(1,0)$ } ();
\end{tikzpicture}
 \caption{$\mathcal{A}_1$ for  $S_1$}
 \label{fig:automataS1}
\end{subfigure}
\hspace*{0.25cm}
\begin{subfigure}{.3\textwidth}
 \centering
\begin{tikzpicture}[automaton, auto, node distance = 2.5cm]

\node[state, initial] (00) []{};
\node[state, accepting] (01) [below of=00]{};
 \path[->]
 (00) edge [loop right] node { $(1,0)$ } ()
 (00) edge node [right] { $(0,1)$ } (01)
 (01) edge [loop right] node { $(0,1),(1,0)$ } ();
\end{tikzpicture}
 \caption{$\mathcal{A}_2$ for $S_2$}
 \label{fig:automataS2}
\end{subfigure}
\begin{subfigure}{.3\textwidth}
 \centering
\begin{tikzpicture}[automaton, auto, node distance = 2.5cm]

\node[state, initial] (00) []{};
\node[state] (10) [right of=00]{};
\node[state] (01) [below of=00]{};
\node[state, accepting] (11) [right of=01]{};
 \path[->]
 (00) edge node [below] { $(1,0)$ } (10)
 (01) edge node { $(1,0)$ } (11)
 (10) edge [loop right] node { $(1,0)$ } ()

 (00) edge node [left] { $(0,1)$ } (01)
 (10) edge node { $(0,1)$ } (11)
 (01) edge [loop below] node { $(0,1)$ } ()
 (11) edge [loop right] node { $(0,1),(1,0)$ } ();
\end{tikzpicture}
 \caption{$\mathcal{A}_3$ for  $S_3$ }
 \label{fig:automataS1S2S3}
\end{subfigure}

\begin{subfigure}{.3\textwidth}
\begin{tikzpicture}[automaton, auto, node distance = 2.5cm]

\node[state, initial] (00) []{};
\node[state, accepting above, accepting] (10) [right of=00]{};
\node[state] (01) [below of=00]{};
\node[state] (11) [right of=01]{};
 \path[->]
 (00) edge node { $(1,0)$ } (10)
 (01) edge node { $(1,0)$ } (11)
 (10) edge [loop right] node { $(1,0)$ } ()

 (00) edge node [left] { $(0,1)$ } (01)
 (10) edge node { $(0,1)$ } (11)
 (01) edge [loop below] node { $(0,1)$ } ()
 (11) edge [loop below] node { $(0,1),(1,0)$ } ();
\end{tikzpicture}
 \caption{Automaton for  $S_d$.}
 \label{fig:automataS1S2cS3c}
\end{subfigure}\hspace*{0.25cm}
\begin{subfigure}{.3\textwidth}
 \centering
\begin{tikzpicture}[automaton, auto, node distance = 2.5cm]

\node[state, initial] (00) []{};
\node[state] (10) [right of=00]{};
\node[state, accepting left, accepting] (01) [below of=00]{};
\node[state] (11) [right of=01]{};
 \path[->]
 (00) edge node { $(1,0)$ } (10)
 (01) edge node { $(1,0)$ } (11)
 (10) edge [loop right] node { $(1,0)$ } ()

 (00) edge node [left] { $(0,1)$ } (01)
 (10) edge node { $(0,1)$ } (11)
 (01) edge [loop below] node { $(0,1)$ } ()
 (11) edge [loop below] node { $(0,1),(1,0)$ } ();
\end{tikzpicture}
 \caption{Automaton for $S_e$.}
 \label{fig:automataS1cS2S3c}
\end{subfigure}\hspace*{0.25cm}
\begin{subfigure}{.3\textwidth}
 \centering
\begin{tikzpicture}[automaton, auto, node distance = 2.5cm]

\node[state, initial] (00) []{};
\node[state, accepting above, accepting] (10) [right of=00]{};
\node[state, accepting left, accepting] (01) [below of=00]{};
\node[state, accepting] (11) [right of=01]{};
 \path[->]
 (00) edge node { $(1,0)$ } (10)
 (01) edge node { $(1,0)$ } (11)
 (10) edge [loop right] node { $(1,0)$ } ()

 (00) edge node [left] { $(0,1)$ } (01)
 (10) edge node { $(0,1)$ } (11)
 (01) edge [loop below] node { $(0,1)$ } ()
 (11) edge [loop below] node { $(0,1),(1,0)$ } ();
\end{tikzpicture}
\caption{Automaton for $S$}
\label{fig:automataS}
\end{subfigure}
\caption{
Construction of the automaton for $S= {(S_1 \cap S_2 \cap S_3)}^c\cap{(S_1 \cap {S_2}^c \cap {S_3}^c)}^c\cap{({S_1}^c \cap S_2 \cap {S_3}^c)}^c$, where 
$S_1 = (1,0)+\big((1,0)\cup(0,1)\big)^\oplus$,
$S_2 = (0,1)+\big(1,0)\cup(0,1)\big)^\oplus$,\linebreak
$S_3 = (1,1)+\big((1,0)\cup(0,1)\big)^\oplus= S_1 \cap {S_2} \cap {S_3}$, $S_d= S_1 \cap {S_2}^c \cap {S_3}^c$, $S_e={S_1}^c \cap S_2 \cap {S_3}^c$.
}
\end{figure}
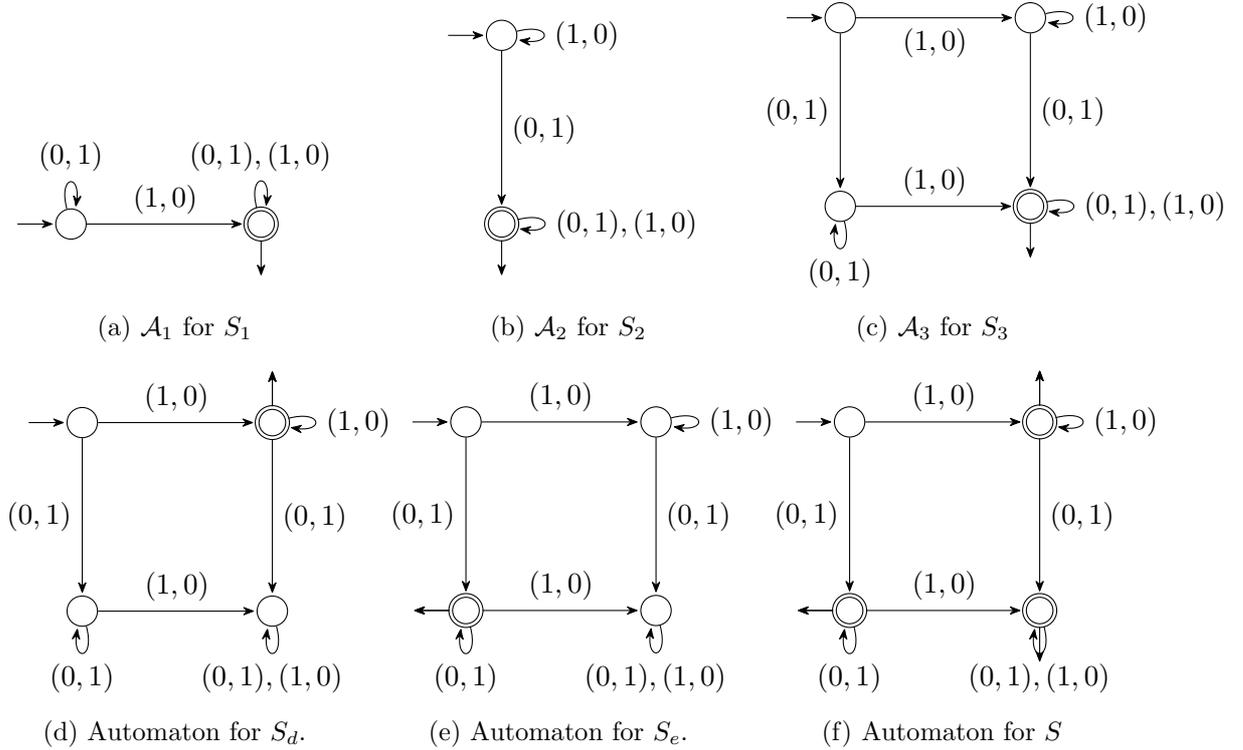
\begin{rem}
The  automaton obtained from the proof of \thmref{automata} is not always minimal. For example, the one in \autoref{fig:automataS} is not minimal. The minimal automaton is obtained by applying Moore's algorithm and it is shown  in Figure~\ref{fig:automataminS}. 
\end{rem}

\begin{figure}[ht]
 \centering
\begin{tikzpicture}[automaton, auto, node distance = 3.5cm]

\node[state, initial] (00) []{};
\node[state, accepting] (10) [right of=00]{};
 \path[->]
 (00) edge node { $(0,1),(1,0)$ } (10)
 (10) edge [loop right] node { $(0,1),(1,0)$ } ();
\end{tikzpicture}
 \caption{Minimum automaton for $S$.}
 \label{fig:automataminS}
\end{figure}
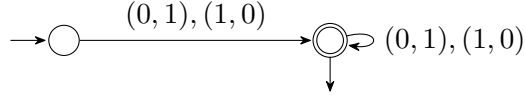

\begin{ej}
\ejlabel{S}
Let us consider $S$ the recognizable set depicted in Figure~\ref{fig:grafosin0}, denoted by the semi-simple expression
$S = (1,1)+(1,1)^\oplus\ \cup
(1,0)+\big((1,0)\cup(1,1)\big)^\oplus \ \cup
(0,1)+\big((0,1)\cup(1,1)\big)^\oplus.$
Its characteristic series is:
{\small 
\begin{align*}
\underline{S} &= 
\frac{xy}{(1-xy)} + \frac{x}{(1-xy)(1-x)} + \frac{y}{(1-xy)(1-y)}\\
&= \frac{xy(1-x)(1-y) + x(1-y) + y(1-x)}{(1-xy)(1-x)(1-y)}
\\
&= \frac{xy - x^2y - xy^2 + x^2y^2 + x - xy + y - xy}{(1-xy)(1-x)(1-y)}
\\
&= \frac{(1-xy)(x+y-xy)}{(1-xy)(1-x)(1-y)}
\\
&= \frac{x+y-xy}{(1-x)(1-y)}.
 \end{align*}
Then, we  separate each term of the numerator and we obtain $\underline{S_1},\underline{S_2}$ and $\underline{S_3}$. We omit the minus for the last term in order to give the resimple expressions, but it is considered afterwards for the equivalence classes.

\begin{align*}
\underline{S_1} &= \frac{x}{(1-x)(1-y)}
&\underline{S_2} &= \frac{y}{(1-x)(1-y)}
&\underline{S_3} &= \frac{xy}{(1-x)(1-y)}
\\
S_1 &= (1,0)+\big((1,0)\cup(0,1)\big)^\oplus \quad
&S_2 &= (0,1)+\big((1,0)\cup(0,1)\big)^\oplus \quad
&S_3 &= (1,1)+\big((1,0)\cup(0,1)\big)^\oplus.
\end{align*}
Finally, we find the corresponding resimple expression $C$,  illustrated in Figure~\ref{fig:grafosin0rec}.
This expression comes from considering three equivalence classes: elements in $S_1$, $S_2$ and $S_3$; in $S_1$ but not in the other two; and in $S_3$ but not in the other two. The other equivalence classes are not considered because its  elements are not in $S$, or because  the class is empty.
Our algorithm stops here.

For the sake of this example 
we check that the characteristic series defined from $C$  and the characteristic series  of $S$ coincide.
\begin{align*}
C&= &(S_1 \cap S_2 \cap S_3)& &\cup&
&(S_1 \cap {S_2}^c \cap {S_3}^c)& &\cup&
&({S_1}^c \cap S_2 \cap {S_3}^c)\\
&= &(S_3)& &\cup&
&(S_1 \cap {S_2}^c)& &\cup&
&(S_1^c \cap S_2)&\\
&= &(1,1)+\big((1,0)\cup(0,1)\big)^\oplus& \quad &\cup& \quad
&(1,0)+{(1,0)}^\oplus& \quad &\cup& \quad
&(0,1)+{(0,1)}^\oplus&
\\\bigskip
\underline{S}&= &\frac{xy}{(1-x)(1-y)}& &+&
&\frac{x}{(1-x)}& &+&
&\frac{y}{(1-y)}&\\
&= &\frac{xy}{(1-x)(1-y)}& &+&
&\frac{x(1-y)}{(1-x)(1-y)}& &+&
&\frac{y(1-x)}{(1-x)(1-y)}&\\
 & =&\frac{xy +
x(1-y) +
y(1-x)}{(1-x)(1-y)}.
\end{align*}
}\samepage
After obtaining the resimple expression $C$ we are able to construct the automaton.
\end{ej}

\section{Algorithm}
\label{algo}

We prove \thmref{main} by giving the following algorithm. Let $A$ be an alphabet of size $k$.
\bigskip

\noindent
{\it Input}: $E$ a regular expression over $A^*$.\medskip
\\
\noindent
{\it Output}: If ${\EuScript C}(E)$ is regular then  the output is  a complete finite state automaton $\mathcal{A}$ over $A^*$ that accepts
${\EuScript C}(E)$, the commutative closure of $E$.
Otherwise, the algorithm stops and  warns that ${\EuScript C}(E)$ is not regular.

\begin{enumerate}

\item Obtain $E'$ a semi-simple consistent expression over $\mathbb{N}^k$  for $\varphi(E)$ applying  \propref{semisimple} (to make it semi-simple) and \lemref{consistent} (to make it consistent).
\item {Obtain $P'$ and $Q'$ two polynomials  in $ \mathbb{Z}[x_1,...,x_k]$ such that $\underline{E'} = \frac{P'}{Q'}$}, applying \propref{serie} to $E'$.

\item Obtain $P$ and $Q$ in $ \mathbb{Z}[x_1,...,x_k]$ such that $\frac{P}{Q}$ irreducible by simplifing all possible factors of $\frac{P'}{Q'}$.
\item If $Q$ has any factor of more than one variable then $E$ it is not recognizable. The algorithm stops and  warns that ${\EuScript C}(E)$ is not regular.
\item Obtain $C$ a consistent resimple expression over $\mathbb{N}^k$  from the polynomials $P$ and $Q$ .
 applying \thmref{resimple}.
 \item Obtain the automaton $\mathcal{A}$ applying \thmref{automata} to the expression $C$ from the previous step.
\end{enumerate}

\section{Complexity}

We use the  asymptotic big O notation asserting that for  functions $f,g:{\mathbb N}\to{\mathbb R}$,
$f(n)$ is $\bigo(g(n))$ if there exists $C$ such that for all sufficiently large $n$ $|g(n)|\leq |Cf(n)|$.

We assume an alphabet of $k$ letters and use the following notation to refer to various size functions.
For any set $B$, the number of elements of $B$ is $|B|$. 
For any $\sigma = (b_1, \ldots , b_k) \in \mathbb{N}^k$,  $||\sigma||= \max_{1\leq j \leq k} b_j$
; similarly, for any $B \subseteq \mathbb{N}^k$ we denote $||B|| = \max_{\sigma\in B} ||\sigma||$.
Finally, for a semi-linear expression $E = \bigcup\limits_{i \in I}\gamma_i + B_i^\oplus$  we write $||E|| = \max(\max_{i\in I} ||\gamma_i|| , \max_{i \in I} ||B_i||, 2)$.

\subsection{State complexity}
By the state complexity of a regular language we mean the minimum number of states of a complete  automaton that accepts that language~\cite{YU}. It is a natural measure for operations on regular languages and, in turn, it gives us a lower bound for the temporal and spatial complexity of operations on automata.
The study of state complexity dates back at least to~\cite{MASLOV}.

\begin{prop}[\cite{HASSE}]
\proplabel{desambiguar}
    Every semi-linear set denoted by a semi-linear expression $E = \bigcup\limits_{i \in I}\gamma_i + B_i^\oplus$ has an equivalent semi-simple expression $E' = \bigcup\limits_{i \in I'}\gamma'_i + {B'_i}^\oplus$ where
 $$   ||\gamma'_i|| \leq ||E||^{|I|\cdot\bigo(k^6)}, \quad   ||B'_i|| \leq ||E||^{|I|\cdot\bigo(k^4)}, \quad     |I'| \leq ||E||^{\bigo(k^5)}.
    $$
\end{prop}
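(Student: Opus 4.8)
The plan is to prove the statement in two phases: first rewrite each individual linear set $\gamma_i + B_i^\oplus$ as a finite union of \emph{simple} sets (free bases), and then unify all the resulting period structures so that the total union becomes \emph{unambiguous}, tracking the growth of every parameter along the way. The quantitative engine throughout will be standard sharp bounds on the minimal nonnegative integer solutions of linear Diophantine systems (of von zur Gathen--Sieveking and Pottier type): for a system $Ax = c$ with $A\in\mathbb{Z}^{r\times n}$ whose entries and right-hand side are bounded by $M$, the minimal solutions with respect to the componentwise order have entries bounded by $(nM)^{\mathcal{O}(r)}$, and analogously the minimal solutions of $Ax=0$ generate the period lattice of the solution set. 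Paired with index bounds for integer lattices (Hadamard's inequality and Hermite normal form), these are what ultimately produce the exponents $\mathcal{O}(k^4),\mathcal{O}(k^5),\mathcal{O}(k^6)$.

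\textbf{Phase one (making each basis free).} For a fixed linear set $\gamma_i + B_i^\oplus$ I would triangulate the rational cone generated by $B_i$ into simplicial subcones, each spanned by a linearly independent subset of $B_i$; the number of subcones is at most $\binom{|B_i|}{k}$, and their spanning vectors are among the original generators, so their norms stay $\le \|E\|$. Inside a simplicial subcone spanned by $c_1,\dots,c_d$ the integer points are exactly $\bigcup_r\bigl(r + \{c_1,\dots,c_d\}^\oplus\bigr)$, where $r$ ranges over the lattice points of the half-open fundamental parallelepiped; their number is the lattice index $\le \prod_j\|c_j\| \le (k\|E\|)^k$, and each coordinate of $r$ is bounded likewise. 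Shifting by $\gamma_i$ and recombining, this rewrites $\gamma_i + B_i^\oplus$ as a union of simple sets whose free bases have norm $\le \|E\|$ and whose offsets have norm $\le \|E\|^{\mathcal{O}(k)}$.

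\textbf{Phase two (unifying periods and disjointifying).} To make the union unambiguous I would pass to a \emph{common period structure}: combining the (now free) period bases of all simple pieces, I form a single sublattice $\Lambda\subseteq\mathbb{Z}^k$ refining all of them, together with a common simplicial decomposition of the ambient cone. Generating $\Lambda$ amounts to solving Diophantine systems built from all $\mathcal{O}(|I|)$ period sets, so its generators have norm $\le \|E\|^{|I|\cdot\mathcal{O}(k^4)}$ --- this is the bound on $\|B'_i\|$. With respect to this refinement every input simple set becomes a union of $\Lambda$-cosets lying in fixed cone-cells, and $\mathbb{N}^k$ itself partitions into the corresponding simple cells, which are pairwise disjoint by construction. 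The number of cells is governed by the index $[\mathbb{Z}^k:\Lambda]$ and the cone-cell count, both bounded by $\|E\|^{\mathcal{O}(k^5)}$ and \emph{independent of} $|I|$; this is the bound on $|I'|$. Each cell's offset is a canonical coset/cone representative of norm $\le \|E\|^{|I|\cdot\mathcal{O}(k^6)}$, giving the bound on $\|\gamma'_i\|$. Discarding the cells not contained in $S$ --- the selection step that mirrors the choice of good classes in \thmref{resimple} --- leaves a semi-simple expression for $S$.

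The hard part will be Phase two, specifically keeping the dependence on $|I|$ \emph{linear in the exponent} of the magnitudes while keeping it \emph{out of} the piece count. A naive iterate of intersect-and-subtract would multiply the period norms by a $\mathrm{poly}(k)$ power each round and yield a bound of the shape $\|E\|^{\mathrm{poly}(k)^{\mathcal{O}(|I|)}}$, far worse than claimed, and would also blow up the number of pieces. The correct accounting confines the $|I|$-dependence to the lattice $\Lambda$: forming $\Lambda$ is a least-common-multiple-type combination across all $|I|$ inputs, so its generator and representative norms carry the factor $|I|$ linearly, whereas the count is pinned down afterwards by the index of $\Lambda$ and the fixed cone decomposition and therefore depends only on $\|E\|$ and $k$. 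Verifying that the single combined lattice/cone construction simultaneously achieves all three bounds $\|B'_i\|\le\|E\|^{|I|\cdot\mathcal{O}(k^4)}$, $\|\gamma'_i\|\le\|E\|^{|I|\cdot\mathcal{O}(k^6)}$, and $|I'|\le\|E\|^{\mathcal{O}(k^5)}$ --- by applying the Diophantine and index bounds a single time to the combined data --- is the technical core and the step I expect to require the most care.
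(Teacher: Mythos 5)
First, a point of comparison: the paper does not prove this proposition at all — it is imported from Chistikov and Haase \cite{HASSE} as a black box — so there is no internal proof to measure your attempt against; it has to stand on its own, and it does not.

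The fatal gap is in Phase one. The identity you invoke — that the integer points of a simplicial cone spanned by $c_1,\dots,c_d$ are $\bigcup_r \big(r+\{c_1,\dots,c_d\}^\oplus\big)$ with $r$ ranging over the lattice points of the half-open fundamental parallelepiped — describes the set of \emph{all integer points of the cone}, not the monoid $B_i^\oplus$. The two coincide only when the monoid is saturated, which it need not be. Already for $k=1$ and $B=\{2,3\}$: the cone is all of $\mathbb{Q}_{\geq 0}$, and your decomposition (spanning vector $2$, parallelepiped offsets $\{0,1\}$) yields $\{0,1\}+2^\oplus=\mathbb{N}$, whereas $B^\oplus=\{0,2,3,4,\dots\}$ omits $1$. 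So Phase one outputs an expression denoting a strictly larger set than the one you started from. Repairing it forces the offsets $r$ to be taken inside the monoid $B_i^\oplus$ itself, together with a proof that finitely many such offsets suffice and a bound on their norms — a Gordan's-lemma/finitely-generated-module argument over the free submonoid. That is precisely the nontrivial content of the cited result; the parallelepiped argument does not deliver it.

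Phase two has a second, independent problem: its accounting is internally inconsistent. You construct $\Lambda$ as a common refinement (an ``lcm-type combination'') of the period lattices of all pieces, and then assert that the cell count, governed by $[\mathbb{Z}^k:\Lambda]$, is independent of $|I|$. It is not: the index of an intersection of lattices is controlled only by the \emph{product} of the individual indices. In dimension one, with pairwise coprime periods $p_1,\dots,p_{|I|}\leq \|E\|$, your $\Lambda$ is $\mathrm{lcm}(p_1,\dots,p_{|I|})\,\mathbb{Z}$, whose index can be of order $\|E\|^{|I|}$, and your construction emits one output piece per coset of $\Lambda$ meeting the set — so it produces $|I'|$ of order $\|E\|^{\Theta(|I|)}$, not $\|E\|^{\bigo(k^5)}$. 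Hence, even granting Phase one, your construction provably fails to achieve the claimed bound on $|I'|$; the exponent bookkeeping that you yourself flag as ``the technical core'' is exactly where the argument breaks, and no enumeration of cosets of a single refined lattice can avoid it.
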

Although our algorithm starts from a regular expression in $A^*$, we  give the bound on the semi-simple expression of the set of $\mathbb{N}^k$.
Let a semi-simple expression be
$$E = \bigcup\limits_{i \in I}\gamma_i + B_i^\oplus.$$

\begin{define}[Value $p_j$]
\deflabel{pj}
Given a semi-simple expression  $E=\bigcup\limits_{i \in I}\gamma_i + B_i^\oplus$,  for each $j=1, \ldots k$,
we define $p_j = lcm (m_1, \ldots, m_{|I|})$
    where $$
m_i = \begin{cases}
m & \text{ if } m \mathbf{e}_j \in B_i\\
1 & \text{ otherwise}.
\end{cases}$$
\end{define}

\begin{prop}
Let   $E=\bigcup\limits_{i \in I}\gamma_i + B_i^\oplus$.
For each $j \in \{1,\ldots,k\}$, $p_j = \bigo \Big(e^{\sqrt{n.log(n)}}\Big)$, where $n$ is the size of the semi-simple expression $E$, that is $n=|I| \; ||E||$. \end{prop}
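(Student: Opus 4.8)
The plan is to bound $\log p_j$ by two elementary but very different estimates and then merge them with the inequality $\min(a,b)\le\sqrt{ab}$. Throughout I would write $L=|I|$ and $M=||E||$, so that $n=LM$ and, by \defref{pj}, $p_j=\mathrm{lcm}(m_1,\dots,m_L)$, where each $m_i$ is either $1$ or the coefficient $m$ of a $j$-primary generator $m\mathbf{e}_j\in B_i$. In either case $m_i\le||B_i||\le||E||=M$, and recall $M\ge 2$. The whole point is that the least common multiple of many integers bounded by $M$ cannot grow as fast as their product, and quantifying this is what produces the $\sqrt{n\log n}$ in the exponent.

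First I would record the naive estimate: since $\mathrm{lcm}(m_1,\dots,m_L)$ divides (hence is at most) the product $m_1\cdots m_L\le M^L$, we get $\log p_j\le L\log M$. This is good only when $L$ is small relative to $M$. Second I would record an estimate that does not grow with $L$ at all: as every $m_i$ lies in $\{1,\dots,M\}$, each $m_i$ divides $\mathrm{lcm}(1,2,\dots,M)$, so $p_j$ divides $\mathrm{lcm}(1,\dots,M)$. By the elementary Chebyshev bound, $\mathrm{lcm}(1,\dots,M)=e^{\psi(M)}$ with $\psi$ the second Chebyshev function and $\psi(M)\le c_0 M$ for an absolute constant $c_0$; hence $\log p_j\le c_0 M$.

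Finally I would combine the two. Applying $\min(a,b)\le\sqrt{ab}$ with $a=L\log M$ and $b=c_0 M$, and using $\log M\le\log(LM)=\log n$ (valid since $L\ge 1$), yields
\[
\log p_j \;\le\; \min\!\big(L\log M,\, c_0 M\big) \;\le\; \sqrt{c_0\, L M \log M} \;\le\; \sqrt{c_0}\,\sqrt{n\log n}.
\]
Thus $\log p_j=\bigo(\sqrt{n\log n})$, i.e.\ $p_j\le e^{\sqrt{c_0}\sqrt{n\log n}}$, which is the asserted bound $p_j=\bigo\big(e^{\sqrt{n\log n}}\big)$.

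The two individual estimates are routine; the only genuine ingredient is the Chebyshev bound $\psi(M)=\bigo(M)$ controlling $\mathrm{lcm}(1,\dots,M)$, which is exactly what keeps the least common multiple of many bounded integers from blowing up like their product. The one step deserving care is making the argument uniform across regimes: the product bound $M^L$ dominates for small $L$ and the Chebyshev bound dominates for large $L$, and rather than splitting into cases I would let the elementary observation $\min(a,b)\le\sqrt{ab}$ interpolate between them automatically, which is what delivers the clean $\sqrt{n\log n}$ without any threshold analysis.
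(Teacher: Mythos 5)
Your proof is correct and takes a genuinely different route from the paper's. The paper notes that $m_1+\cdots+m_{|I|}\leq n$ and hence $p_j\leq F(n)$, where $F(n)=\max\{\mathrm{lcm}(o_1,\dots,o_l):o_1+\cdots+o_l=n\}$ is Landau's function, and then cites the known solution of Landau's problem (Szalay) for the bound $F(n)=\bigo\big(e^{\sqrt{n\log n}}\big)$. You never pass through Landau's function: you interpolate between the product bound $p_j\leq M^L$ and the divisibility bound $p_j\mid\mathrm{lcm}(1,\dots,M)=e^{\psi(M)}$ with Chebyshev's $\psi(M)=\bigo(M)$, merging them by $\min(a,b)\leq\sqrt{ab}$. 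Your route buys self-containedness (Chebyshev's estimate is elementary, whereas the sharp Landau asymptotics rest on prime number theory), and your intermediate bound $\log p_j\leq\sqrt{c_0\,n\log M}$ is actually sharper than the stated one in the regime where $||E||$ is small compared with $|I|$. The price is the constant in the exponent: since any admissible $c_0$ in $\psi(M)\leq c_0M$ is at least $1$, what you literally establish is $p_j\leq e^{\sqrt{c_0}\sqrt{n\log n}}$, i.e.\ $p_j=e^{\bigo(\sqrt{n\log n})}$, which is not the same as $\bigo\big(e^{\sqrt{n\log n}}\big)$ with constant $1$ in the exponent. This is not a gap relative to the paper's own standard, however: Landau's function satisfies $\log F(n)\sim\sqrt{n\log n}$ with a positive second-order correction, so $F(n)$ itself is not literally $\bigo\big(e^{\sqrt{n\log n}}\big)$, and the proposition must in any case be read as asserting $\log p_j=\bigo\big(\sqrt{n\log n}\big)$; under that reading both arguments are complete, and yours has the merit of making the dependence on $|I|$ versus $||E||$ explicit.
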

\begin{proof}
Let   $p_j = lcm (m_1, \ldots, m_{|I|})$, where
$m_i$ are in \defref{pj}. So,  $m_1+\ldots +m_{|I|}~\leq~n$. 
Let 
$F(n)=\max\{lcm(o_1, \ldots, o_l):o_1 +\cdots+o_l = n, l \in \mathbb{N}\}$.
Then,  
\[
p_j = lcm (m_1, \ldots, m_{|I|})\leq F(n).
\]
The problem of finding a good approximation for $F(n)$ is known as Landau's
problem. It is well studied and, as shown in \cite{SZALAY}, 
$F(n)= \bigo \Big(e^{\sqrt{n.log(n)}}\Big)$.

\end{proof}

\begin{prop}\label{prop:complexity-state}
The state complexity of the automaton constructed by our algorithm, taking a semi-simple expression $E$, is at most $\bigo \Big(e^{k\sqrt{n.log(n)}}\Big)$ with $n$ the size of $E$.
\end{prop}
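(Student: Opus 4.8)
The plan is to count states by exploiting the coordinate-wise structure that our construction enforces. By \thmref{resimple}, the expression $C$ feeding \thmref{automata} is a \emph{consistent} resimple expression: every atom $S_h = d_h + B^\oplus$ shares the single primary basis $B = \{p_j\,\mathbf{e}_j : j \in J\}$, so the automaton for $C$ is, coordinate by coordinate, a Boolean combination of the per-coordinate automata described in \propref{resimple automaton}. I would therefore bound the final state count as a product $\prod_{j=1}^k N_j$, where $N_j$ is the number of states used in coordinate $j$, and then show each $N_j = \bigo\big(e^{\sqrt{n\log n}}\big)$.

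The crucial point---and the reason the Boolean blowup into the $2^m$ equivalence classes of \thmref{resimple} does not propagate to the automaton size---is \propref{inter}: because all atoms are consistent, each intersection automaton keeps the per-coordinate count bounded by the maximum of its two inputs, complementation only swaps final states (so it preserves the transition structure), and the outer union, realised through De Morgan as complement-of-intersection-of-complements, inherits the same bound; completing each coordinate costs at most one sink state. Hence $N_j$ never exceeds the per-coordinate state count of a \emph{single} atom. By \propref{cotaautomata} that count in coordinate $j$ is $c_j + p_j$, where $p_j$ is the period of \defref{pj} and $c_j \le \tau_j := \max_h (d_h)_j$ is the longest tail (pre-period) appearing among the offsets.

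Next I would bound the two summands. The period satisfies $p_j = \bigo\big(e^{\sqrt{n\log n}}\big)$ by the preceding proposition (via Landau's function), with $n = |I|\,\|E\|$. For the pre-period $\tau_j$, I would note that the offsets $d_h$ are exactly the exponent vectors of the monomials of the numerator $P$ in $\underline{E} = P/Q$, so $\tau_j \le \deg_{x_j} P$; tracing \propref{serie} on the consistent expression and the simplification of \lemref{polyrec}, this degree is governed by the offsets and bases of $E$ rather than by the periods, and so stays polynomial in $n$. Since any polynomial in $n$ is $o\big(e^{\sqrt{n\log n}}\big)$, we get $N_j \le \tau_j + p_j + 1 = \bigo\big(e^{\sqrt{n\log n}}\big)$, and taking the product over the $k$ coordinates yields $\prod_{j=1}^k N_j = \bigo\big(e^{k\sqrt{n\log n}}\big)$, as claimed.

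The main obstacle I anticipate is precisely the pre-period estimate $\tau_j = \bigo(\mathrm{poly}(n))$: one must check that passing to the consistent expression of \lemref{consistent}, clearing the common denominator in \propref{serie}, and cancelling the single-variable factors in \lemref{polyrec} inflate the degree of $P$ only additively in the period data, not multiplicatively---equivalently, that the tail lengths of the ultimately periodic coordinate sets remain sub-exponential. Once $\tau_j$ is confirmed to be dominated by the exponential period term $p_j$, the coordinate-wise product is immediate and the bound follows.
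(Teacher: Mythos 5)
Your skeleton is the same as the paper's proof: decompose the state count as a product over the $k$ coordinates; use consistency together with \propref{inter} so that the Boolean combinations produced by \thmref{resimple} never push the per-coordinate count above that of a single atom; bound a single atom's per-coordinate count by (offset) $+$ (period) via \propref{cotaautomata}; and bound the period $p_j$ by Landau's function, $\bigo\big(e^{\sqrt{n\log n}}\big)$. Up to that point the two arguments coincide.

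The step you deferred, however, is stated in a form that is false. The pre-period is \emph{not} polynomial in $n$, and $\deg_{x_j}P$ is \emph{not} governed by the offsets and bases of $E$ alone: passing to the consistent expression of \lemref{consistent} replaces each $j$-period $m_i$ by $p_j=\mathrm{lcm}(m_1,\dots,m_{|I|})$, and on the series side this rewrites $x^{c}/(1-x^{m_i})$ as $x^{c}(1+x^{m_i}+\cdots+x^{p_j-m_i})/(1-x^{p_j})$, so the monomials of the numerator $P$ --- which are exactly the offsets $d_h$ of your atoms --- reach exponents of order $p_j$ itself. Already for $k=1$ and a union of progressions with periods $2,3,5,7,\dots$ the reduced numerator over $1-x^{p}$ (with $p$ the lcm) has degree close to $p=\bigo\big(e^{\sqrt{n\log n}}\big)$, not polynomial in $n$. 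Your final bound survives only because the inflation is \emph{additive} in $p_j$: the paper proves exactly $\max_i|\gamma'_i|_j\le\max_i|\gamma_i|_j+(p_j-1)$, whence each coordinate contributes $\bigo\big(n+e^{\sqrt{n\log n}}\big)$ states and the product over coordinates is still $\bigo\big(e^{k\sqrt{n\log n}}\big)$. So the repair is to discard the polynomiality claim and keep your own fallback statement --- the pre-period is dominated by $\gamma_{max}+\bigo(p_j)$ --- which is precisely the paper's additive estimate; with that substitution your argument becomes the paper's proof.
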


\begin{proof}

When introducing resimple expressions and their automata, we observed that the maximum number of states per coordinate does not increase when performing boolean operations between automata derived from consistent resimple expressions (\propref{inter}). Therefore, starting from a consistent semi-simple expression
$
E' = \bigcup\limits_{i \in I'}\gamma'_i + B_i^\oplus
$
and using \propref{cotaautomata}, the number of states per coordinate can be bounded by
\[
l_j=\max_{i \in I'}\{|\gamma'_i|_j\}+p_j-1.
\]
Then, the state complexity of an automaton derived from a semi-simple consistent expression $E'$ is at most
\[
\prod_{j=1}^k l_j=\prod_{j=1}^k(\max_{i \in I}\{|\gamma'_i|_j\}+p_j-1)= \bigo\big ((\gamma_{max}+p_{max})^k\big),
\]
where
$p_{max} = \max _{j \in \{1,\ldots,k\}} p_j = \bigo \Big(e^{\sqrt{n.log(n)}} \Big)$,
$\gamma_{max}=\max_{i \in I }|\gamma_i| = \bigo(n)$.
Clearly $\gamma_{max} \leq ||E'||$, then
$\gamma_{max}$ can be bounded by $n$.

To transform the semi-simple expression $E$ into a consistent $E '$ we need to change each of the $j$-primary bases by $p_j\; \mathbf{e}_j$.
In the worst case,
for each coordinate, we must add
to some term $c$
up to $(p_j-1) \mathbf{e}_j$ .
Then
$
\max_{i \in I}\{|\gamma'_i|_j\} \leq \max_{i \in I}\{|\gamma_i|_j\}+(p_j-1).
$
This does not alter the total state complexity, \begin{align*}
\prod_{j=1}^k(\max_{i \in I}\{|\gamma_i|_j\}+2(p_j-1 ))= \bigo((\gamma_{max}+p_{max})^k)
=\bigo\bigg(\Big(n+e^{\sqrt{n.log(n)}} \Big)^k\bigg).
\end{align*}
Thus, the total state complexity is  $\bigo\bigg(\Big(e^{\sqrt{n.log(n)}}\Big)^k\bigg).$
\end{proof}

\begin{rem} If we choose to reduce the polynomials as much as possible, we  reach the smallest possible $p_j$, so in general we obtain an automaton as small as possible in each of its coordinates. However,
as we have already seen in \ejref{S},
even if the algorithm starts
from irreducible $P/Q$,
the resulting automaton is not necessarily the minimum automaton.
\end{rem}

\begin{rem}
In~\cite{HOFFMANN} Hoffmann 
 proves for the case of group languages a state complexity of $\bigo (n^k e^{k\sqrt{n.log(n)}})$, with $n$ being the number of states of the permutation automaton.
\end{rem}

\subsection{Time complexity}

We call elementary operations any arithmetic operation on natural, rational or real numbers.

\begin{prop}\label{prop:complexity-time}
Our algorithm has a time complexity of $\bigo(m^2 2^m + m^{7k})$ elementary operations in the worst case, where $m = \bigo\Big(e^{k \sqrt{n.log(n)}}\Big)$, $n$ is the size of the semi-simple expression $E$ and $k$ is the size of the alphabet.
\end{prop}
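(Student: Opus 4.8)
The plan is to tally the cost of each of the six steps of the algorithm separately, in terms of the parameters $n$ (the size of the input semi-simple expression) and $k$ (the alphabet size), and then add them up, identifying the two dominant contributions that give the stated bound $\bigo(m^2 2^m + m^{7k})$. Throughout I would keep in mind that $m = \bigo(e^{k\sqrt{n\log n}})$ is exactly the state-complexity bound from Proposition~\ref{prop:complexity-state}, so that $m$ measures the size of the objects we manipulate in the later steps; the goal is to show that the two genuinely expensive steps are (a) enumerating the $2^m$ equivalence classes in the resimple-expression construction of \thmref{resimple}, and (b) performing the boolean automata operations on objects with up to $m$ states per coordinate.

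First I would bound the symbolic preprocessing, Steps~1 through~4. Applying \propref{desambiguar} to disambiguate $E$ into a semi-simple consistent $E'$ costs a polynomial amount of work in the output size, and the resulting parameters $|I'|$, $||\gamma'_i||$, $||B'_i||$ are all bounded by powers of $||E||$ with exponents $\bigo(k^6)$; translating these into the characteristic series via \propref{serie} yields polynomials $P', Q'$ whose number of monomials and degree are governed by the same bounds. The polynomial simplification in Step~3 (computing the irreducible $P/Q$, i.e. gcd computations in $\mathbb{Z}[x_1,\dots,x_k]$) and the recognizability test in Step~4 are polynomial in the number of monomials. I would argue that all of this is absorbed into the $m^{7k}$ term, since each of these quantities is at most a fixed power of $m$; the precise exponent bookkeeping is routine once one observes that $p_{\max} = \bigo(e^{\sqrt{n\log n}})$ and $m = p_{\max}^{\,\bigo(k)}$.

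The two dominant steps are Steps~5 and~6. For Step~5 (\thmref{resimple}), the construction partitions $\mathbb{N}^k$ into the $2^m$ equivalence classes of the relation $\sim$ (here $m$ plays the role of the number of atoms $S_h$), and for each good class one forms an intersection $C_{\s} = T_1 \cap \cdots \cap T_m$ of $m$ atomic resimple pieces; evaluating which classes are good and assembling the expression costs $\bigo(m^2 2^m)$ elementary operations, which is the first summand. For Step~6 (\thmref{automata}), one builds the automaton by intersection and complement automata on the consistent atoms; by \propref{inter} the per-coordinate state count stays bounded, so each coordinate automaton has $\bigo(p_{\max})$ states and the product automaton has $\bigo(m)$ states, but performing all the boolean operations across $k$ coordinates and over the $\bigo(2^m)$ resimple building blocks yields a cost dominated by $m^{7k}$ after accounting for the intersection-automaton constructions (Propositions~\ref{prop:intersection_is_regular}–\ref{prop:intersection_is_deterministic}), each of which is polynomial in the product of the sizes of its two inputs.

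The hard part will be pinning down the exponent $7k$ in the second summand honestly rather than by hand-waving. The danger is double-counting: one must be careful that the $2^m$ factor from Step~5 and the $m^{\bigo(k)}$ factor from the automata operations in Step~6 do not multiply together, and instead appear as two separate additive terms corresponding to the two distinct phases of the algorithm. I would therefore be explicit that Step~5 produces a resimple expression of \emph{symbolic} size $\bigo(m 2^m)$ but that the automaton built in Step~6 has only $\bigo(m)$ states (its size is controlled by the state complexity, not by the length of the expression), so the two costs are incurred sequentially and add rather than multiply. The most delicate estimate is verifying that each boolean operation on automata, iterated $\bigo(2^m)$ times over coordinate-wise objects of size $\bigo(p_{\max})$, collapses to $\bigo(m^{7k})$ once one substitutes $p_{\max}^k = \bigo(m)$ and tracks the overhead of minimisation (Moore's algorithm) on the final $\bigo(m)$-state automaton; this is where I would expect the bookkeeping to be most error-prone.
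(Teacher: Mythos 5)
There is a genuine gap: you have misattributed the $m^{7k}$ term. In the paper's accounting, $m^{7k}$ is the cost of the \emph{multivariate polynomial factorization} in Step~3 of the algorithm: to cancel the factors involving more than one variable in $Q'$, one factors $P'$ and $Q'$ outright, invoking a known factorization complexity bound of $\bigo(m^{7k})$, which applies because the degree in each variable is $\bigo(\gamma_{max}+p_{max})$ and the coefficients of a characteristic series are $\pm 1$. Your treatment of Step~3 as ``gcd computations\ldots polynomial in the number of monomials'' with ``routine exponent bookkeeping'' skips exactly the one place where a nontrivial external result must be cited; without it you have no source for the exponent $7k$ at all.

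Worse, the place where you do put $m^{7k}$ --- Step~6, the boolean automata operations --- cannot carry it. Those operations are iterated over the $\bigo(2^m)$ equivalence classes (you say this yourself), and $2^m$ operations cannot ``collapse to $\bigo(m^{7k})$'': an exponential function of $m$ is not $\bigo$ of any fixed polynomial in $m$, since $m\to\infty$ with $n$. In the paper, Steps~5 and~6 are charged together: after factoring and reducing there are at most $m$ atoms $S_h$, hence at most $2^m$ classes, hence $\bigo(m2^m)$ boolean operations on automata, each automaton having $\bigo(m)$ states so each operation costs $\bigo(m)$ (by \propref{inter} consistency keeps the per-coordinate sizes from growing), for a total of $\bigo(m^2 2^m)$; this single term covers all of the automata work, including what you assign to Step~6. (Your appeal to Moore minimisation is also extraneous: minimisation is not part of the algorithm, and the paper notes the output automaton need not be minimal.) The correct decomposition is therefore: factorization $=\bigo(m^{7k})$, resimple expression plus automaton construction $=\bigo(m^2 2^m)$, and a lower-order $\bigo(m^4)$ for taking the common denominator when forming the characteristic series --- summing to the stated $\bigo(m^2 2^m + m^{7k})$.
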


\begin{proof}
To transform the semi-simple expression $E$ into a consistent one $E'$ we need to change all the $j$-primary bases to $p_j \mathbf{e}_j$.
For that we need to consider at most $np_1\cdots\ p_k$ simple terms that we  use to build the expression $E'$ according to the \lemref{consistent}.

Then $E'$  consists of $t=\bigo\Big(|I|\prod_{j=1}^kp_j\Big)$ simple terms. Since $|I|$
is a disjoint union and we assume a fixed alphabet, we can limit $\bigo ((\gamma_{max}+p_{max})^k)$, otherwise we would have repeated terms. Also, $\bigo\big(\prod_{j=1}^kp_j \big) = \bigo\big({p_{max}}^k\big)$. Then,
\[t=\bigo\Big(((\gamma_{max}+p_{max})p_{ max})^k\Big)=\bigo(m^2).
\]
When converted to a series, each term of the consistent semi-simple expression generates a fraction. By taking a common denominator, in the worst case, we  have to multiply each term by the denominator of the remaining ones.
Note that
each denominator
has at most
$k$ factors of the form $(1-\underline{d} )$, since the basis from which they come
has at most $k$ elements.

If we distribute each of the denominators,
the polynomial
will have at most $2^k$ terms. So, when taking a common denominator, we have to multiply each numerator by the $(t-1)$ remaining denominators.
Therefore, there remain
$2^k(t-1)t$ terms in the numerator. Note that $\bigo\Big(2^kt^2\Big)=\bigo( t^2)$.

Now we need to simplify the denominator factors of more than one variable.
{We do it by reducing }$P'/Q'$. 
For this we factor $P'$ and $Q'$ with cost $\bigo(m^{7k}), see $\cite{L3}.
This upper bound is because we can bound the degree of each variable of the polynomial by $\bigo (\gamma_{max} + p_{max})$ and the coefficients, as they are from a characteristic series, are $1$ or $-1$.

In the recognizable case, after factoring and reducing, there cannot be more than $m$ terms. This is because all the  resimple expressions have the same infinite part and 
(by the same argument we used to  to bound $|I|$),
there are most  $(\gamma_{max}+p_{max})^k$  termns.
We  have at most $m$ automata to intersect. As we already mentioned, the maximum number of states of the intersection automaton is $\bigo(m)$. 
Notice that the number of subclasses defined in the construction of the resimple expression is at most $\bigo(2^m)$. Thus, we have $\bigo(m2^m)$ boolean operations, each with linear cost in $m$.
Then, the total cost of computing the final automaton from the polynomials is $\bigo(m^2 2 ^m)$.
We obtain a total worst-case time complexity
$\bigo(m^2 2^m + m^{7k}+ m^4)$ which is
$\bigo(m^2 2^m + m ^{7k})$.
\end{proof}

\section{Acknowledgements}
We thank Jacques Sakarovitch for insightful discussions at an early stage of this work.
This research was supported by a grant from the University of Buenos Aires.

\bibliographystyle{plain}
\bibliography{submittedJCSS20Abril2025}
\bigskip

\begin{tabular}{ll}
Verónica Becher&
{\tt vbecher@dc.uba.ar}
\\
Simón Lew Deveali
&{\tt sdeveali@dc.uba.ar}
\\
Ignacio Mollo Cunningham &{\tt imcgham@gmail.com}
\end{tabular}
\bigskip

\noindent
{\small Departmento de Computación, Facultad de Ciencias Exactas y Naturales,
Universidad de Buenos Aires, e 
Instituto de Ciencias de la Computación(ICC)  de la Universidad de Buenos Aires y CONICET.
\\
Pabellón 0, Ciudad Universitaria, 
(1428) Buenos Aires, Argentina
}

\end{document}